\newcommand{\algrule}[1][.2pt]{\par\vskip.3\baselineskip\hrule height #1\par\vskip.3\baselineskip}
\newtheorem{lemma}{Lemma}
\newtheorem{defn}{Definition}
\newtheorem{thm}{Theorem}
\newtheorem{Prop}{Proposition}
\newtheorem{remark}{Remark}
\newcolumntype{P}[1]{>{\centering\arraybackslash}p{#1}}
\newcolumntype{M}[1]{>{\centering\arraybackslash}m{#1}}
\renewcommand{\Re}{\operatorname{Re}}
\renewcommand{\Im}{\operatorname{Im}}
\def\BibTeX{{\rm B\kern-.05em{\sc i\kern-.025em b}\kern-.08em T\kern-.1667em\lower.7ex\hbox{E}\kern-.125em}}
\begin{document}

{\Large \textbf{Notice:} This work has been submitted to the IEEE for possible publication. Copyright may be transferred without notice, after which this version may no longer be accessible.}
\clearpage

\title{Energy-Efficient Power Control and Beamforming for Reconfigurable Intelligent Surface-Aided Uplink IoT Networks\\}
\author{Jiao Wu, Seungnyun Kim, and Byonghyo Shim\\
Institute of New Media and Communications, Department of Electrical and Computer Engineering, Seoul National University, Korea\\
Email: \{jiaowu, snkim, bshim\}@islab.snu.ac.kr
\thanks{This work was supported by the National Research Foundation of Korea (NRF) grant funded by the Korea government (MSIT) (2020R1A2C2102198) and Samsung Research Funding \& Incubation Center for Future Technology of Samsung Electronics under Project Number (SRFC-IT1901-17).

Parts of this paper was presented at the WCNC21 \cite{wu2021power}.}}

\maketitle

\begin{abstract}
Recently, reconfigurable intelligent surface (RIS), a planar metasurface consisting of a large number of low-cost reflecting elements, has received much attention due to its ability to improve both the spectrum and energy efficiencies by reconfiguring the wireless propagation environment.
In this paper, we propose a RIS phase shift and BS beamforming optimization technique that minimizes the uplink transmit power of a RIS-aided IoT network.
Key idea of the proposed scheme, referred to as \emph{Riemannian conjugate gradient-based joint optimization (RCG-JO)}, is to jointly optimize the RIS phase shifts and the BS beamforming vectors using the Riemannian conjugate gradient technique.
By exploiting the product Riemannian manifold structure of the sets of unit-modulus phase shifts and unit-norm beamforming vectors, we convert the nonconvex uplink power minimization problem into the unconstrained problem and then find out the optimal solution on the product Riemannian manifold.
From the performance analysis and numerical evaluations, we demonstrate that the proposed RCG-JO technique achieves $94\%$ reduction of the uplink transmit power over the conventional scheme without RIS.
\end{abstract}
\newpage
\IEEEpeerreviewmaketitle

\section{Introduction}
As a paradigm to embrace the connection of massive number of devices, such as sensors, wearables, health monitors, and smart appliances, internet of things (IoT) has received much attention recently \cite{stoyanova2020survey}.
Since most of IoT devices are battery-limited, saving the device power is crucial for the dissemination of IoT networks. 
Recently, reconfigurable intelligent surface (RIS) has been emerging as a potential solution to enhance the sustainability of the IoT network\cite{WuTowards}.
In a nutshell, RIS is a planar metasurface consisting of a large number of low-cost passive reflecting elements, each of which can independently adjust the phase shift on the incident signal\cite{cao2021reconfigurable,cao2021ai}.
Due to the capability to modify the wireless channel by controlling the phase shift of each reflecting element, RIS offers various benefits;
When the direct link between the IoT device and the base station (BS) is blocked by obstacles, RIS can provide a virtual line-of-sight (LoS) link between them, ensuring the reliable link connection without requiring heavy transmit modules \cite{gong2020toward}.
Due to its simple structure, lightweight, and conformal geometry, RIS can be easily deployed in the desired location.
Also, with a small change in wireless standard, RIS can be integrated into existing communication systems.

Over the years, various efforts have been made to improve the 
energy efficiency of IoT networks\cite{WuIntelligent2019,jia2020reconfigurable,wu2021energy,chu2021novel}.
In \cite{WuIntelligent2019}, a joint beamforming technique to minimize the downlink power consumption of wireless network has been proposed.
In \cite{jia2020reconfigurable}, a joint power control and beamforming scheme for the RIS-aided device-to-device (D2D) network has been proposed.
In \cite{wu2021energy}, the energy-efficient downlink power control and phase shift designs for the Terahertz and MISO systems have been presented.
Also, achievable sum throughput of a RIS-aided wireless powered sensor network (WPSN) has been investigated in \cite{chu2021novel}.
While most of existing works focused on the downlink power control of RIS-aided systems, not much work has been made for the power minimization on the uplink side.
Since IoT devices are battery-limited, it is of importance to develop an energy-efficient uplink power control mechanism based on the RIS technique.

A major problem of the RIS-aided uplink power control is that the constraints on the RIS phase shifts and the BS beamforming vectors are nonconvex.
This is because RIS can only change the phase shift of incident signal so that the passive beamforming coefficients of RIS reflecting elements are subject to the unit-modulus constraints\cite{WuTowards}.
This, together with the fact that the active beamforming vectors of BS are subject to the unit-norm constraints\cite{2003beamforming}, makes it very difficult to find out the proper RIS phase shifts and BS beamforming vectors. 

An aim of this paper is to propose an approach that minimizes the uplink transmit power of a RIS-aided IoT network.
Key idea of the proposed scheme, referred to as \emph{Riemannian conjugate gradient-based joint optimization (RCG-JO)}, is to jointly optimize the RIS phase shifts and the BS beamforming vectors using the Riemannian conjugate gradient (RCG) algorithm.
Specifically, by exploiting the smooth product Riemannian manifold structure of the sets of unit-modulus phase shifts and unit-norm beamforming vectors, we convert the uplink power minimization problem to the unconstrained problem on the Riemannian manifold.
Since the optimization over the Riemannian manifold is conceptually analogous to that in the Euclidean space, optimization tools of the Euclidean space, such as the gradient descent method, can be readily used to solve the optimization problem on the Riemannian manifold\cite{LoungIoT}.
In our approach, we employ the RCG method to find out the RIS phase shifts and BS beamforming vectors minimizing the uplink transmit power of a RIS-aided IoT network.

In recent years, there have been some efforts exploiting the manifold optimization technique for the RIS phase shifts control\cite{wu2021power,yu2019miso,pan2020multicell}.
In \cite{wu2021power}, a Riemannian manifold-based alternating optimization technique for the design of RIS phase shifts and BS beamforming vectors has been proposed.
In this scheme, the RIS phase shifts and the BS beamforming vectors are optimized on the Riemannian manifolds in an alternative fashion.
In \cite{yu2019miso,pan2020multicell}, the manifold optimization-based RIS phase shift control schemes have been proposed to maximize the downlink sum rate.

Our work is distinct from the previous studies in the following aspects:
\begin{itemize}
    \item We propose a Riemannian conjugate gradient-based joint optimization (RCG-JO) algorithm to find out the RIS phase shifts and the BS beamforming vectors minimizing the uplink transmit power of a RIS-aided IoT network.
    RCG-JO jointly optimizes the RIS phase shifts and the BS beamforming vectors on the product Riemannian manifold of the sets of unit-modulus RIS phase shifts and unit-norm BS beamforming vectors.
    \item We propose a channel estimation technique for the uplink RIS-aided IoT networks.
    Specifically, we estimate the partial RIS-aided channel information using the pilot signals received at the active reflecting elements, from which we recover the full RIS-aided channel information using the low-rank matrix completion (LRMC) algorithm.
    \item We present the convergence analysis of RCG-JO. 
    Also, from the numerical experiments, we demonstrate that RCG-JO converges to a fixed point within a few number of iterations.  
    \item We provide the empirical simulation results from which we demonstrate that RCG-JO outperforms the conventional power control schemes by a large margin in terms of the uplink transmit power and computational complexity.
    For example, when compared to the conventional power control scheme without RIS, RCG-JO saves around $94\%$ of the uplink transmit power.
    Even when compared to the semidefinite relaxation (SDR)-based power control scheme, RCG-JO saves more than $44\%$ of the uplink transmit power and achieves around $98\%$ reduction in the computational complexity.
\end{itemize}

The rest of the paper is organized as follows. 
In Section II, we present the system model, the channel estimation technique, and the data transmission procedures.
We also formulate the power minimization problem of the RIS-aided uplink IoT network.
In Section III, we present the proposed RCG-JO technique. 
In Section IV, we analyze the convergence and computational complexity of RCG-JO.
In Section V, we provide the simulation results and then conclude the paper in Section VI.

\emph{Notations}: Lower and upper case symbols are used to denote vectors and matrices, respectively.
The superscripts $(\cdot)^{*}$, $(\cdot)^{\text{T}}$, and $(\cdot)^{\text{H}}$ denote the conjugate, transpose, and conjugate transpose, respectively.
The operation $\odot$ denotes the Hadamard product.
$\|\mathbf{x}\|$ and $\|\mathbf{X}\|_{\textup{F}}$ denote the Euclidean norm of a vector $\mathbf{x}$ and the Frobenius norm of a matrix $\mathbf{X}$, respectively.
$|\mathbf{x}|$ represents a vector of element-wise absolute values of $\mathbf{x}$.
$\text{tr}(\mathbf{X})$ is the trace of $\mathbf{X}$.
$\text{diag}(\mathbf{x})$ and $\text{ddiag}(\mathbf{X})$ form diagonal matrices whose diagonal elements are $\mathbf{x}$ and diagonal elements of $\mathbf{X}$, respectively.
In addition, $\text{blkdiag}(\mathbf{X}_{1},\mathbf{X}_{2})$ denotes a block-diagonal matrix with $\mathbf{X}_{1}$ and $\mathbf{X}_{2}$ on the block-diagonal.

\section{RIS-Aided Uplink IoT System Model} 
In this section, we present the system model, the channel estimation, and the data transmission protocols of RIS-aided uplink IoT networks.
We then formulate the power minimization problem.
\subsection{RIS-Aided Uplink IoT Network}
We consider a single-input multi-output (SIMO) uplink IoT network where $K$ devices with a single antenna transmit signals to the BS equipped with $M$ antennas (see Fig. 1).
In this network, a RIS consisting of a planar array of $N = N_{x}\times N_{y}$ reflecting elements is deployed to assist the uplink transmission. 
For example, each RIS reflecting element can adjust the phase of the incident signal independently using positive-intrinsic-negative (PIN)
diodes\cite{WuTowards}.
The phase shifts of RIS reflecting elements are configured through the dedicated control link.
\begin{figure}[t]
\centering
\includegraphics[width=0.55\columnwidth]{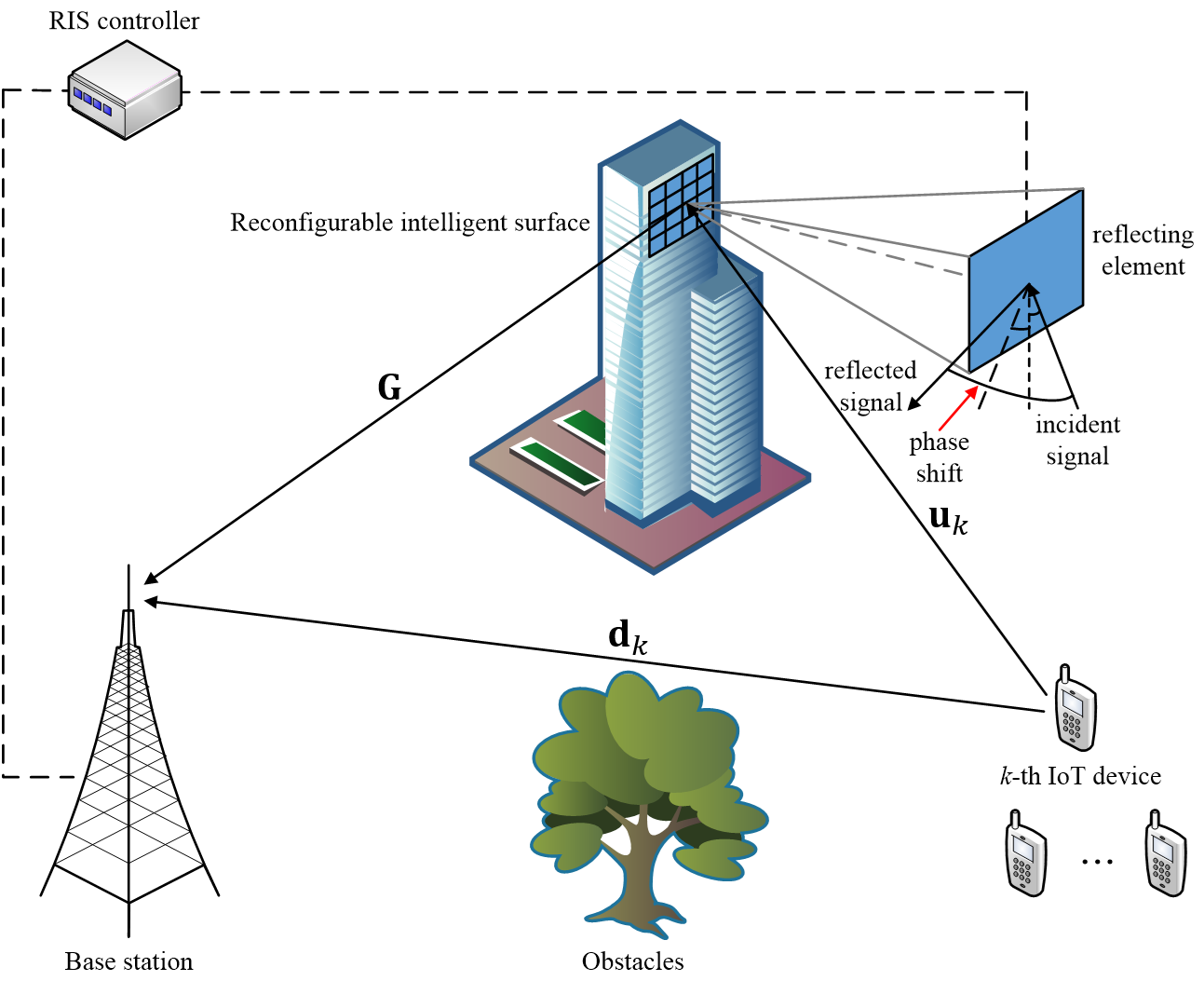}
\caption{Illustration of the RIS-aided uplink IoT network.} \label{fig1}
\end{figure}
In the setup, the effective uplink channel between the $k$-th IoT device and BS is given by 
\begin{equation}
\mathbf{h}_{k} 
= \mathbf{d}_{k} + \mathbf{G}\text{diag}({\boldsymbol{\theta}})\mathbf{u}_{k}
= \mathbf{d}_{k} + \mathbf{G}\text{diag}(\mathbf{u}_{k})\boldsymbol{\theta}
= \mathbf{d}_{k} + \mathbf{G}\mathbf{H}_{k}\boldsymbol{\theta}, \label{1}
\end{equation}
where $\mathbf{d}_{k} \in \mathbb{C}^{M \times 1}$ is the direct channel from the $k$-th IoT device to BS, $\mathbf{u}_{k} \in \mathbb{C}^{N \times 1}$ is the channel from the $k$-th IoT device to the RIS, and $\mathbf{G} \in \mathbb{C}^{M \times N}$ is the channel from the RIS to BS (see Fig. 1).
Also, $\mathbf{H}_{k} = \text{diag}(\mathbf{u}_{k})$ and $\boldsymbol{\theta}=[\mu_{1}\theta_{1},\cdots,\mu_{N}\theta_{N}]^{\text{T}}$ is the phase shift vector of RIS.
In addition, $\mu_{n} \in [0,1]$ is the reflection amplitude coefficient\footnote{In this paper, we assume the ideal phase shift model where the reflection amplitude and phase shift are independent.
In our system, the RIS controller coordinates the switching between two working modes, i.e., receiving mode ($\mu_{n}=0$) for environment sensing (e.g., channel estimation) and reflecting mode ($\mu_{n} = 1$) for scattering the incident signals (e.g., data transmission).} and $\theta_{n}=e^{j\phi_{n}}$ is the passive beamforming coefficient where $\phi_{n} \in [0,2\pi)$ is the phase shift\footnote{To characterize the performance limits of RIS, we assume phase shifts vary continuously in $[0,2\pi)$.
Note that the proposed scheme can be readily extended to practical systems with finite level of phase shifts via discrete phase quantization\cite{QQWtutorial}.}.

In this paper, we assume the Rician fading channel models for $\mathbf{G}$, $\mathbf{u}_{k}$, and $\mathbf{d}_{k}$\cite{he2021reconfigurable,zhang2020capacity}.
First, the channel matrix from the RIS to BS $\mathbf{G}$ is given by
\begin{equation}
    \mathbf{G} = \sqrt{\beta_{\text{G}}} \Big( \sqrt{\frac{\kappa_{\text{G}}}{\kappa_{\text{G}} + 1}}\mathbf{G}^{\textup{LoS}} + \sqrt{\frac{1}{\kappa_{\text{G}} + 1}}\mathbf{G}^{\textup{NLoS}} \Big), \label{2}
\end{equation}
where $\mathbf{G}^{\textup{LoS}}=\mathbf{a}_{\text{BS}}(\vartheta_{\mathbf{G}})\, \mathbf{a}_{\text{RIS}}^{\text{H}}(\psi_{\mathbf{G}},\varphi_{\mathbf{G}})$ is the LoS component with $\vartheta_{\mathbf{G}}$, $\psi_{\mathbf{G}}$, and $\varphi_{\mathbf{G}}$ being the angle of arrival (AoA) at the BS, the azimuth and elevation of the angles of departure (AoDs) at the RIS, respectively, $\mathbf{G}^{\textup{NLoS}}$ is the NLoS component generated from complex Gaussian distribution, $\beta_{\text{G}}$ is the path loss between RIS and BS, and $\kappa_{\text{G}} (\geq 0)$ is the Rician factor.
Here, $\mathbf{a}_{\text{BS}}(\vartheta_{\mathbf{G}}) = [1, \cdots,e^{-j\pi(M-1)\sin\vartheta_{\mathbf{G}}}]^{\text{T}} \in \mathbb{C}^{M \times 1}$ is the BS array steering vector and $\mathbf{a}_{\text{RIS}}(\psi_{\mathbf{G}},\varphi_{\mathbf{G}}) = \mathbf{a}_{\text{RIS},x}(\psi_{\mathbf{G}},\varphi_{\mathbf{G}}) \otimes \mathbf{a}_{\text{RIS},y}(\psi_{\mathbf{G}},\varphi_{\mathbf{G}}) \in \mathbb{C}^{N\times 1}$ is the RIS array steering vector where $\mathbf{a}_{\text{RIS},x}(\psi_{\mathbf{G}},\varphi_{\mathbf{G}}) = [1,\cdots,e^{-j\pi(N_{x}-1)\cos\psi_{\mathbf{G}}\sin\varphi_{\mathbf{G}}}]^{\text{T}}$ and $\mathbf{a}_{\text{RIS},y}(\psi_{\mathbf{G}},\varphi_{\mathbf{G}}) = [1,\cdots,e^{-j\pi(N_{y}-1)\sin\psi_{\mathbf{G}}\sin\varphi_{\mathbf{G}}}]^{\text{T}}$.

Second, the channel vector from the $k$-th IoT device to the RIS $\mathbf{u}_{k}$ is expressed as
\begin{equation}
    \mathbf{u}_{k} = \sqrt{\beta_{k}} \Big( \sqrt{\frac{\kappa_{k}}{\kappa_{k} + 1}}\mathbf{u}_{k}^{\textup{LoS}} + \sqrt{\frac{1}{\kappa_{k} + 1}}\mathbf{u}_{k}^{\textup{NLoS}} \Big), \label{3}
\end{equation}
where $\mathbf{u}_{k}^{\textup{LoS}} = \mathbf{a}_{\text{RIS}}(\psi_{k},\varphi_{k})$ is the LoS component with $\psi_{k}$ and $\varphi_{k}$ being the azimuth and elevation of the AoAs at the RIS, respectively, $\mathbf{u}_{k}^{\textup{NLoS}}\sim \mathcal{CN}(\mathbf{0},\mathbf{I}_{N})$ is the NLoS component, $\beta_{k}$ is the path loss, and $\kappa_{k} (\geq 0)$ is the Rician factor.
The channel vector from the $k$-th IoT device to BS $\mathbf{d}_{k}$ is modeled similarly with $\mathbf{u}_{k}$.
In general, RIS is deployed at the position where the LoS links with the BS and IoT devices are guaranteed, so $\mathbf{G}$ and $\mathbf{u}_{k}$ are dominated by the LoS paths.

\subsection{Uplink Channel Estimation}
In this subsection, we propose the uplink channel estimation process in time-division duplexing (TDD)-based RIS-aided IoT networks\footnote{In the 5G systems and beyond, TDD will be a competitive duplexing option due to the improved spectrum efficiency, better adaptation quality to asymmetric uplink/downlink traffics, low transceiver cost, and better support of the massive MIMO\cite{WJKIM2018}.}.
To enjoy the full potential of RIS, the BS needs to acquire not only the conventional direct channel $\mathbf{d}_{k}$ but also the channels reflected by the RIS (i.e., $\mathbf{G}$ and $\mathbf{u}_{k}$). 
When compared to the direct channel estimation, the estimation of RIS-aided channels is far more difficult since RIS contains a large number of reflecting elements.
Recently, an RIS architecture exploiting active reflecting elements that can reflect and also receive the signal has gained much attention to address the issue \cite{WuTowards,gong2020toward}.
Using the pilot signals received at the active reflecting elements, the BS can directly estimate the RIS-aided channel components.
However, due to the practical limitations such as the implementation cost of RF chains, hardware complexity, and power consumption, only a few active reflecting elements can be employed.
That is, the BS can acquire only partial information of the RIS-aided channels from the active reflecting elements.

\begin{figure}[!t]
\centering
\includegraphics[width=0.62\columnwidth]{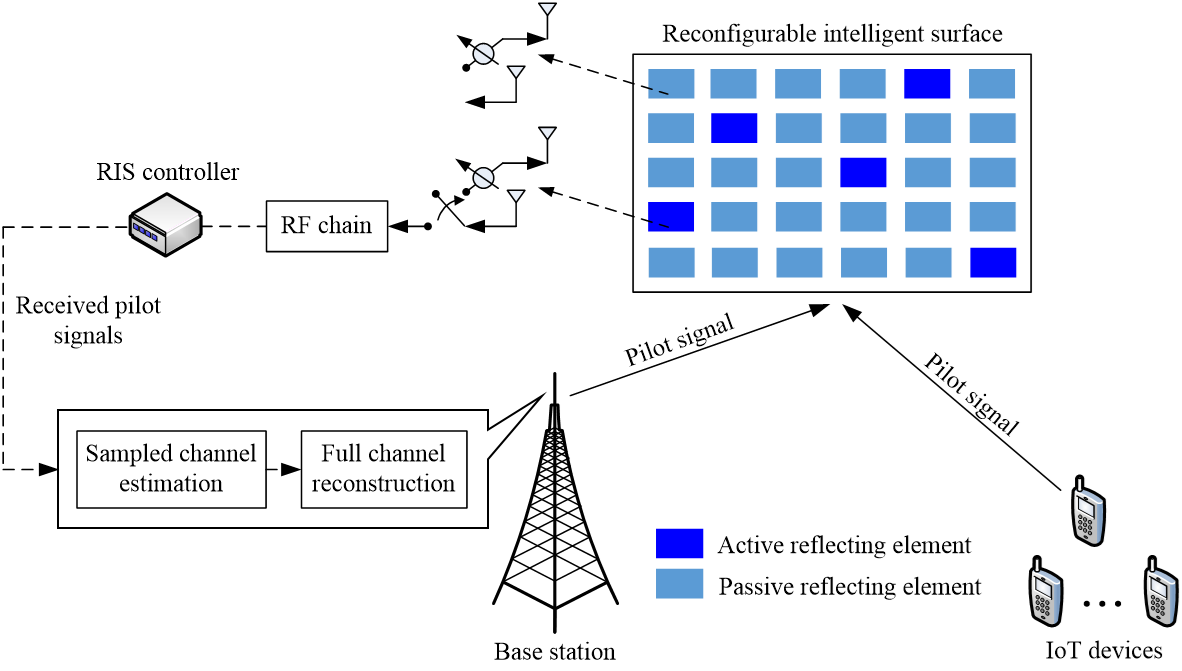}
\caption{Illustration of uplink RIS-aided channel estimation consisting of two major procedures: 1) estimation of the partial channel information observed at the active reflecting elements and 2) reconstruction of the full RIS-aided channel via the LRMC algorithm.}\label{fig2}
\end{figure}

To estimate the whole channel from the partial information, we exploit the property that the RIS-aided channels are dominated by LoS paths.
Since there is only one propagation path, the channel matrix can be readily modeled as a low-rank matrix.  
When the channel matrix has a low-rank property, BS can reconstruct the full channel matrix from the partial observations using the LRMC algorithm \cite{nguyen2019low}.
Once the full RIS-aided channel information is obtained, BS can perform the uplink power allocation and the RIS phase shift control.

\subsubsection{Sampled Channel Estimation}
We assume that the RIS consists of $\bar{N}$ active reflecting elements where the index set\footnote{Note that $\Omega \subseteq \lbrace 1, \cdots, N_{x}\rbrace \times \lbrace 1,\cdots,N_{y}\rbrace$. For example, $\Omega = \lbrace (1,2),(4,3),(5,6)\rbrace$ when $N_{x}=N_{y}=8$ and $\bar{N}=3$.} of active reflecting elements is $\Omega$ (see Fig. 2).
The uplink channel estimation process consists of $(M + K)$ time slots where the $m$-th BS antenna transmits downlink pilot signal to the RIS at $m$-th time slot ($m = 1,\cdots,M$) and then the $k$-th IoT device transmits uplink pilot signal to the RIS at $(M+k)$-th time slot ($k = 1,\cdots,K$). 

Let $x_{m}^{\text{BS}}$ be the downlink pilot signal of $m$-th BS antenna at $m$-th time slot and $x_{k}^{\text{D}}$ be the downlink pilot signal of $k$-th IoT device at $(M+k)$-th time slot.
Then the received signal matrices $\mathbf{Y}_{m}^{\text{BS}}$,$\mathbf{Y}_{k}^{\text{D}}\in\mathbb{C}^{N_{x}\times N_{y}}$ of RIS from the $m$-th BS antenna and the $k$-th IoT device are 
\begin{align}
    \mathbf{Y}_{m}^{\text{BS}}
    &= (x_{m}^{\text{BS}})^{*}\mathbf{P}_{\Omega}(\mathbf{G}_{m}) + \mathbf{N}_{m}^{\text{BS}}, \quad m=1,\cdots, M, \label{4}\\
    \mathbf{Y}_{k}^{\text{D}}
    &= x_{k}^{\text{D}}\mathbf{P}_{\Omega}(\mathbf{U}_{k}) + \mathbf{N}_{k}^{\text{D}}, \quad k=1,\cdots, K, \label{5}
\end{align}
where $\mathbf{G}_{m} \in \mathbb{C}^{N_{x}\times N_{y}}$ is the channel matrix from the $m$-th BS antenna to the RIS such that $\text{vec}(\mathbf{G}_{m})$ is the $m$-th row vector of $\mathbf{G}$, $\mathbf{U}_{k}\in\mathbb{C}^{N_{x}\times N_{y}}$ is the channel matrix from the $k$-th IoT device to the RIS such that $\text{vec}(\mathbf{U}_{k}) = \mathbf{u}_{k}$, and $\mathbf{N}_{m}^{\text{BS}}$ and $\mathbf{N}_{k}^{\text{D}}$ are the additive Gaussian noises.
Also, $\mathbf{P}_{\Omega}(\mathbf{G}_{m})$ and $\mathbf{P}_{\Omega}(\mathbf{U}_{k})$ are the sampled matrices\footnote{$\mathbf{P}_{\Omega}$ is the sampling operator such that $[\mathbf{P}_{\Omega}(\mathbf{A})]_{x,y} = [\mathbf{A}]_{x,y}$ if $(x,y) \in \Omega$ and otherwise zero for a matrix $\mathbf{A}$.} of $\mathbf{G}_{m}$ and $\mathbf{U}_{k}$.
Note that the sampling operator is used since only active reflecting elements can receive the pilot signals. 
Then the BS can easily acquire the sampled channel matrices $\mathbf{P}_{\Omega}(\mathbf{G}_{m})$ and $\mathbf{P}_{\Omega}(\mathbf{U}_{k})$ using the conventional least squares (LS) and minimum mean square error (MMSE) estimation techniques. 

\subsubsection{Full Channel Reconstruction}
After the sampled channel estimation, the BS reconstructs the full channel matrices $\lbrace\mathbf{G}_{m}\rbrace_{m=1}^{M}$ and $\lbrace\mathbf{U}_{k}\rbrace_{k=1}^{K}$ from $\lbrace\mathbf{P}_{\Omega}(\mathbf{G}_{m})\rbrace_{m=1}^{M}$ and $\lbrace\mathbf{P}_{\Omega}(\mathbf{U}_{k})\rbrace_{k=1}^{K}$ via the LRMC algorithm \cite{nguyen2019low}. 

To be specific, $\mathbf{G}_{m}$ can be reconstructed by solving the rank minimization problem:
\begin{subequations} \label{6}
\begin{align}
\min_{\mathbf{X}\in\mathbb{C}^{N_{x}\times N_{y}}} \,
&\text{rank}(\mathbf{X}) \label{6a} \\
\mbox{s.t.} \quad\,\,
&\mathbf{P}_{\Omega}(\mathbf{X}) = \mathbf{P}_{\Omega}(\mathbf{G}_{m}).  \label{6b}
\end{align} 
\end{subequations}
The solution $\mathbf{X}^{*}$ of \eqref{6} is the estimate of $\mathbf{G}_{m}$.
Since the rank minimization problem is NP-hard, this problem is computationally intractable.
To deal with the problem, we replace the non-convex objective function with its convex surrogate.
The nuclear norm $\|\mathbf{X}\|_{*}$, the sum of the singular values of $\mathbf{X}$, has been widely used as a convex surrogate of $\text{rank}(\mathbf{X})$\footnote{It has been shown that the nuclear norm is the convex envelope of rank function on the set $\lbrace \mathbf{X}: \|\mathbf{X}\| \leq 1\rbrace$\cite{nguyen2019low}.}:
\begin{subequations} \label{7}
\begin{align}
\min_{\mathbf{X}\in\mathbb{C}^{N_{x}\times N_{y}}}  \,\,\, 
&\|\mathbf{X}\|_{*}  \label{7a}\\
\mbox{s.t.} \quad\,\,\,\,
&\mathbf{P}_{\Omega}(\mathbf{X})= \mathbf{P}_{\Omega}(\mathbf{G}_{m}). \label{7b}  
\end{align}
\end{subequations}

The nuclear norm minimization problem \label{7} can also be recast as a semidefinite programming (SDP) \cite{nguyen2019low}:
\begin{subequations} \label{8}
\begin{align}
\min_{\mathbf{Z}} \,\,\,
&\text{tr}(\mathbf{Z}) \label{8a} \\
\mbox{s.t.} \,\,\,
&\text{tr}(\mathbf{A}_{x,y}^{\text{H}}\mathbf{Z})  = [\mathbf{G}_{m}]_{x,y}, \quad (x,y)\in\Omega, \label{8b} \\
&\mathbf{Z}\succeq \mathbf{0}, \label{8c}
\end{align}
\end{subequations}
where  $\mathbf{Z} = {\small \begin{bmatrix} \mathbf{Z}_{1} & \mathbf{X}\\ \mathbf{X}^{\text{H}} & \mathbf{Z}_{2} \end{bmatrix}} \in \mathbb{C}^{(N_{x}+N_{y}) \times (N_{x}+N_{y})}$ for the Hermitian matrices $\mathbf{Z}_{1} \in \mathbb{C}^{N_{x} \times N_{x}}$ and $\mathbf{Z}_{2} \in \mathbb{C}^{N_{y} \times N_{y}}$, $\mathbf{A}_{x,y} = \mathbf{e}_{x}\mathbf{e}_{y+N_{x}}^{\text{T}}$ is the linear sampling matrix, and $\mathbf{e}_{x}$ is the $x$-th column vector of $\mathbf{I}_{N_{x}+N_{y}}$.
Since \eqref{8} is a convex problem, the solutions $\mathbf{Z}^{*}$ and $\mathbf{X}^{*}$ of \eqref{8} can be obtained using the convex optimization solvers (e.g., CVX \cite{GrantCVX}).
Finally, $\mathbf{G}_{m}$ is obtained by $\mathbf{G}_{m} = \mathbf{X}^{*}$.

Similarly, $\mathbf{U}_{k}$ can be recovered from $\mathbf{P}_{\Omega}(\mathbf{U}_{k})$ by solving the low-rank matrix completion problem.
Finally, the whole RIS-aided channels $\mathbf{G}$ and $\mathbf{u}_{k}$ are obtained as $\mathbf{G}=[\text{vec}(\mathbf{G}_{1})\cdots\text{vec}(\mathbf{G}_{M})]^{\text{T}}$ and $\mathbf{u}_{k}=\text{vec}(\mathbf{U}_{k})$.

\subsection{Uplink Data Transmission}
After the channel estimation, BS performs the uplink power allocation and the RIS phase shift control using the acquired channel information.
Then the BS sends the uplink transmit power indicator $p_{k}$ and the RIS phase shift vector $\boldsymbol{\theta}$ to the $k$-th IoT device and RIS, respectively.
Finally, each IoT device transmits the data to BS through the uplink channel.

Let ${x}_{k}= \sqrt{{p}_{k}}{s}_{k}$ be the data signal of the $k$-th IoT device where ${s}_{k}$ and ${p}_{k} (\geq 0)$ are the normalized data symbol and the transmit power of the $k$-th IoT device, respectively.
Then, the received signal at BS from the $k$-th IoT device $y_{k}$ is
\begin{align}
y_{k} 
&= \mathbf{w}_{k}^{\text{H}}\bigg(\mathbf{h}_{k}{x}_{k} + \sum_{j\neq k}^{K} \mathbf{h}_{j}{x}_{j} + \mathbf{n}_{k} \bigg) \label{9}\\
&= \sqrt{{p}_{k}}\mathbf{w}_{k}^{\text{H}}(\mathbf{d}_{k}+\mathbf{G}\mathbf{H}_{k}\boldsymbol{\theta}){s}_{k} + \sum_{j\neq k}^{K} \sqrt{{p}_{j}}\mathbf{w}_{k}^{\text{H}}(\mathbf{d}_{j}+\mathbf{G}\mathbf{H}_{j}\boldsymbol{\theta}){s}_{j} + \mathbf{w}_{k}^{\text{H}}\mathbf{n}_{k}, \label{10}  
\end{align}
where $\mathbf{w}_{k} \in \mathbb{C}^{M \times 1}$ is the normalized BS beamforming vector for the $k$-th IoT device, i.e., $\|\mathbf{w}_{k}\|=1$, and $\mathbf{n}_{k} \sim \mathcal{CN}(\mathbf{0},\sigma_{k}^{2}\mathbf{I})$ is the additive Gaussian noise.
In this setting, the uplink achievable rate $R_{k}$ of the $k$-th IoT device is given by
\begin{equation}
R_{k} = \text{log}_{2}\Bigg(1 + \frac {p_{k}|\mathbf{w}_{k}^{\text{H}}(\mathbf{d}_{k}+\mathbf{G}\mathbf{H}_{k}\boldsymbol{\theta})|^{2}} 
{\sum_{j\neq k}^{K}p_{j}|\mathbf{w}_{k}^{\text{H}}(\mathbf{d}_{j}+\mathbf{G}\mathbf{H}_{j}\boldsymbol{\theta})|^{2}+\sigma_{k}^{2}}\Bigg).  \label{11}
\end{equation}

\subsection{Uplink Power Minimization Problem Formulation}
The uplink power minimization problem to optimize the RIS phase shift vector $\boldsymbol{\theta}$, the BS beamforming matrix $\mathbf{W} \!\!=\!\! [\mathbf{w}_{\!1}\!  \cdots\! \mathbf{w}_{\!K}\!]$, and the device power vector $\mathbf{p} \!\!=\!\! [p_{1}\! \cdots\!  p_{\!K}\!]$ is formulated as
\begin{subequations} \label{12}
\begin{align}
\mathcal{P}_{1}: \min_{ \boldsymbol{\theta}, \mathbf{W}, \mathbf{p}} \,\,
&\sum_{k=1}^{K}p_{k} \label{12a} \\
\mbox{s.t.} \,\,\,\,
& \text{log}_{2}\Bigg(1 + \frac {p_{k}|\mathbf{w}_{k}^{\text{H}}(\mathbf{d}_{k}+\mathbf{G}\mathbf{H}_{k}\boldsymbol{\theta})|^{2}} 
{\sum_{j\neq k}^{K}p_{j}|\mathbf{w}_{k}^{\text{H}}(\mathbf{d}_{j}+\mathbf{G}\mathbf{H}_{j}\boldsymbol{\theta})|^{2}+\sigma_{k}^{2}}\Bigg) \geq R_{k}^{\textup{min}}, \quad \forall k \in \mathcal{K}, \label{12b} \\ 
& |\theta_{n}|=1, \quad  \forall  n \in \mathcal{N}, \label{12c}\\
& \|\mathbf{w}_{k}\|=1, \quad \forall  k \in \mathcal{K},   \label{12d}\\
& 0 \leq p_{k} \leq p_{k}^{\text{max}}, \quad  \forall  k \in \mathcal{K},   \label{5e}
\end{align} 
\end{subequations}
where $\mathcal{K}$ and $\mathcal{N}$ are the sets of IoT devices and RIS reflecting elements and $R_{k}^{\textup{min}}$ and $p_{k}^{\text{max}}$ are the rate requirement and the maximum transmit power of the $k$-th IoT device, respectively.
Note that \eqref{12c} is the unit-modulus constraint of the RIS phase shift and \eqref{12d} is the unit-norm constraint of the BS beamforming vector.
Due to the nonconvexity of the unit-modulus and unit-norm constraints, $\mathcal{P}_{1}$ is a non-convex problem.
This, together with the quadratic fractional and coupled structure of the rate function in \eqref{12b}, makes $\mathcal{P}_{1}$ very difficult to solve.

\section{Riemannian Conjugate Gradient-based joint optimization Algorithm}
The primal goal of the proposed RCG-JO technique is to find out the RIS phase shifts and the BS beamforming vectors minimizing the uplink transmit power of RIS-aided IoT networks.
As mentioned, main obstacles in solving the uplink power minimization problem are the nonconvex unit-modulus constraint of the RIS phase shift and unit-norm constraint of the BS beamforming vector.
To handle these issues, we exploit the smooth product Riemannian manifold structure of the sets of unit-modulus phase shifts and unit-norm beamforming vectors.
Since the product of two manifolds is also a Riemannian manifold with smooth structure, we can readily convert the uplink power minimization problem $\mathcal{P}_{1}$ to an unconstrained problem on the product Riemannian manifold.
After that, by using the differential geometry tools, we can design the gradient descent algorithm on the Riemannian manifold and use it to obtain the optimal RIS phase shifts and the BS beamforming vectors minimizing the uplink transmit power of the RIS-aided IoT network.

In a nutshell, the proposed RCG-JO algorithm consists of two major steps: 
1) Lagrangian relaxation to move the complicated rate constraint to the objective function and 2) alternating optimization of $\mathbf{p}$ and $(\boldsymbol{\theta}, \mathbf{W})$ to minimize the modified objective function on the product Riemannian manifold. 
To be specific, in the alternating optimization step, we first fix the device power $\mathbf{p}$ and then jointly optimize the phase shift vector $\boldsymbol{\theta}$ and the BS beamforming matrix $\mathbf{W}$ using the RCG method. 
Once the optimal $\boldsymbol{\theta}$ and $\mathbf{W}$ are obtained, the optimization problem of $\mathbf{p}$ is formulated as a linear programming (LP) problem where the optimal solution can be easily obtained using the convex optimization technique.
We repeat these procedures until the objective function $P_{\textup{total}} = \sum_{k=1}^{K}p_{k}$ converges (see Algorithm 1). 

\subsection{Notions on Riemannian Manifolds}
In this subsection, we briefly introduce properties and operators of the optimization on Riemannian manifold.
Roughly speaking, a smooth manifold is a generalization of the Euclidean space on which the notion of differentiability exists \cite{AbsilOptimization}.
The tangent space $\mathcal{T}_{\mathbf{X}}\mathcal{Y}$ at a point $\mathbf{X}$ of a manifold $\mathcal{Y}$ is the set of the tangent vectors of all the curves at $\mathbf{X}$, where the curve $\gamma$ of $\mathcal{Y}$ is a mapping from $\mathbb{C}$ to $\mathcal{Y}$.
Put it formally, for a given point $\mathbf{X} \in \mathcal{Y}$, the tangent space of $\mathcal{Y}$ at $\mathbf{X}$ is defined as $\mathcal{T}_{\mathbf{X}}\mathcal{Y} = \{\gamma^{\prime}(0): \gamma\;\text{is a curve in}\;\mathcal{Y}, \gamma(0) = \mathbf{X} \}$ (see Fig. 3(a)).
A manifold $\mathcal{Y}$ together with a smoothly varying inner product $g = \langle\cdot,\cdot\rangle: \mathcal{T}_{\mathbf{X}}\mathcal{Y} \times \mathcal{T}_{\mathbf{X}}\mathcal{Y} \rightarrow \mathbb{C}$ on the tangent space $\mathcal{T}_{\mathbf{X}}\mathcal{Y}$ forms a smooth Riemannian manifold, denoted as $(\mathcal{Y},g)$ where $g$ is termed as the Riemannian metric.


\begin{algorithm}[t]
\begin{algorithmic}[0]
\item[\textbf{Input:}] Number of iterations $T$, rate requirement $R_{k}^{\textup{min}}$, maximum transmit power $p_{k}^{\textup{max}}$.
\item[\textbf{Output:}] Uplink transmit power $\mathbf{p}$, RIS phase shifts $\boldsymbol{\theta}$, and BS beamforming matrix $\mathbf{W}$.
\item[\textbf{Initialization:}] $t = 1$, $\mathbf{p}_{t} = \mathbf{p}_{\text{ini}}$, $\boldsymbol{\theta}_{t} = \boldsymbol{\theta}_{\text{ini}}$, $\mathbf{W}_{t} = \mathbf{W}_{\text{ini}}$.
\algrule
\While{$P_{\textup{total}}$ \textup{does not converge}} 
\State Optimize $\boldsymbol{\theta}_{t}$ and $\mathbf{W}_{t}$ simultaneously using Algorithm 2 when $\mathbf{p}_{t}$ is fixed
\State Optimize $\mathbf{p}_{t}$ by solving an LP problem when $\boldsymbol{\theta}_{t}$ and $\mathbf{W}_{t}$ are fixed 
\State $t = t+1$
\EndWhile
\caption{Riemannian conjugate gradient-based joint optimization algorithm}
\end{algorithmic}
\end{algorithm}

When the Riemannian manifold $(\mathcal{Y},g)$ is the cartesian product of two Riemannian manifolds $(\mathcal{Y}_{1},g_{1})$ and $(\mathcal{Y}_{2},g_{2})$, then the Riemannian metric $g$ is defined as $g = g_{1} + g_{2}$ \cite{becigneul2018riemannian}.
In the following lemma, we show that the tangent space on the product Riemannian manifold is the direct sum of the tangent spaces on each Riemannian manifold.

\begin{lemma}
For the point $\mathbf{X} = \mathbf{X}_{1} \oplus \mathbf{X}_{2}$ where $\mathbf{X}_{1} \in \mathcal{Y}_{1}, \,\mathbf{X}_{2} \in \mathcal{Y}_{2}$, the tangent space at $\mathbf{X}$ of the product Riemannian manifold $\mathcal{Y}=\mathcal{Y}_{1}\times\mathcal{Y}_{2}$ is given by \cite{becigneul2018riemannian}
\begin{equation}
\mathcal{T}_{\mathbf{X}}\mathcal{Y}
= \mathcal{T}_{\mathbf{X}_{1}}\mathcal{Y}_{1}  \oplus \mathcal{T}_{\mathbf{X}_{2}}\mathcal{Y}_{2}. \label{13}
\end{equation}
\end{lemma}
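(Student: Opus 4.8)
The plan is to argue directly from the curve-based description of the tangent space recalled just above, namely $\mathcal{T}_{\mathbf{X}}\mathcal{Y} = \{\gamma^{\prime}(0): \gamma\;\text{is a curve in}\;\mathcal{Y}, \gamma(0) = \mathbf{X}\}$, establishing the two inclusions $\mathcal{T}_{\mathbf{X}}\mathcal{Y} \subseteq \mathcal{T}_{\mathbf{X}_{1}}\mathcal{Y}_{1} \oplus \mathcal{T}_{\mathbf{X}_{2}}\mathcal{Y}_{2}$ and $\mathcal{T}_{\mathbf{X}_{1}}\mathcal{Y}_{1} \oplus \mathcal{T}_{\mathbf{X}_{2}}\mathcal{Y}_{2} \subseteq \mathcal{T}_{\mathbf{X}}\mathcal{Y}$ separately, and then checking that the sum is direct.

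For the first inclusion, I would start from an arbitrary $\mathbf{v} \in \mathcal{T}_{\mathbf{X}}\mathcal{Y}$, written as $\mathbf{v} = \gamma^{\prime}(0)$ for some curve $\gamma$ in $\mathcal{Y} = \mathcal{Y}_{1} \times \mathcal{Y}_{2}$ with $\gamma(0) = \mathbf{X}$. Composing $\gamma$ with the canonical projections $\pi_{i}: \mathcal{Y}_{1} \times \mathcal{Y}_{2} \to \mathcal{Y}_{i}$, which are smooth, yields curves $\gamma_{i} = \pi_{i} \circ \gamma$ in $\mathcal{Y}_{i}$ with $\gamma_{i}(0) = \mathbf{X}_{i}$. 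Since $\gamma(t) = \gamma_{1}(t) \oplus \gamma_{2}(t)$ for every $t$, differentiating at $t = 0$ gives $\mathbf{v} = \gamma_{1}^{\prime}(0) \oplus \gamma_{2}^{\prime}(0)$ with $\gamma_{i}^{\prime}(0) \in \mathcal{T}_{\mathbf{X}_{i}}\mathcal{Y}_{i}$, which is exactly the required form.

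For the reverse inclusion, I would take $\mathbf{v}_{1} \in \mathcal{T}_{\mathbf{X}_{1}}\mathcal{Y}_{1}$ and $\mathbf{v}_{2} \in \mathcal{T}_{\mathbf{X}_{2}}\mathcal{Y}_{2}$ and choose, by definition of these tangent spaces, curves $\gamma_{i}$ in $\mathcal{Y}_{i}$ with $\gamma_{i}(0) = \mathbf{X}_{i}$ and $\gamma_{i}^{\prime}(0) = \mathbf{v}_{i}$. Then $\gamma(t) = \gamma_{1}(t) \oplus \gamma_{2}(t)$ is a curve in the product manifold $\mathcal{Y}$, smoothness being inherited componentwise from the product smooth structure, with $\gamma(0) = \mathbf{X}$, so $\mathbf{v}_{1} \oplus \mathbf{v}_{2} = \gamma^{\prime}(0) \in \mathcal{T}_{\mathbf{X}}\mathcal{Y}$. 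The two inclusions together give $\mathcal{T}_{\mathbf{X}}\mathcal{Y} = \mathcal{T}_{\mathbf{X}_{1}}\mathcal{Y}_{1} + \mathcal{T}_{\mathbf{X}_{2}}\mathcal{Y}_{2}$; to see the sum is direct, note that a vector lying in both summands must have vanishing component in each factor, or alternatively compare dimensions via $\dim \mathcal{T}_{\mathbf{X}}\mathcal{Y} = \dim \mathcal{Y}_{1} + \dim \mathcal{Y}_{2} = \dim \mathcal{T}_{\mathbf{X}_{1}}\mathcal{Y}_{1} + \dim \mathcal{T}_{\mathbf{X}_{2}}\mathcal{Y}_{2}$.

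None of these steps is genuinely hard; the only point requiring a little care is keeping the identification $\mathcal{T}_{\mathbf{X}}(\mathcal{Y}_{1}\times\mathcal{Y}_{2}) \cong \mathcal{T}_{\mathbf{X}_{1}}\mathcal{Y}_{1}\times\mathcal{T}_{\mathbf{X}_{2}}\mathcal{Y}_{2}$ precise, that is, reading ``$\oplus$'' consistently as the pairing of components on the manifold side and as the vector-space direct sum on the tangent side, and confirming that the projections $\pi_{i}$ and the product map $(\gamma_{1},\gamma_{2})\mapsto \gamma_{1}\oplus\gamma_{2}$ are smooth so that curves are carried to curves in both directions. I expect this bookkeeping, rather than any analytic difficulty, to be the crux of writing the argument cleanly.
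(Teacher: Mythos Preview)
Your argument is correct and is the standard curve-based proof of this fact. Note, however, that the paper does not actually supply its own proof of this lemma: it simply states the result with a citation to \cite{becigneul2018riemannian} and moves on. So there is no ``paper's proof'' to compare against; you have filled in a justification where the authors chose to quote the literature.
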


Given a smooth objective function $f$ on the Riemannian manifold $\mathcal{Y}$, minimization of $f$ requires the notion of gradient at every $\mathbf{X} \in \mathcal{Y}$.
To be specific, the Riemannian gradient of $f$ at $\mathbf{X}$, denoted by $\textup{grad}_{\mathcal{Y}}f(\mathbf{X})$, is defined as a unique vector on $\mathcal{T}_{\mathbf{X}}\mathcal{Y}$ that yields the steepest-descent of $f$.
Put it formally, $\textup{grad}_{\mathcal{Y}}f(\mathbf{X})$ is obtained by projecting $\nabla_{\mathbf{X}}f(\mathbf{X})$ onto $\mathcal{T}_{\mathbf{X}}\mathcal{Y}$ (see Fig. 3(b)).

Using Lemma 1, the Riemannian gradient $\textup{grad}_{\mathcal{Y}}f(\mathbf{X})$ is given by
\begin{equation}
\textup{grad}_{\mathcal{Y}}f(\mathbf{X}) = \textup{grad}_{\mathcal{Y}_{1}}f_{1}(\mathbf{X}_{1}) \oplus \textup{grad}_{\mathcal{Y}_{2}}f_{2}(\mathbf{X}_{2}), \label{14}
\end{equation}
where $\textup{grad}_{\mathcal{Y}_{i}}f_{i}(\mathbf{X}_{i}) \in \mathcal{T}_{\mathbf{X}_{i}}\mathcal{Y}_{i}$ is the Riemannian gradient of $f_{i}$ at $\mathbf{X}_{i} \in \mathcal{Y}_{i}$ for $i = 1,2$.

\begin{figure}[!t]
\centering
\includegraphics[width=0.85\columnwidth]{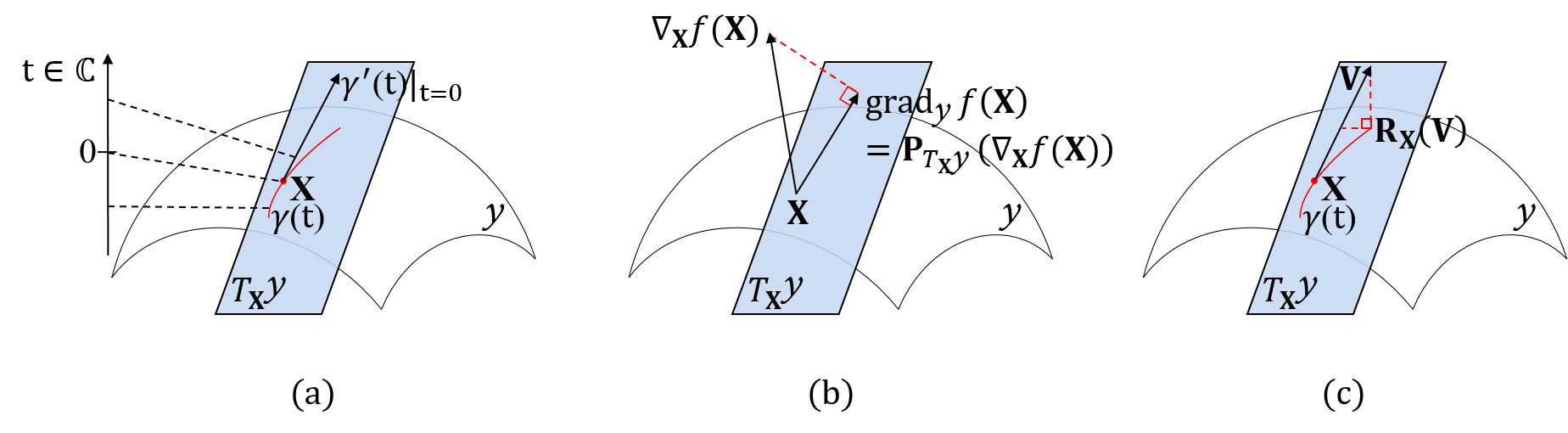}
\vspace{-0.3cm}
\caption{Illustration of (a) the tangent space $\mathcal{T}_{\mathbf{X}}\mathcal{Y}$, (b) the Riemannian gradient $\textup{grad}_{\mathcal{Y}}f(\mathbf{X})$, and (c) the retraction operator $\textbf{\textup{R}}_{\mathbf{X}}(\mathbf{V})$ at the point $\mathbf{X}$ in the Riemannian manifold $\mathcal{Y}$. 
Note that Euclidean gradient $\nabla_{\mathbf{X}}f(\mathbf{X})$ is a direction for which the objective function is reduced in $\mathbb{C}^{P}$ whereas the Riemannian gradient $\text{grad}_{\mathcal{Y}}f(\mathbf{X})$ is a direction for which the objective function is reduced in $\mathcal{T}_{\mathbf{X}}\mathcal{Y}$.
}
\end{figure}

\begin{defn}
An orthogonal projection onto the tangent space $\mathcal{T}_{\mathbf{X}}\mathcal{Y}$ is a mapping $\textbf{\textup{P}}_{\mathcal{T}_{\mathbf{X}}\mathcal{Y}}: \mathbb{C}^{a\times b}\to \mathcal{T}_{\mathbf{X}}\mathcal{Y}$ such that for a given $\mathbf{U}\in\mathbb{C}^{a\times b}$, $\langle \mathbf{U} - \textbf{\textup{P}}_{\mathcal{T}_{\mathbf{X}}\mathcal{Y}}(\mathbf{U}),\mathbf{V} \rangle = 0$ for all $\mathbf{V} \in \mathcal{T}_{\mathbf{X}}\mathcal{Y}$. In particular, the orthogonal projection onto the tangent space of product manifold $\mathcal{Y}=\mathcal{Y}_{1}\times\mathcal{Y}_{2}$ is
\begin{equation}
\textbf{\textup{P}}_{\mathcal{T}_{\mathbf{X}}\mathcal{Y}}(\mathbf{U}) = \textbf{\textup{P}}_{\mathcal{T}_{\mathbf{X}_{1}}\mathcal{Y}_{1}}(\mathbf{U}_{1}) \oplus \textbf{\textup{P}}_{\mathcal{T}_{\mathbf{X}_{2}}\mathcal{Y}_{2}}(\mathbf{U}_{2}), \label{15}
\end{equation}
where $\textbf{\textup{P}}_{\mathcal{T}_{\mathbf{X}_{i}}\mathcal{Y}_{i}}(\mathbf{U}_{i})$ is the projection of $\mathbf{U}_{i}$ onto $\mathcal{T}_{\mathbf{X}_{i}}\mathcal{Y}_{i}$ for $i=1,2$.
\end{defn}

In order to express the concept of moving in the direction of a tangent space while staying on the manifold, we need an operation called \emph{retraction}.
As illustrated in Fig. 3(c), the retraction is a mapping from $\mathcal{T}_{\mathbf{X}}\mathcal{Y}$ to $\mathcal{Y}$ that preserves the gradient at $\mathbf{X}$\cite{AbsilOptimization}.

\begin{defn}
The retraction $\textbf{\textup{R}}_{\mathbf{X}}(\mathbf{V})$ of a matrix $\mathbf{V} \in \mathcal{T}_{\mathbf{X}}\mathcal{Y}$ onto $\mathcal{Y}$ is defined as
\begin{equation}
\textbf{\textup{R}}_{\mathbf{X}}(\mathbf{V}) = \arg\min_{\mathbf{Z} \in \mathcal{Y}} \|\mathbf{X} + \mathbf{V} - \mathbf{Z}\|_{\textup{F}}. \label{16}
\end{equation}
In particular, the retraction onto the product manifold $\mathcal{Y}=\mathcal{Y}_{1}\times\mathcal{Y}_{2}$ is
\begin{equation}
\textbf{\textup{R}}_{\mathbf{X}}(\mathbf{V}) 
= \textbf{\textup{R}}_{\mathbf{X}_{1}}(\mathbf{V}_{1}) \oplus  \textbf{\textup{R}}_{\mathbf{X}_{2}}(\mathbf{V}_{2}), \label{17}
\end{equation}
where $\textbf{\textup{R}}_{\mathbf{X}_{i}}(\mathbf{V}_{i})$ is the retraction of $\mathbf{V}_{i} \in \mathcal{T}_{\mathbf{X}_{i}}\mathcal{Y}_{i}$ onto $\mathcal{Y}_{i}$ for $i=1,2$.
\end{defn}

\subsection{Joint RIS Phase Shift and BS Beamforming Optimization on Product Manifold}
In the first step, we fix the device power and then jointly optimize the RIS phase shifts and BS beamforming vectors using the RCG method.

When the device power is given, $\mathcal{P}_{1}$ is reduced to
\begin{subequations} \label{18}
\begin{align}
\mathcal{P}_{2}: \text{Find} \,\,\,\, &(\boldsymbol{\theta}, \mathbf{W})  \label{18a} \\
\mbox{s.t.} \quad
& \frac{p_{k}}{2^{R_{k}^{\textup{min}}} -1 } A_{k,k}(\boldsymbol{\theta},\mathbf{w}_{k}) - \sum_{j \neq k}^{K} p_{j}\, A_{j,k}(\boldsymbol{\theta},\mathbf{w}_{k}) \geq \sigma_{k}^{2}, \quad \forall  k \in \mathcal{K},  \label{18b}\\
& |\theta_{n}| = 1, \quad \forall  n \in \mathcal{N}, \label{18c}\\
& \|\mathbf{w}_{k}\| = 1, \quad \forall  k \in \mathcal{K},   \label{18d}
\end{align}
\end{subequations}
where $A_{j,k}(\boldsymbol{\theta},\mathbf{w}_{k}) = \|\mathbf{w}_{k}^{\text{H}}(\mathbf{d}_{j}+\mathbf{G}\mathbf{H}_{j}\boldsymbol{\theta})\|^{2}$ for $j,k \in \mathcal{K}$.
Since the rate expression in \eqref{18b} is a joint quadratic function of  $\boldsymbol{\theta}$ and $\mathbf{W}$, \eqref{18b} is a nonconvex constraint. 
To handle this issue, we use the Lagrangian relaxation to move the complicated rate constraints to the objective function.
To be specific, the modified objective function is given by
\begin{equation}
L(\boldsymbol{\theta}, \mathbf{W}, \boldsymbol{\lambda}) = \sum_{k=1}^{K} \lambda_{k} \bigg( - \frac{p_{k}}{2^{R_{k}^{\textup{min}}} -1} A_{k,k}(\boldsymbol{\theta},\mathbf{w}_{k}) + \sum_{j \neq k}^{K} p_{j} A_{j,k}(\boldsymbol{\theta},\mathbf{w}_{k}) + \sigma_{k}^{2} \bigg), \label{19}
\end{equation}
where $\boldsymbol{\lambda} = [\lambda_{1},\cdots,\lambda_{K}]$ is the Lagrangian multiplier obtained by solving the corresponding dual problem.
Using $L(\boldsymbol{\theta}, \mathbf{W}, \boldsymbol{\lambda})$, $\mathcal{P}_{2}$ is relaxed into
\begin{subequations}\label{20}
\begin{align}
\mathcal{P}_{3}: \min_{\boldsymbol{\theta}, \mathbf{W}, \boldsymbol{\lambda}} \,\,\, &\sum_{k=1}^{K} \lambda_{k} \bigg( - \frac{p_{k}}{2^{R_{k}^{\textup{min}}} - 1} A_{k,k}(\boldsymbol{\theta},\mathbf{w}_{k}) + \sum_{j \neq k}^{K} p_{j} A_{j,k}(\boldsymbol{\theta},\mathbf{w}_{k}) + \sigma_{k}^{2}\bigg) \label{20a}\\ 
\mbox{s.t.} \,\,\,\,\,
& |\theta_{n}|=1,  \quad \forall n \in \mathcal{N},  \label{20b}\\
&\|\mathbf{w}_{k}\|=1,  \quad \forall k \in \mathcal{K}, \label{20c}\\
& \lambda_{k} \geq 0, \quad \forall k \in \mathcal{K}. \label{20d}
\end{align} 
\end{subequations}

The relaxed problem $\mathcal{P}_{3}$ looks simpler than $\mathcal{P}_{2}$, but it is still nonconvex and difficult to solve since the objective function of $\mathcal{P}_{3}$ is a joint quadratic function of $\boldsymbol{\theta}$ and $\mathbf{W}$. 
Additionally, we need to deal with the unit-modulus constraints \eqref{20b} and the unit-norm constraints \eqref{20c}.  
To handle the problem, we jointly optimize $\boldsymbol{\theta}$ and $\mathbf{W}$ on the product manifold of the unit-modulus phase shifts and the unit-norm beamforming vectors using the RCG method.
Once we obtain $\boldsymbol{\theta}$ and $\mathbf{W}$, we update the Lagrangian multiplier $\boldsymbol{\lambda}$.
We repeat these procedures until $\boldsymbol{\theta}$ and $\mathbf{W}$ converge.

\subsubsection{Joint RIS Phase Shift and BS Beamforming Optimization}
For a given $\boldsymbol{\lambda}$, $\mathcal{P}_{3}$ is reduced to the unconstrained problem on the product manifold:
\begin{equation}
\mathcal{P}_{(\boldsymbol{\theta},\mathbf{W})}: \min_{(\boldsymbol{\theta},\mathbf{W}) \in \mathcal{M}_{\boldsymbol{\theta}} \times \mathcal{M}_{\mathbf{W}}} L(\boldsymbol{\theta}, \mathbf{W}), \label{21}
\end{equation} 
where $\mathcal{M}_{\boldsymbol{\theta}}$ and $\mathcal{M}_{\mathbf{W}}$ are the complex circle manifold and complex oblique manifold given by
\begin{align}
\mathcal{M}_{\boldsymbol{\theta}} &= \{\boldsymbol{\theta}\in\mathbb{C}^{N \times 1}: |\theta_{n}| = 1,\; \forall n \in \mathcal{N}\} \label{22}\\
\mathcal{M}_{\mathbf{W}} &= \{\mathbf{W} \in \mathbb{C}^{M \times K}: \text{ddiag}(\mathbf{W}^{\text{H}}\mathbf{W}) = \mathbf{I}_{K}\}, \label{23}
\end{align}
with the inner products defined as $g_{\boldsymbol{\theta}}(\mathbf{z}_{1}, \mathbf{z}_{2}) = \langle \mathbf{z}_{1}, \mathbf{z}_{2} \rangle = \Re\{\mathbf{z}_{1}^{\text{H}} \mathbf{z}_{2}\}$ and $g_{\mathbf{W}}(\mathbf{Z}_{1}, \mathbf{Z}_{2}) = \langle \mathbf{Z}_{1}, \mathbf{Z}_{2} \rangle = \Re \{\text{tr}(\mathbf{Z}_{1}^{\text{H}}\mathbf{Z}_{2})\}$, respectively.

By combining $\boldsymbol{\theta}$ and $\mathbf{W}$ into $\mathbf{\Sigma} = \text{blkdiag}(\boldsymbol{\theta}, \mathbf{W})$, $\mathcal{P}_{(\boldsymbol{\theta},\mathbf{W})}$ is re-expressed as
\begin{equation}
\mathcal{P}_{\mathbf{\Sigma}}: \min_{\mathbf{\Sigma} \in \mathcal{M}}\;\, L(\mathbf{\Sigma}),  \label{24}
\end{equation} 
where $\mathcal{M} = \mathcal{M}_{\boldsymbol{\theta}} \times \mathcal{M}_{\mathbf{W}}$ is the product manifold with the inner product defined as $g_{\mathbf{\Sigma}} = g_{\boldsymbol{\theta}} + g_{\mathbf{W}}$.
In the following lemma, we provide the tangent space of the product manifold $\mathcal{M}$.
\begin{lemma}
The tangent space $\mathcal{T}_{\mathbf{\Sigma}}\mathcal{M}$ of the product manifold $\mathcal{M}$ at the point $\mathbf{\Sigma}$ is given by
\begin{equation}
\mathcal{T}_{\mathbf{\Sigma}}\mathcal{M}
= \mathcal{T}_{\boldsymbol{\theta}}\mathcal{M}_{\boldsymbol{\theta}} \oplus \mathcal{T}_{\mathbf{W}}\mathcal{M}_{\mathbf{W}}, \label{25}
\end{equation}
where $\mathcal{T}_{\boldsymbol{\theta}}\mathcal{M}_{\boldsymbol{\theta}} = \{\mathbf{z}\in\mathbb{C}^{N \times 1} :\Re\{\mathbf{z}^{*} \odot \boldsymbol{\theta} \} = {\mathbf{0}}_{N} \}$ is the tangent space of $\mathcal{M}_{\boldsymbol{\theta}}$ at $\boldsymbol{\theta}$ and $\mathcal{T}_{\mathbf{W}}\mathcal{M}_{\mathbf{W}} = \{\mathbf{Z}\in\mathbb{C}^{M \times K} : \textup{ddiag} (\Re \{\mathbf{W}^{\textup{H}}\mathbf{Z}\}) = {\mathbf{0}}_{K} \}$ is the tangent space of $\mathcal{M}_{\mathbf{W}}$ at $\mathbf{W}$.
\end{lemma}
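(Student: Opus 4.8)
The plan is to prove the statement in two stages: first reduce the product manifold $\mathcal{M}=\mathcal{M}_{\boldsymbol{\theta}}\times\mathcal{M}_{\mathbf{W}}$ to its two factors by invoking Lemma~1, and then compute the tangent spaces of the complex circle manifold $\mathcal{M}_{\boldsymbol{\theta}}$ and the complex oblique manifold $\mathcal{M}_{\mathbf{W}}$ separately by differentiating their defining equality constraints. Since the point $\mathbf{\Sigma}=\text{blkdiag}(\boldsymbol{\theta},\mathbf{W})$ corresponds to $\boldsymbol{\theta}\oplus\mathbf{W}$ under the product identification, Lemma~1 applied with $\mathcal{Y}_{1}=\mathcal{M}_{\boldsymbol{\theta}}$ and $\mathcal{Y}_{2}=\mathcal{M}_{\mathbf{W}}$ immediately gives $\mathcal{T}_{\mathbf{\Sigma}}\mathcal{M}=\mathcal{T}_{\boldsymbol{\theta}}\mathcal{M}_{\boldsymbol{\theta}}\oplus\mathcal{T}_{\mathbf{W}}\mathcal{M}_{\mathbf{W}}$; it then remains only to identify the two summands.

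For the circle manifold, I would regard $\mathcal{M}_{\boldsymbol{\theta}}$, under the identification $\mathbb{C}^{N}\cong\mathbb{R}^{2N}$, as the level set $c^{-1}(\mathbf{0}_{N})$ of the smooth map $c(\boldsymbol{\theta})=[\,|\theta_{1}|^{2}-1,\ldots,|\theta_{N}|^{2}-1\,]^{\text{T}}$. The first step is to verify that $\mathbf{0}_{N}$ is a regular value of $c$: the real differential of the $n$-th component is $2[\Re\{\theta_{n}\},\Im\{\theta_{n}\}]$, which is nonzero since $|\theta_{n}|=1$, and the components depend on disjoint coordinate pairs, so the Jacobian has full row rank $N$ everywhere on the level set. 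Consequently $\mathcal{M}_{\boldsymbol{\theta}}$ is an embedded submanifold and $\mathcal{T}_{\boldsymbol{\theta}}\mathcal{M}_{\boldsymbol{\theta}}=\ker Dc(\boldsymbol{\theta})$. Differentiating $|\theta_{n}(t)|^{2}=1$ along a curve $\boldsymbol{\theta}(t)$ in $\mathcal{M}_{\boldsymbol{\theta}}$ with $\boldsymbol{\theta}(0)=\boldsymbol{\theta}$ and $\boldsymbol{\theta}'(0)=\mathbf{z}$ yields $\Re\{\theta_{n}^{*}z_{n}\}=0$ for all $n$, i.e.\ $\Re\{\mathbf{z}^{*}\odot\boldsymbol{\theta}\}=\mathbf{0}_{N}$; conversely, any $\mathbf{z}$ with this property is the initial velocity of the explicit curve $\theta_{n}(t)=\theta_{n}e^{j\alpha_{n}t}$ with $\alpha_{n}=\Im\{\theta_{n}^{*}z_{n}\}$, which remains in $\mathcal{M}_{\boldsymbol{\theta}}$. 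Hence $\mathcal{T}_{\boldsymbol{\theta}}\mathcal{M}_{\boldsymbol{\theta}}=\{\mathbf{z}\in\mathbb{C}^{N\times 1}:\Re\{\mathbf{z}^{*}\odot\boldsymbol{\theta}\}=\mathbf{0}_{N}\}$.

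For the oblique manifold, the same recipe applies columnwise: $\mathcal{M}_{\mathbf{W}}$ is the level set of $\mathbf{W}\mapsto\text{ddiag}(\mathbf{W}^{\text{H}}\mathbf{W})-\mathbf{I}_{K}$, equivalently the $K$ decoupled constraints $\mathbf{w}_{k}^{\text{H}}\mathbf{w}_{k}=1$, whose differentials $2\mathbf{w}_{k}$ are nonzero, so regularity again holds. Differentiating the constraints along a curve with velocity $\mathbf{Z}=[\mathbf{z}_{1}\cdots\mathbf{z}_{K}]$ gives $\Re\{\mathbf{w}_{k}^{\text{H}}\mathbf{z}_{k}\}=0$ for each $k$, i.e.\ $\text{ddiag}(\Re\{\mathbf{W}^{\text{H}}\mathbf{Z}\})=\mathbf{0}_{K}$; conversely each such $\mathbf{Z}$ is realized by the columnwise curves $\mathbf{w}_{k}(t)=(\mathbf{w}_{k}+t\mathbf{z}_{k})/\|\mathbf{w}_{k}+t\mathbf{z}_{k}\|$, whose derivative at $t=0$ equals $\mathbf{z}_{k}$ precisely because $\Re\{\mathbf{w}_{k}^{\text{H}}\mathbf{z}_{k}\}=0$. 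Substituting these two characterizations into the direct sum supplied by Lemma~1 gives the claimed expression for $\mathcal{T}_{\mathbf{\Sigma}}\mathcal{M}$.

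The step I expect to be the (mild) main obstacle is establishing that the defining constraint maps are submersions onto their images, so that the kernel obtained by differentiating the constraints is genuinely the entire tangent space rather than a proper subspace, together with the companion task of producing the explicit curves above to certify that the ``initial velocity'' map is onto that kernel; for these classical manifolds both are elementary but must be checked. Everything else --- the identity $\Re\{\theta_{n}^{*}z_{n}\}=\Re\{z_{n}^{*}\theta_{n}\}$ and the assembly through Lemma~1 --- is routine bookkeeping.
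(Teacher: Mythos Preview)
Your proposal is correct. The paper does not actually supply a proof of this lemma: it states the result and relies on Lemma~1 for the product decomposition, with the explicit forms of $\mathcal{T}_{\boldsymbol{\theta}}\mathcal{M}_{\boldsymbol{\theta}}$ and $\mathcal{T}_{\mathbf{W}}\mathcal{M}_{\mathbf{W}}$ taken as standard facts from the manifold optimization literature (cf.\ the reference to Absil et al.\ in the surrounding lemmas). Your argument --- invoking Lemma~1 and then deriving the two tangent spaces from scratch via the regular-level-set construction and explicit curves --- supplies exactly the details the paper omits, and does so correctly.
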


In order to minimize the objective function $L(\mathbf{\Sigma})$ in $\mathcal{P}_{\mathbf{\Sigma}}$, we need a Riemannian gradient which is obtained by projecting the Euclidean gradient of $L(\mathbf{\Sigma})$ at $\mathbf{\Sigma}$ onto $\mathcal{T}_{\mathbf{\Sigma}}\mathcal{M}$. 
\begin{lemma}
The orthogonal projection $\textbf{\textup{P}}_{\mathcal{T}_{\mathbf{\Sigma}}\mathcal{M}}(\bar{\mathbf{U}})$ of $\bar{\mathbf{U}}=\textup{blkdiag}(\mathbf{u},\mathbf{U})$ onto $\mathcal{T}_{\mathbf{\Sigma}} \mathcal{M}$ is
\begin{equation}
\textbf{\textup{P}}_{\mathcal{T}_{\mathbf{\Sigma}}\mathcal{M}}(\bar{\mathbf{U}}) = \textbf{\textup{P}}_{\mathcal{T}_{\boldsymbol{\theta}}\mathcal{M}_{\boldsymbol{\theta}}}(\mathbf{u}) \oplus \textbf{\textup{P}}_{\mathcal{T}_{\mathbf{W}}\mathcal{M}_{\mathbf{W}}}(\mathbf{U}), \label{26}
\end{equation}
where $\textbf{\textup{P}}_{\mathcal{T}_{\boldsymbol{\theta}}\mathcal{M}_{\boldsymbol{\theta}}}(\mathbf{u}) = \mathbf{u} - \Re \{\boldsymbol{\theta}^{*} \odot \mathbf{u}\} \odot \boldsymbol{\theta}$ is the orthogonal projection of $\mathbf{d}$ onto $\mathcal{T}_{\boldsymbol{\theta}}\mathcal{M}_{\boldsymbol{\theta}}$ and $\textbf{\textup{P}}_{\mathcal{T}_{\mathbf{W}}\mathcal{M}_{\mathbf{W}}}(\mathbf{U}) \!=\! \mathbf{U} - \mathbf{W} \, \textup{ddiag}(\Re \{\mathbf{W}^{\textup{H}} \mathbf{U}\})$ is the orthogonal projection of $\mathbf{U}$ onto $\mathcal{T}_{\mathbf{W}}\mathcal{M}_{\mathbf{W}}$ \cite{AbsilOptimization}.
\end{lemma}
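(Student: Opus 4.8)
The plan is to verify directly that the right-hand side of \eqref{26} satisfies the defining property of the orthogonal projection in Definition 1, namely that $\bar{\mathbf{U}} - \textbf{\textup{P}}_{\mathcal{T}_{\mathbf{\Sigma}}\mathcal{M}}(\bar{\mathbf{U}})$ is orthogonal to every element of $\mathcal{T}_{\mathbf{\Sigma}}\mathcal{M}$ under the product metric $g_{\mathbf{\Sigma}} = g_{\boldsymbol{\theta}} + g_{\mathbf{W}}$. First I would invoke Lemma 2, which gives $\mathcal{T}_{\mathbf{\Sigma}}\mathcal{M} = \mathcal{T}_{\boldsymbol{\theta}}\mathcal{M}_{\boldsymbol{\theta}} \oplus \mathcal{T}_{\mathbf{W}}\mathcal{M}_{\mathbf{W}}$, so that an arbitrary tangent vector decomposes as $\mathbf{V} = \text{blkdiag}(\mathbf{v},\mathbf{V}')$ with $\mathbf{v}\in\mathcal{T}_{\boldsymbol{\theta}}\mathcal{M}_{\boldsymbol{\theta}}$ and $\mathbf{V}'\in\mathcal{T}_{\mathbf{W}}\mathcal{M}_{\mathbf{W}}$. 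Because the metric is additive over the two blocks and $\bar{\mathbf{U}}=\text{blkdiag}(\mathbf{u},\mathbf{U})$ is itself block-diagonal, the orthogonality condition $\langle \bar{\mathbf{U}} - \textbf{\textup{P}}_{\mathcal{T}_{\mathbf{\Sigma}}\mathcal{M}}(\bar{\mathbf{U}}), \mathbf{V}\rangle = 0$ splits into two independent conditions, one on the $\boldsymbol{\theta}$-block and one on the $\mathbf{W}$-block. This reduces the claim to the two single-manifold projection identities for the complex circle manifold $\mathcal{M}_{\boldsymbol{\theta}}$ and the complex oblique manifold $\mathcal{M}_{\mathbf{W}}$.

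Next I would dispatch each block separately. For the circle block, set $\mathbf{r} = \mathbf{u} - \textbf{\textup{P}}_{\mathcal{T}_{\boldsymbol{\theta}}\mathcal{M}_{\boldsymbol{\theta}}}(\mathbf{u}) = \Re\{\boldsymbol{\theta}^{*}\odot\mathbf{u}\}\odot\boldsymbol{\theta}$ and, for any $\mathbf{v}\in\mathcal{T}_{\boldsymbol{\theta}}\mathcal{M}_{\boldsymbol{\theta}}$, compute $g_{\boldsymbol{\theta}}(\mathbf{r},\mathbf{v}) = \Re\{\mathbf{r}^{\text{H}}\mathbf{v}\}$; using $\mathbf{r}^{\text{H}}\mathbf{v} = \sum_n \Re\{\theta_n^{*}u_n\}\,\theta_n^{*}v_n$ and the tangent-space characterization $\Re\{\mathbf{v}^{*}\odot\boldsymbol{\theta}\} = \mathbf{0}_N$ (equivalently $\Re\{\theta_n^{*}v_n\}=0$ for all $n$), every term vanishes, so $\mathbf{r}\perp\mathcal{T}_{\boldsymbol{\theta}}\mathcal{M}_{\boldsymbol{\theta}}$. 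One also checks $\textbf{\textup{P}}_{\mathcal{T}_{\boldsymbol{\theta}}\mathcal{M}_{\boldsymbol{\theta}}}(\mathbf{u})$ indeed lies in $\mathcal{T}_{\boldsymbol{\theta}}\mathcal{M}_{\boldsymbol{\theta}}$, i.e. $\Re\{\boldsymbol{\theta}^{*}\odot(\mathbf{u} - \Re\{\boldsymbol{\theta}^{*}\odot\mathbf{u}\}\odot\boldsymbol{\theta})\} = \mathbf{0}_N$, which follows from $|\theta_n|^2 = 1$. For the oblique block, set $\mathbf{R} = \mathbf{U} - \textbf{\textup{P}}_{\mathcal{T}_{\mathbf{W}}\mathcal{M}_{\mathbf{W}}}(\mathbf{U}) = \mathbf{W}\,\text{ddiag}(\Re\{\mathbf{W}^{\text{H}}\mathbf{U}\})$ and, for any $\mathbf{V}'\in\mathcal{T}_{\mathbf{W}}\mathcal{M}_{\mathbf{W}}$, compute $g_{\mathbf{W}}(\mathbf{R},\mathbf{V}') = \Re\{\text{tr}(\mathbf{R}^{\text{H}}\mathbf{V}')\} = \Re\{\text{tr}(\text{ddiag}(\Re\{\mathbf{W}^{\text{H}}\mathbf{U}\})\,\mathbf{W}^{\text{H}}\mathbf{V}')\}$; since the leading factor is real diagonal, this equals $\text{tr}(\text{ddiag}(\Re\{\mathbf{W}^{\text{H}}\mathbf{U}\})\,\text{ddiag}(\Re\{\mathbf{W}^{\text{H}}\mathbf{V}'\}))$, and the tangent condition $\text{ddiag}(\Re\{\mathbf{W}^{\text{H}}\mathbf{V}'\}) = \mathbf{0}_K$ kills it. A symmetric check using $\text{ddiag}(\mathbf{W}^{\text{H}}\mathbf{W}) = \mathbf{I}_K$ confirms $\textbf{\textup{P}}_{\mathcal{T}_{\mathbf{W}}\mathcal{M}_{\mathbf{W}}}(\mathbf{U})\in\mathcal{T}_{\mathbf{W}}\mathcal{M}_{\mathbf{W}}$. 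Finally, combining the two blocks via Lemma 2 and the additivity of $g_{\mathbf{\Sigma}}$ gives \eqref{26}, and uniqueness of the orthogonal projection onto a subspace under an inner product finishes the argument.

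The only genuinely delicate point is bookkeeping with the Hadamard product and the $\Re\{\cdot\}$ operator in the circle block: one must be careful that $\Re\{\boldsymbol{\theta}^{*}\odot\mathbf{u}\}$ is a \emph{real} vector, so that $\Re\{\boldsymbol{\theta}^{*}\odot\mathbf{u}\}\odot\boldsymbol{\theta}$ is a complex vector whose $n$-th entry is $\Re\{\theta_n^{*}u_n\}\,\theta_n$, and to use $|\theta_n| = 1$ at exactly the right places. Everything else is routine linear algebra. I would also remark that the entire statement is essentially the product-manifold projection formula \eqref{15} of Definition 1 specialized to $\mathcal{M}_{\boldsymbol{\theta}}$ and $\mathcal{M}_{\mathbf{W}}$, with the two single-manifold formulas being standard (see \cite{AbsilOptimization}); hence the proof is really just the verification of those standard formulas together with the direct-sum structure from Lemma 2.
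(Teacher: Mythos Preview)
Your proposal is correct. The paper itself does not supply a proof of this lemma at all; it simply states the result and cites \cite{AbsilOptimization} for the two single-manifold projection formulas, relying implicitly on Definition~1 (equation~\eqref{15}) for the product structure. Your direct verification---splitting via Lemma~2 and the additive metric, then checking orthogonality and tangency on each factor using $|\theta_n|=1$ and $\text{ddiag}(\mathbf{W}^{\text{H}}\mathbf{W})=\mathbf{I}_K$---is exactly the standard argument behind the cited formulas and is more explicit than what the paper provides.
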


To make sure that the point $\mathbf{\Sigma}$ is updated in the direction of the tangent space $\mathcal{T}_{\mathbf{\Sigma}}\mathcal{M}$ while staying on $\mathcal{M}$, a \emph{retraction} operation, a mapping from $\mathcal{T}_{\mathbf{\Sigma}}\mathcal{M}$ to $\mathcal{M}$, is needed.
\begin{lemma}
The retraction $\textbf{\textup{R}}_{\mathbf{\Sigma}}(\bar{\mathbf{V}})$ of $\bar{\mathbf{V}} = \textup{blkdiag}(\boldsymbol{\mathbf{v}}, \mathbf{V}) \in \mathcal{T}_{\mathbf{\Sigma}}\mathcal{M}$ is
\begin{equation}
\textbf{\textup{R}}_{\boldsymbol{\Sigma}}(\bar{\mathbf{V}}) = \textbf{\textup{R}}_{\boldsymbol{\theta}}(\mathbf{v}) \oplus  \textbf{\textup{R}}_{\mathbf{W}}(\mathbf{V}), \label{27}
\end{equation}
where $\textbf{\textup{R}}_{\boldsymbol{\theta}}(\mathbf{v}) = (\boldsymbol{\theta} + \mathbf{v}) \odot \frac{1}{|\boldsymbol{\theta} + \mathbf{v}|}$ is the retraction of $\mathbf{v} \in \mathcal{T}_{\boldsymbol{\theta}}\mathcal{M}_{\boldsymbol{\theta}}$ and $\textbf{\textup{R}}_{\mathbf{W}}(\mathbf{V}) = \frac{(\mathbf{W} + \mathbf{V})}{\|\textup{ddiag}((\mathbf{W} + \mathbf{V})^{\textup{H}}(\mathbf{W} + \mathbf{V}))\|_{\textup{F}}}$ is the retraction of $\mathbf{V} \in \mathcal{T}_{\mathbf{W}}\mathcal{M}_{\mathbf{W}}$ \cite{AbsilOptimization}.
\end{lemma}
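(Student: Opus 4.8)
The plan is to derive \eqref{27} directly from the defining minimization \eqref{16}, exploiting the block-diagonal structure of $\mathbf{\Sigma}=\textup{blkdiag}(\boldsymbol{\theta},\mathbf{W})$ and $\bar{\mathbf{V}}=\textup{blkdiag}(\mathbf{v},\mathbf{V})$, and then solving the two resulting coordinate-wise projection problems in closed form. First I would note that any $\mathbf{Z}\in\mathcal{M}=\mathcal{M}_{\boldsymbol{\theta}}\times\mathcal{M}_{\mathbf{W}}$ has the form $\mathbf{Z}=\textup{blkdiag}(\mathbf{z},\mathbf{Z}')$ with $\mathbf{z}\in\mathcal{M}_{\boldsymbol{\theta}}$ and $\mathbf{Z}'\in\mathcal{M}_{\mathbf{W}}$, and that $\|\textup{blkdiag}(\mathbf{A},\mathbf{B})\|_{\textup{F}}^{2}=\|\mathbf{A}\|_{\textup{F}}^{2}+\|\mathbf{B}\|_{\textup{F}}^{2}$. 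Hence the objective in \eqref{16} splits additively and the minimization separates into an independent minimization of $\|\boldsymbol{\theta}+\mathbf{v}-\mathbf{z}\|$ over $\mathcal{M}_{\boldsymbol{\theta}}$ and of $\|\mathbf{W}+\mathbf{V}-\mathbf{Z}'\|_{\textup{F}}$ over $\mathcal{M}_{\mathbf{W}}$; this is exactly the decomposition \eqref{17} specialized to the present manifolds, so it remains only to evaluate each factor.

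Next I would compute the two component retractions. For the complex circle manifold $\mathcal{M}_{\boldsymbol{\theta}}$ the constraint $|\theta_{n}|=1$ decouples across the $N$ entries, so $\textbf{\textup{R}}_{\boldsymbol{\theta}}(\mathbf{v})$ is obtained entrywise as the metric projection of the scalar $\theta_{n}+v_{n}\in\mathbb{C}$ onto the unit circle; for any nonzero $c\in\mathbb{C}$ that projection is $c/|c|$, and stacking over $n$ gives $(\boldsymbol{\theta}+\mathbf{v})\odot\frac{1}{|\boldsymbol{\theta}+\mathbf{v}|}$. For the complex oblique manifold $\mathcal{M}_{\mathbf{W}}$ the constraint $\textup{ddiag}(\mathbf{Z}'^{\textup{H}}\mathbf{Z}')=\mathbf{I}_{K}$ says precisely that every column $\mathbf{z}'_{k}$ is a unit vector, hence $\|\mathbf{W}+\mathbf{V}-\mathbf{Z}'\|_{\textup{F}}^{2}=\sum_{k}\|\mathbf{w}_{k}+\mathbf{v}_{k}-\mathbf{z}'_{k}\|^{2}$ is minimized column-by-column by $\mathbf{z}'_{k}=(\mathbf{w}_{k}+\mathbf{v}_{k})/\|\mathbf{w}_{k}+\mathbf{v}_{k}\|$; collecting the columns and expressing the per-column norms through $\textup{ddiag}\bigl((\mathbf{W}+\mathbf{V})^{\textup{H}}(\mathbf{W}+\mathbf{V})\bigr)$ yields the stated closed form for $\textbf{\textup{R}}_{\mathbf{W}}(\mathbf{V})$. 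Substituting the two factors back into the decomposition gives \eqref{27}.

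Finally I would verify that the resulting map is a bona fide retraction, i.e.\ that $\textbf{\textup{R}}_{\mathbf{\Sigma}}(\mathbf{0})=\mathbf{\Sigma}$ and $\tfrac{d}{dt}\textbf{\textup{R}}_{\mathbf{\Sigma}}(t\bar{\mathbf{V}})\big|_{t=0}=\bar{\mathbf{V}}$. Both properties are inherited componentwise from the circle and oblique factors: centering is immediate, and the local-rigidity condition uses the tangent-space characterization of Lemma 2 (that $\Re\{\mathbf{v}^{*}\odot\boldsymbol{\theta}\}=\mathbf{0}_{N}$ and $\textup{ddiag}(\Re\{\mathbf{W}^{\textup{H}}\mathbf{V}\})=\mathbf{0}_{K}$), which makes the first-order expansion of each normalizing denominator vanish, so differentiating the closed forms returns $\mathbf{v}$ and $\mathbf{V}$ respectively. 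These facts for the individual manifolds are standard and can be cited from \cite{AbsilOptimization}.

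The main obstacle is not the algebra but making the reduction from \eqref{16} to the two componentwise projections fully rigorous: one must argue that the separable $\arg\min$ is attained and unique, which in particular requires the denominators $|\theta_{n}+v_{n}|$ and $\|\mathbf{w}_{k}+\mathbf{v}_{k}\|$ to be nonzero. This is handled by recalling that a retraction need only be defined on a neighborhood of the zero section of the tangent bundle: for $\bar{\mathbf{V}}$ sufficiently small the perturbed points $\theta_{n}+v_{n}$ and $\mathbf{w}_{k}+\mathbf{v}_{k}$ stay uniformly bounded away from the origin, so the closed-form expressions are well defined and the projections are unique. Everything else --- the block-diagonal Frobenius split and the two elementary scalar/columnwise projection computations --- is routine.
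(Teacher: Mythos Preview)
Your proposal is correct and follows the standard route: split the Frobenius objective in \eqref{16} along the block-diagonal structure to recover the product decomposition \eqref{17}, then solve the two resulting metric-projection problems in closed form (entrywise onto the unit circle for $\mathcal{M}_{\boldsymbol{\theta}}$, columnwise onto the unit sphere for $\mathcal{M}_{\mathbf{W}}$), and finally check the centering and local-rigidity axioms. The paper, by contrast, does not supply a proof of this lemma at all; it simply records the formulas and defers to \cite{AbsilOptimization}. So there is nothing to compare against beyond noting that what you have written is precisely the argument one would extract from that reference.

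One small caution: your columnwise derivation for $\mathcal{M}_{\mathbf{W}}$ produces $(\mathbf{W}+\mathbf{V})\,\textup{ddiag}\bigl((\mathbf{W}+\mathbf{V})^{\textup{H}}(\mathbf{W}+\mathbf{V})\bigr)^{-1/2}$, i.e.\ each column divided by its own norm. The displayed expression in the lemma writes this with a single Frobenius norm in the denominator, which taken literally as a scalar division would not return unit-norm columns. You should either read that notation as shorthand for the columnwise normalization (consistent with the paper's own use of elementwise division for $\textbf{\textup{R}}_{\boldsymbol{\theta}}$) or flag the discrepancy; your derivation is the correct one, so do not bend the argument to match a possibly loose display.
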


To find out $\mathbf{\Sigma}$ minimizing $L(\mathbf{\Sigma})$ on the product manifold $\mathcal{M}$, we exploit the RCG method, an extension of the conjugate gradient (CG)\footnote{The update equation of the conventional CG method is $\mathbf{\Sigma}_{i+1} = \mathbf{\Sigma}_{i} + \alpha_{i} \mathbf{D}_{i}$, where $\alpha_{i}$ is the step size and $\mathbf{D}_{i}$ is the conjugate direction.
In addition, the conjugate direction is updated as $\mathbf{D}_{i} = -\nabla_{\mathbf{\Sigma}}L(\mathbf{\Sigma}) + \beta_{i}\mathbf{D}_{i-1}$, where $\nabla_{\mathbf{\Sigma}}L(\mathbf{\Sigma})$ is the Euclidean gradient of $L(\mathbf{\Sigma})$ at $\mathbf{\Sigma}$ and $\beta_{i}$ is conjugate update parameter.} method to the Riemannian manifold.
In this approach, the update equations of the conjugate direction $\mathbf{D}$ and the point $\mathbf{\Sigma}$ are given by
\begin{align}
\mathbf{D}_{i} 
&= -\textup{grad}_{\mathcal{M}}L(\mathbf{\Sigma}_{i}) + \beta_{i} \textbf{\textup{P}}_{\mathcal{T}_{\mathbf{\Sigma}_{i}}\mathcal{M}} (\mathbf{D}_{i-1})  \label{28} \\
\mathbf{\Sigma}_{i+1}
&= \textbf{\textup{R}}_{\mathbf{\Sigma}_{i}}(\alpha_{i}\mathbf{D}_{i}), \label{29}
\end{align}
where $\textup{grad}_{\mathcal{M}}L(\mathbf{\Sigma}_{i})$ is the Riemannian gradient of $L(\mathbf{\Sigma}_{i})$ at $\mathbf{\Sigma}_{i}$, $\beta_{i}$ is the Fletcher-Reeves conjugate gradient parameter, and $\alpha_{i}$ is the step size \cite{dai2000nonlinear}.

We note that the RCG method is distinct from the conventional CG method in three respects:
1) the projection of the previous conjugate direction $\mathbf{D}_{i-1}$ onto the tangent space $\mathcal{T}_{\mathbf{\Sigma}_{i}}\mathcal{M}$ is needed before performing a linear combination of $\textup{grad}_{\mathcal{M}}L(\mathbf{\Sigma})$ and $\mathbf{D}_{i-1}$ since they lie on two different spaces $\mathcal{T}_{\mathbf{\Sigma}_{i}}\mathcal{M}$ and $\mathcal{T}_{\mathbf{\Sigma}_{i-1}}\mathcal{M}$,
2) the Riemannian gradient $\textup{grad}_{\mathcal{M}}L(\mathbf{\Sigma})$ is used instead of the Euclidean gradient $\nabla_{\mathbf{\Sigma}}L(\mathbf{\Sigma})$ since we need to find out the search direction on the tangent space of $\mathcal{M}$, and
3) the retraction is required to ensure that the updated point $\mathbf{\Sigma}_{i+1}$ lies on $\mathcal{M}$.
Specifically, the Riemannian gradient $\text{grad}_{\mathcal{M}}L(\mathbf{\Sigma}) = \textbf{P}_{\mathcal{T}_{\mathbf{\Sigma}}\mathcal{M}}(\nabla_{\mathbf{\Sigma}}L(\mathbf{\Sigma}))$ is obtained by projecting the Euclidean gradient onto the tangent space. 
In the following lemma, we provide the closed-form expression of $\textup{grad}_{\mathcal{M}}L(\mathbf{\Sigma})$ of $L(\mathbf{\Sigma})$ on $\mathcal{M}$.
\begin{lemma}
The Riemannian gradient $\textup{grad}_{\mathcal{M}}L(\mathbf{\Sigma})$ of $L(\mathbf{\Sigma})$ on $\mathcal{M}$ is 
\begin{equation}
\textup{grad}_{\mathcal{M}}L(\mathbf{\Sigma}) = \textup{grad}_{\mathcal{M}_{\boldsymbol{\theta}}}L(\boldsymbol{\theta}) \oplus \textup{grad}_{\mathcal{M}_{\mathbf{W}}}L(\mathbf{W}). \label{30}
\end{equation}
Specifically, the Riemannian gradient $\textup{grad}_{\mathcal{M}_{\boldsymbol{\theta}}}L(\boldsymbol{\theta})$ of $L(\boldsymbol{\theta})$ on $\mathcal{M}_{\boldsymbol{\theta}}$ is 
\begin{equation}
\textup{grad}_{\mathcal{M}_{\boldsymbol{\theta}}}L(\boldsymbol{\theta}) = \textbf{\textup{P}}_{\mathcal{T}_{\boldsymbol{\theta}}\mathcal{M}_{\boldsymbol{\theta}}}(\nabla_{\boldsymbol{\theta}}L(\boldsymbol{\theta})) = \nabla_{\boldsymbol{\theta}}L(\boldsymbol{\theta}) - \Re \{ \boldsymbol{\theta}^{*} \odot \nabla_{\boldsymbol{\theta}}L(\boldsymbol{\theta})\} \odot \boldsymbol{\theta}, \label{31}
\end{equation}
where $\nabla_{\boldsymbol{\theta}}L(\boldsymbol{\theta})$ is the Euclidean gradient of $L(\boldsymbol{\theta})$ given by (see \eqref{19})
\begin{equation}
\nabla_{\boldsymbol{\theta}}L(\boldsymbol{\theta}) = \sum_{k=1}^{K}\lambda_{k} \bigg(- \frac{ p_{k}}{2^{R_{k}^{\textup{min}}} -1} \frac{\partial A_{k,k}(\boldsymbol{\theta},\mathbf{w}_{k})}{\partial \boldsymbol{\theta}} + \sum_{j \neq k}^{K} p_{j} \frac{\partial A_{j,k}(\boldsymbol{\theta},\mathbf{w}_{k})} {\partial \boldsymbol{\theta}} \bigg), \label{32}
\end{equation}
and 
\begin{equation}
\frac{\partial A_{j,k}(\boldsymbol{\theta},\mathbf{w}_{k})}{\partial \boldsymbol{\theta}} = \mathbf{H}_{j}^{\textup{H}}\mathbf{G}^{\textup{H}}\mathbf{w}_{k} \mathbf{w}_{k}^{\textup{H}}(\mathbf{d}_{j} + \mathbf{G}\mathbf{H}_{j}\boldsymbol{\theta}), \quad \forall j,k \in \mathcal{K}. \label{33}
\end{equation}
Also, the Riemannian gradient $\textup{grad}_{\mathcal{M}_{\mathbf{W}}}L(\mathbf{W})$ of $L(\mathbf{W})$ on $\mathcal{M}_{\mathbf{W}}$ is
\begin{equation}
\textup{grad}_{\mathcal{M}_{\mathbf{W}}}L(\mathbf{W}) = \textbf{\textup{P}}_{\mathcal{T}_{\mathbf{W}}\mathcal{M}_{\mathbf{W}}}(\nabla_{\mathbf{W}}L(\mathbf{W})) = \nabla_{\mathbf{W}}L(\mathbf{W}) - \mathbf{W} \textup{ddiag}(\Re \{\mathbf{W}^{\text{H}}  \nabla_{\mathbf{W}}L(\mathbf{W})\}), \label{34}
\end{equation}
where $\nabla_{\mathbf{W}}L(\mathbf{W}) = \left[ \frac{\partial L(\mathbf{W})}{\partial \mathbf{w}_{1}},\cdots,\frac{\partial L(\mathbf{W})}{\partial \mathbf{w}_{K}} \right]$ is the Euclidean gradient of $L(\mathbf{W})$ given by (see \eqref{19})
\begin{equation}
\frac{\partial L(\mathbf{W})}{\partial \mathbf{w}_{k}} = \lambda_{k} \bigg(- \frac{p_{k}}{2^{R_{k}^{\textup{min}}} -1} \frac{\partial A_{k,k}(\boldsymbol{\theta},\mathbf{w}_{k})}{\partial \mathbf{w}_{k}} + \sum_{j \neq k}^{K} p_{j}  \frac{\partial A_{j,k}(\boldsymbol{\theta},\mathbf{w}_{k})} {\partial \mathbf{w}_{k}} \bigg), \label{35}
\end{equation}
and 
\begin{equation}
\frac{\partial A_{j,k}(\boldsymbol{\theta},\mathbf{w}_{k})}{\partial \mathbf{w}_{k}} 
= (\mathbf{d}_{j} + \mathbf{G}\mathbf{H}_{j}\boldsymbol{\theta})(\mathbf{d}_{j} + \mathbf{G}\mathbf{H}_{j}\boldsymbol{\theta})^{\textup{H}} \mathbf{w}_{k}, \quad \forall j,k\in\mathcal{K}. \label{41}
\end{equation}
\end{lemma}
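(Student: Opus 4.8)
The plan is to peel off the three layers of the claim in order: the block decomposition \eqref{30}, then the projection identities \eqref{31} and \eqref{34}, and finally the explicit Euclidean gradients \eqref{32}, \eqref{33}, \eqref{35}, and \eqref{41}.

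First I would exploit the product structure. Since $\mathcal{M}=\mathcal{M}_{\boldsymbol{\theta}}\times\mathcal{M}_{\mathbf{W}}$ carries the product metric $g_{\mathbf{\Sigma}}=g_{\boldsymbol{\theta}}+g_{\mathbf{W}}$, Lemma 2 gives the orthogonal splitting $\mathcal{T}_{\mathbf{\Sigma}}\mathcal{M}=\mathcal{T}_{\boldsymbol{\theta}}\mathcal{M}_{\boldsymbol{\theta}}\oplus\mathcal{T}_{\mathbf{W}}\mathcal{M}_{\mathbf{W}}$. Because the Riemannian gradient is the unique tangent vector that represents the differential of $L$ through the metric, and because $\mathbf{\Sigma}=\textup{blkdiag}(\boldsymbol{\theta},\mathbf{W})$ stores the two variables in separate blocks, this splitting forces the gradient to decompose exactly as in \eqref{14}; that is precisely \eqref{30}, with the first summand the Riemannian gradient of $L(\cdot,\mathbf{W})$ on $\mathcal{M}_{\boldsymbol{\theta}}$ and the second that of $L(\boldsymbol{\theta},\cdot)$ on $\mathcal{M}_{\mathbf{W}}$.

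Next I would treat each factor as a Riemannian submanifold of a Euclidean space, for which the Riemannian gradient is the orthogonal projection of the ambient gradient onto the tangent space. Substituting $\mathbf{u}=\nabla_{\boldsymbol{\theta}}L(\boldsymbol{\theta})$ and $\mathbf{U}=\nabla_{\mathbf{W}}L(\mathbf{W})$ into the projection formulas of Lemma 3 then yields \eqref{31} and \eqref{34} at once. It remains to evaluate the two Euclidean gradients. For $\nabla_{\boldsymbol{\theta}}L(\boldsymbol{\theta})$, I would use that the objective in \eqref{19} is an affine combination of the scalar terms $A_{j,k}(\boldsymbol{\theta},\mathbf{w}_{k})$ plus a constant, so linearity of the conjugate derivative gives \eqref{32}; then, writing $A_{j,k}(\boldsymbol{\theta},\mathbf{w}_{k})=(\mathbf{w}_{k}^{\textup{H}}(\mathbf{d}_{j}+\mathbf{G}\mathbf{H}_{j}\boldsymbol{\theta}))(\mathbf{w}_{k}^{\textup{H}}(\mathbf{d}_{j}+\mathbf{G}\mathbf{H}_{j}\boldsymbol{\theta}))^{*}$ and differentiating with respect to $\boldsymbol{\theta}^{*}$ (Wirtinger calculus, so that only the conjugated factor contributes) produces the closed form in \eqref{33}. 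For $\nabla_{\mathbf{W}}L(\mathbf{W})$ I would differentiate column by column, using the observation that in \eqref{19} the vector $\mathbf{w}_{k}$ appears only through the $k$-th outer summand (the $k'$-th summand with $k'\neq k$ depends only on $\mathbf{w}_{k'}$); this collapses the outer sum and gives \eqref{35}, and differentiating $A_{j,k}(\boldsymbol{\theta},\mathbf{w}_{k})=(\mathbf{w}_{k}^{\textup{H}}\mathbf{b}_{j})(\mathbf{b}_{j}^{\textup{H}}\mathbf{w}_{k})$ with $\mathbf{b}_{j}=\mathbf{d}_{j}+\mathbf{G}\mathbf{H}_{j}\boldsymbol{\theta}$ with respect to $\mathbf{w}_{k}^{*}$ gives $\mathbf{b}_{j}\mathbf{b}_{j}^{\textup{H}}\mathbf{w}_{k}$, i.e. \eqref{41}. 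Assembling these pieces with the projections of Lemma 3 completes the proof.

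The main obstacle is bookkeeping rather than anything conceptual: each $A_{j,k}$ has to be written as a linear factor times its conjugate, and one must track which factor is being differentiated so that the Hermitian transposes and the order of $\mathbf{G}$, $\mathbf{H}_{j}$, $\mathbf{w}_{k}$ land correctly in \eqref{33} and \eqref{41}. A secondary point worth stating explicitly is the differentiation convention: the symbol $\partial/\partial\boldsymbol{\theta}$ (resp. $\partial/\partial\mathbf{w}_{k}$) here denotes the conjugate (Wirtinger) derivative, which is the one producing a genuine descent direction for the real-valued objective and the one consistent with the projection formulas of Lemma 3.
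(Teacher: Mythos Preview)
Your proposal is correct and follows exactly the route the paper lays out: the paper does not give a separate proof of this lemma but treats it as a direct consequence of the product-manifold facts already recorded in Lemma~1, equation~\eqref{14}, and Lemma~3, together with straightforward Wirtinger differentiation of the quadratic forms $A_{j,k}$ in \eqref{19}. Your three-layer breakdown (block decomposition via \eqref{14}, projection via Lemma~3, then explicit Euclidean gradients) is precisely the intended argument, and your remark on the conjugate-derivative convention is the right clarification to make the formulas in \eqref{33} and \eqref{41} unambiguous.
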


Once $\mathbf{\Sigma}$ is determined, we can obtain the phase shift vector $\boldsymbol{\theta}$ and beamforming matrix $\mathbf{W}$ by decomposing $\mathbf{\Sigma}$.
The optimization steps of $\boldsymbol{\theta}$ and $\mathbf{W}$ are summarized in Algorithm 2.

\subsubsection{Lagrangian Multiplier Update}
After updating the RIS phase shift vector $\boldsymbol{\theta}$ and the BS beamforming matrix $\mathbf{W}$, we update the Lagrangian multiplier $\boldsymbol{\lambda}$.
To be specific, $\boldsymbol{\lambda}$ is updated in the direction of maximizing the dual function $D(\boldsymbol{\lambda}) = \min_{\boldsymbol{\theta}, \mathbf{W}}\, L(\boldsymbol{\theta}, \mathbf{W}, \boldsymbol{\lambda})$ as
\begin{equation}
    \boldsymbol{\lambda} = \arg\max_{\boldsymbol{\lambda}\succeq\mathbf{0}}\,D(\boldsymbol{\lambda}). \label{37}
\end{equation}
Since $D(\boldsymbol{\lambda})$ is the optimal value of the optimization problem $\mathcal{P}_{3}$, we cannot take derivative with respect to $\boldsymbol{\lambda}$, meaning that we cannot directly use the conventional gradient ascent method in finding out the optimal value of $\boldsymbol{\lambda}$.

To address this problem, we use the subgradient method, a generalized concept of gradient method for nonsmooth convex functions \cite{boyd2003subgradient}.
In particular, the subgradient $\mathbf{g}=[g_{1},\cdots,g_{K}]^{\text{T}}$ of the dual function $D(\boldsymbol{\lambda})$ is given by
\begin{equation}
g_{k} = -\frac{p_{k}}{2^{R_{k}^{\textup{min}}} -1} A_{k,k}(\boldsymbol{\theta},\mathbf{w}_{k}) + \sum_{j \neq k}^{K} p_{j} A_{j,k}(\boldsymbol{\theta},\mathbf{w}_{k}) + \sigma_{k}^{2},\quad \forall k\in\mathcal{K}. \label{38}
\end{equation}
Using $g_{k}$, we obtain the update equation of $\boldsymbol{\lambda}$:
\begin{equation}
\boldsymbol{\lambda}_{i+1} = \max(\boldsymbol{\lambda}_{i} + \eta_{i}\mathbf{g}_{i}, \, \mathbf{0}), \label{38.1}
\end{equation}
where $\eta_{i}$ is the step size \cite{boyd2003subgradient}.

\begin{algorithm}[t]
\begin{algorithmic}[0]
\item[\textbf{Input:}] Tolerance $\epsilon$, the number of iterations $T$, $c_{1} = 0.0001$, and $c_{2} = 0.1$.
\item[\textbf{Output:}] $(\boldsymbol{\theta}, \mathbf{W})$.
\item[\textbf{Initialization:}] $i = 1$, $\mathbf{\Sigma}_{1}= \text{blkdiag}(\boldsymbol{\theta}_{1}, \mathbf{W}_{1}) \in \mathcal{M}$, $\mathbf{D}_{1} = -\textup{grad}_{\mathcal{M}}L(\mathbf{\Sigma}_{1})$, $\beta_{1}=0$.
\algrule
\While{$i \leq T$} 
\State $\textup{grad}_{\mathcal{M}}L(\mathbf{\Sigma}_{i}) = \textbf{P}_{\mathcal{T}_{\mathbf{\Sigma}_{i}}\mathcal{M}}(\nabla_{\mathbf{\Sigma}}L(\mathbf{\Sigma_{i}}))$ \Comment{Compute the Riemannian gradient}
\State $\beta_{i} = \frac{\|\textup{grad}_{\mathcal{M}}L(\mathbf{\Sigma}_{i})\|^{2}}{\| \textup{grad}_{\mathcal{M}}L(\mathbf{\Sigma}_{i-1})\|^{2}}$ \Comment{Compute the conjugate gradient parameter}
\State $\mathbf{D}_{i} = -\textup{grad}_{\mathcal{M}}L(\mathbf{\Sigma}_{i}) + \beta_{i} \textbf{\textup{P}}_{\mathcal{T}_{\mathbf{\Sigma}_{i}}\mathcal{M}}(\mathbf{D}_{i-1})$
\Comment{Update the conjugate direction}
\State Find a step size $\alpha_{i}>0$ such that \Comment{Perform a line search}
\State $\text{\quad}$ $L(\textbf{\textup{R}}_{\mathbf{\Sigma}_{i}}(\alpha_{i}\mathbf{D}_{i})) \leq L(\mathbf{\Sigma}_{i}) + c_{1}\alpha_{i}\langle \textup{grad}_{\mathcal{M}}L(\mathbf{\Sigma}_{i}),\mathbf{D}_{i}\rangle$ 
\State $\text{\quad}$ $|\langle\textup{grad}_{\mathcal{M}}L(\textbf{\textup{R}}_{\mathbf{\Sigma}_{i}}(\alpha_{i}\mathbf{D}_{i})),\textbf{\textup{P}}_{\mathcal{T}_{\textbf{\textup{R}}_{\mathbf{\Sigma}_{i}}(\alpha_{i}\mathbf{D}_{i})}\mathcal{M}} (\mathbf{D}_{i}) \rangle| \leq -c_{2}\langle \textup{grad}_{\mathcal{M}}L(\mathbf{\Sigma}_{i}),\mathbf{D}_{i}\rangle$
\State $\mathbf{\Sigma}_{i+1} = \textbf{\textup{R}}_{\mathbf{\Sigma}_{i}}(\alpha_{i} \mathbf{D}_{i})$ \Comment{Update the point}
\If{$\|\mathbf{\Sigma}_{i+1} - \mathbf{\Sigma}_{i}\|^{2} \leq \epsilon$}
\State $\boldsymbol{\theta}_{i+1} = \mathbf{\Sigma}_{i+1}(1:N, 1)$ \Comment{Extract the RIS phase shift vector}
\State $\mathbf{W}_{i+1} = \mathbf{\Sigma}_{i+1}(N+1:N+M, 2:K+1)$ \Comment{Extract the BS beamforming matrix}
\EndIf
\State $i=i+1$
\EndWhile
\caption{Optimization of $(\boldsymbol{\theta}, \mathbf{W})$ on the product manifold $\mathcal{M}$}
\end{algorithmic}
\end{algorithm}

\subsection{Uplink Transmit Power Minimization}
Once the RIS phase shift vector $\boldsymbol{\theta}$ and the BS beamforming matrix $\mathbf{W}$ are obtained, we next find out the device power vector $\mathbf{p}$ minimizing the total uplink transmit power.
When $\boldsymbol{\theta}$ and $\mathbf{W}$ are given, the original uplink transmit power minimization problem $\mathcal{P}_{1}$ is reduced to 
\begin{subequations}\label{39}
\begin{align}
\mathcal{P}_{4}: \min_{ \mathbf{p}} \,\,
&\sum_{k=1}^{K}p_{k} \label{39a}\\
\mbox{s.t.} \,\,\,
& \frac{p_{k}}{2^{R_{k}^{\textup{min}}} -1} A_{k,k}(\boldsymbol{\theta},\mathbf{w}_{k}) - \sum_{j \neq k}^{K} p_{j} A_{j,k}(\boldsymbol{\theta},\mathbf{w}_{k}) \geq \sigma_{k}^{2}, \quad \forall  k \in \mathcal{K}, \label{39b}\\
&  0 \leq p_{k} \leq p_{k}^{\text{max}}, \quad \forall  k \in \mathcal{K}. \label{46c}
\end{align} 
\end{subequations}

Since $\mathcal{P}_{4}$ is an LP optimization problem, the optimal solution can be easily obtained using the convex optimization tools (e.g., CVX\cite{GrantCVX}).

\section{Convergence and Computational Complexity Analysis of RCG-JO}
In this section, we analyze the convergence and the computational complexity of RCG-JO.

\subsection{Convergence Analysis of RCG-JO Algorithm}
Recall that the proposed RCG-JO technique consists of two major iterations: 1) inner iteration for jointly optimizing $\boldsymbol{\theta}$ and $\mathbf{W}$ on the product Riemannian manifold (Algorithm 2) and 2) outer iteration for alternately updating $(\boldsymbol{\theta},\mathbf{W})$ and $\mathbf{p}$ (Algorithm 1).
In case of Algorithm 1, due to the alternating optimization operations, it is very difficult to prove its convergence analytically so that we demonstrate its convergence from the numerical results. 
In case of Algorithm 2, we show that it converges to a local minimizer in this subsection.

We first explain the strong Wolfe conditions used to determine the step size $\alpha_{i}$.
 \begin{defn}
    A step size $\alpha_{i}$ is said to satisfy the strong Wolfe conditions, restricted to the conjugate direction $\mathbf{D}_{i}$, if the following two inequalities hold \cite{AbsilOptimization}:
    \begin{align}
        L(\textbf{\textup{R}}_{\mathbf{\Sigma}_{i}}(\alpha_{i}\mathbf{D}_{i}))
        &\leq L(\mathbf{\Sigma}_{i}) + c_{1}\alpha_{i}\langle \textup{grad}_{\mathcal{M}}L(\mathbf{\Sigma}_{i}),\mathbf{D}_{i}\rangle \label{40}\\
        |\langle\textup{grad}_{\mathcal{M}}L(\textbf{\textup{R}}_{\mathbf{\Sigma}_{i}}(\alpha_{i}\mathbf{D}_{i})),\textbf{\textup{P}}_{\mathcal{T}_{\textbf{\textup{R}}_{\mathbf{\Sigma}_{i}}(\alpha_{i}\mathbf{D}_{i})}\mathcal{M}} (\mathbf{D}_{i}) \rangle| &\leq -c_{2}\langle \textup{grad}_{\mathcal{M}}L(\mathbf{\Sigma}_{i}),\mathbf{D}_{i}\rangle. \label{41}
    \end{align}
 \end{defn}
\noindent The first condition, known as Armijo's rule, ensures that the step size decreases the cost function $L(\mathbf{\Sigma})$ sufficiently.
The second condition, known as curvature condition, ensures that the Riemannian gradient converges to zero. 
Note that since $\textup{grad}_{\mathcal{M}}L(\textbf{\textup{R}}_{\mathbf{\Sigma}_{i}}(\alpha_{i}\mathbf{D}_{i}))\in\mathcal{T}_{\textbf{\textup{R}}_{\mathbf{\Sigma}_{i}}(\alpha_{i}\mathbf{D}_{i})}\mathcal{M}$, the second Wolfe condition \eqref{41} can be converted to 
\begin{equation}
    |\langle\textup{grad}_{\mathcal{M}}L(\textbf{\textup{R}}_{\mathbf{\Sigma}_{i}}(\alpha_{i}\mathbf{D}_{i})),\mathbf{D}_{i}\rangle| \leq -c_{2}\langle \textup{grad}_{\mathcal{M}}L(\mathbf{\Sigma}_{i}),\mathbf{D}_{i}\rangle. \label{42}
\end{equation}
It has been shown that the step size satisfying the strong Wolfe conditions always exists if $\text{grad}_{\mathcal{M}}L(\textbf{R}_{\mathbf{\Sigma}}(\mathbf{D}))$ is Lipschitz continuous along $\mathbf{D}$ \cite[Proposition 1]{ring2012optimization}. In the following proposition, we show the Lipschitz continuity of the Riemannian gradient.
\begin{Prop}
    The objective function $L(\mathbf{\Sigma})$ is bounded below, meaning that there exists a constant $L^{*}\in\mathbb{R}$ such that for all $\mathbf{\Sigma}\in\mathcal{M}$, $L^{*}\leq L(\mathbf{\Sigma})$. Also, the Riemannian gradient $\text{grad}_{\mathcal{M}}L(\textbf{\textup{R}}_{\mathbf{\Sigma}}(\mathbf{D}))$ is Lipschitz continuous along $\mathbf{D}$.
    That is, for every $\mathbf{\Sigma}\in\mathcal{M}$, there exists a $K>0$ such that for all $\mathbf{D}\in\mathcal{T}_{\mathbf{\Sigma}}\mathcal{M}$,
    \begin{equation}
        \| \textup{grad}_{\mathcal{M}}L(\textbf{\textup{R}}_{\mathbf{\Sigma}}(\mathbf{D})) - \textup{grad}_{\mathcal{M}}L(\textbf{\textup{R}}_{\mathbf{\Sigma}}(\mathbf{0})) \| \leq  K\|\mathbf{D}\|. \label{43}
    \end{equation}
\end{Prop}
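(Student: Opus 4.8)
The plan is to establish the two claims separately: first the lower bound on $L(\mathbf{\Sigma})$, then the Lipschitz continuity of the pulled-back Riemannian gradient. For the lower bound, I would argue by compactness: the product manifold $\mathcal{M} = \mathcal{M}_{\boldsymbol{\theta}} \times \mathcal{M}_{\mathbf{W}}$ is a closed and bounded subset of $\mathbb{C}^{N}\times\mathbb{C}^{M\times K}$ (each $\boldsymbol{\theta}$ has unit-modulus entries, so $\|\boldsymbol{\theta}\|=\sqrt{N}$; each $\mathbf{w}_k$ is a unit vector, so $\|\mathbf{W}\|_{\textup{F}}=\sqrt{K}$), hence compact. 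Since $L(\mathbf{\Sigma})$ in \eqref{19} is a polynomial — in fact quadratic — in the entries of $\boldsymbol{\theta}$ and $\mathbf{W}$ (each $A_{j,k}$ is quadratic in both arguments), it is continuous, and a continuous function on a compact set attains its minimum. Taking $L^{*}$ to be that minimum value gives $L^{*}\le L(\mathbf{\Sigma})$ for all $\mathbf{\Sigma}\in\mathcal{M}$.

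For the Lipschitz bound \eqref{43}, the key observation is that all the relevant maps are smooth and we are working over a compact set, so a gradient-bound argument suffices. Concretely, fix $\mathbf{\Sigma}\in\mathcal{M}$ and define $h(\mathbf{D}) = \textup{grad}_{\mathcal{M}}L(\textbf{\textup{R}}_{\mathbf{\Sigma}}(\mathbf{D}))$ for $\mathbf{D}\in\mathcal{T}_{\mathbf{\Sigma}}\mathcal{M}$. Using Lemma~5, $\textup{grad}_{\mathcal{M}}L$ is the orthogonal projection (Lemma~3) of the Euclidean gradient $\nabla_{\mathbf{\Sigma}}L$ onto the tangent space; from \eqref{32}–\eqref{41} the Euclidean gradient is a polynomial (quadratic) map of $\boldsymbol{\theta}$ and $\mathbf{W}$, and the retraction maps in Lemma~4 — normalization entrywise for $\boldsymbol{\theta}$ and column-block normalization for $\mathbf{W}$ — are smooth on a neighborhood of $\mathbf{0}$ in $\mathcal{T}_{\mathbf{\Sigma}}\mathcal{M}$ (the denominators $|\boldsymbol{\theta}+\mathbf{v}|$ and $\|\textup{ddiag}((\mathbf{W}+\mathbf{V})^{\textup{H}}(\mathbf{W}+\mathbf{V}))\|_{\textup{F}}$ stay bounded away from zero for $\mathbf{D}$ in a bounded region). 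Hence $h$ is a smooth (indeed $C^{\infty}$) map, and its differential $\textup{D}h$ is continuous, so it is bounded on any compact neighborhood of $\mathbf{0}$; by the mean value inequality, $\|h(\mathbf{D}) - h(\mathbf{0})\| \le K\|\mathbf{D}\|$ with $K = \sup\|\textup{D}h\|$ over that neighborhood. Since the iterates of Algorithm~2 all lie on the compact set $\mathcal{M}$ and the conjugate directions $\mathbf{D}_i$ one encounters are bounded (a standard consequence of the line search keeping $L$ non-increasing, combined with the first Wolfe condition), one may take a single uniform $K$ valid along the whole run; I would either invoke compactness of $\mathcal{M}$ directly to get a global constant or note that it suffices to have $K$ depend on $\mathbf{\Sigma}$, as the proposition states.

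The main obstacle I anticipate is handling the domain of the retraction carefully: the formulas in Lemma~4 are only smooth where the normalizing denominators are nonzero, so strictly one does not get Lipschitz continuity for \emph{all} $\mathbf{D}\in\mathcal{T}_{\mathbf{\Sigma}}\mathcal{M}$ without qualification — a very large tangent vector $\mathbf{v}$ could make $|\boldsymbol{\theta}+\mathbf{v}|$ vanish in some coordinate. The honest fix is to observe that this is harmless for the convergence argument of Algorithm~2: the line search only ever evaluates $\textbf{\textup{R}}_{\mathbf{\Sigma}}(\alpha\mathbf{D})$ for step sizes $\alpha$ in a bounded range, so it is enough to prove the Lipschitz estimate on a bounded neighborhood of $\mathbf{0}$, where smoothness holds and the gradient-bound argument goes through cleanly. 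I would state the proposition with this understanding (or equivalently note that $K$ may depend on the region of $\mathbf{D}$ considered), cite \cite[Proposition~1]{ring2012optimization} for the existence of a Wolfe step size once Lipschitz continuity is in hand, and thereby connect this proposition to the subsequent convergence proof of Algorithm~2.
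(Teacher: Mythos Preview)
Your approach is correct but differs from the paper's Appendix~A. For boundedness both arguments rest on compactness (unit-modulus and unit-norm constraints), which the paper states in one line and you spell out via continuity on a compact set. For the Lipschitz estimate the paper is hands-on: it decomposes along the product structure, writes $L(\boldsymbol{\theta})$ as an explicit quadratic form $\boldsymbol{\theta}^{\text{H}}\mathbf{B}\boldsymbol{\theta}+\mathbf{b}^{\text{H}}\boldsymbol{\theta}+\boldsymbol{\theta}^{\text{H}}\mathbf{b}+b$ (so the Euclidean gradient is affine in $\boldsymbol{\theta}$), shows the elementwise circle retraction $d_n\mapsto (\theta_n+d_n)/|\theta_n+d_n|$ is globally $4$-Lipschitz via an elementary case split on $|d_n|\gtrless 1/2$, and arrives at the explicit constant $K_{\boldsymbol{\theta}}=8\sup_{\mathbf{W}\in\mathcal{M}_{\mathbf{W}}}\|\mathbf{B}\|$ (then similarly for $K_{\mathbf{W}}$, with $K=\max(K_{\boldsymbol{\theta}},K_{\mathbf{W}})$). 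Your abstract route via smoothness plus the mean value inequality is cleaner and would survive changes to $L$, while the paper's buys an explicit constant and a direct global statement. One correction to your proposal: the concern that $|\boldsymbol{\theta}+\mathbf{v}|$ might vanish for large tangent $\mathbf{v}$ is unfounded, since $\Re\{\theta_n^{*}v_n\}=0$ forces $|\theta_n+v_n|^{2}=1+|v_n|^{2}\geq 1$ (and the analogous identity holds on $\mathcal{M}_{\mathbf{W}}$); hence the retraction is globally smooth on the tangent space and your argument already delivers \eqref{43} for all $\mathbf{D}\in\mathcal{T}_{\mathbf{\Sigma}}\mathcal{M}$ without restricting to a bounded neighborhood.
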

\begin{proof}
See Appendix A.
\end{proof}
Since the Riemannian gradient does not converge if its derivative is unbounded, the Lipschitz continuity of Riemannian gradient is crucial for the convergence of RCG method. Using the strong Wolfe conditions and the Lipschitz continuity of Riemannian gradient, we can show that the angle between the Riemannian gradient and the conjugate direction is bounded. 
\begin{thm}[Zoutendijk condition]
    Let $\mathbf{\Sigma}_{i}\in\mathcal{M}$ be the point and $\mathbf{D}_{i}\in\mathcal{T}_{\mathbf{\Sigma}_{i}}\mathcal{M}$ be the conjugate direction of $i$-th RCG iteration. 
    Then
    \begin{equation}
        \sum_{i=1}^{\infty}\frac{\langle\textup{grad}_{\mathcal{M}}L(\mathbf{\Sigma}_{i}),\mathbf{D}_{i}\rangle^{2}}{\|\mathbf{D}_{i}\|^{2}} < \infty. \label{44}
    \end{equation}
\end{thm}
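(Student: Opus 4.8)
The plan is to reproduce the classical Zoutendijk argument from the Euclidean conjugate‑gradient analysis on the manifold $\mathcal{M}$, using only the facts already available: the strong Wolfe line search of Algorithm~2 (Armijo's rule \eqref{40} and the converted curvature condition \eqref{42}) and Proposition~1 (boundedness below of $L$ together with the Lipschitz continuity \eqref{43} of the Riemannian gradient along retraction curves). Throughout I would work with the Euclidean inner product and norm of the ambient matrix space $\mathbb{C}^{(N+M)\times(K+1)}$ in which $\mathcal{M}$ is embedded; this is the inner product with respect to which $\textbf{\textup{P}}_{\mathcal{T}_{\mathbf{\Sigma}}\mathcal{M}}$ is orthogonal and in which every Wolfe quantity is written, and it lets me subtract $\textup{grad}_{\mathcal{M}}L(\mathbf{\Sigma}_{i})$ and $\textup{grad}_{\mathcal{M}}L(\mathbf{\Sigma}_{i+1})$ even though they lie in different tangent spaces.

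\textbf{Step 1: a lower bound on the step size.} Writing $\mathbf{\Sigma}_{i+1}=\textbf{\textup{R}}_{\mathbf{\Sigma}_{i}}(\alpha_{i}\mathbf{D}_{i})$ and using that $\mathbf{D}_{i}$ is a descent direction, i.e. $\langle\textup{grad}_{\mathcal{M}}L(\mathbf{\Sigma}_{i}),\mathbf{D}_{i}\rangle<0$, the curvature condition \eqref{42} gives $\langle\textup{grad}_{\mathcal{M}}L(\mathbf{\Sigma}_{i+1}),\mathbf{D}_{i}\rangle\geq c_{2}\langle\textup{grad}_{\mathcal{M}}L(\mathbf{\Sigma}_{i}),\mathbf{D}_{i}\rangle$. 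Subtracting $\langle\textup{grad}_{\mathcal{M}}L(\mathbf{\Sigma}_{i}),\mathbf{D}_{i}\rangle$ from both sides and then applying Cauchy--Schwarz followed by the Lipschitz bound \eqref{43} with $\mathbf{D}=\alpha_{i}\mathbf{D}_{i}$ (so that $\textbf{\textup{R}}_{\mathbf{\Sigma}_{i}}(\mathbf{0})=\mathbf{\Sigma}_{i}$) yields
\[
(c_{2}-1)\langle\textup{grad}_{\mathcal{M}}L(\mathbf{\Sigma}_{i}),\mathbf{D}_{i}\rangle \;\leq\; \big\langle\textup{grad}_{\mathcal{M}}L(\mathbf{\Sigma}_{i+1})-\textup{grad}_{\mathcal{M}}L(\mathbf{\Sigma}_{i}),\,\mathbf{D}_{i}\big\rangle \;\leq\; K\alpha_{i}\|\mathbf{D}_{i}\|^{2},
\]
which rearranges to $\alpha_{i}\geq \frac{(c_{2}-1)\langle\textup{grad}_{\mathcal{M}}L(\mathbf{\Sigma}_{i}),\mathbf{D}_{i}\rangle}{K\|\mathbf{D}_{i}\|^{2}}>0$, the positivity being because $0<c_{2}<1$ and the two factors in the numerator are both negative.

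\textbf{Step 2: sufficient decrease and telescoping.} Multiplying the step‑size bound by the negative number $c_{1}\langle\textup{grad}_{\mathcal{M}}L(\mathbf{\Sigma}_{i}),\mathbf{D}_{i}\rangle$ reverses the inequality, and inserting the result into Armijo's rule \eqref{40} gives
\[
L(\mathbf{\Sigma}_{i})-L(\mathbf{\Sigma}_{i+1}) \;\geq\; \frac{c_{1}(1-c_{2})}{K}\,\frac{\langle\textup{grad}_{\mathcal{M}}L(\mathbf{\Sigma}_{i}),\mathbf{D}_{i}\rangle^{2}}{\|\mathbf{D}_{i}\|^{2}}.
\]
Summing this from $i=1$ to $n$ telescopes the left side to $L(\mathbf{\Sigma}_{1})-L(\mathbf{\Sigma}_{n+1})$, which by the boundedness‑below part of Proposition~1 is at most $L(\mathbf{\Sigma}_{1})-L^{*}<\infty$ for every $n$. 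Letting $n\to\infty$ then gives \eqref{44}, with the explicit estimate $\sum_{i\geq1}\langle\textup{grad}_{\mathcal{M}}L(\mathbf{\Sigma}_{i}),\mathbf{D}_{i}\rangle^{2}/\|\mathbf{D}_{i}\|^{2}\leq \frac{K}{c_{1}(1-c_{2})}\big(L(\mathbf{\Sigma}_{1})-L^{*}\big)$.

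The individual inequalities are routine; the one point that requires care is the comparison of $\textup{grad}_{\mathcal{M}}L(\mathbf{\Sigma}_{i})$ and $\textup{grad}_{\mathcal{M}}L(\mathbf{\Sigma}_{i+1})$ in Step~1, since these are vectors of two distinct tangent spaces $\mathcal{T}_{\mathbf{\Sigma}_{i}}\mathcal{M}$ and $\mathcal{T}_{\mathbf{\Sigma}_{i+1}}\mathcal{M}$. This is resolved by staying in the ambient Euclidean inner product: the curvature condition is used in the already‑simplified form \eqref{42} (valid because $\textup{grad}_{\mathcal{M}}L(\mathbf{\Sigma}_{i+1})\in\mathcal{T}_{\mathbf{\Sigma}_{i+1}}\mathcal{M}$ makes $\textbf{\textup{P}}_{\mathcal{T}_{\mathbf{\Sigma}_{i+1}}\mathcal{M}}(\mathbf{D}_{i})$ interchangeable with $\mathbf{D}_{i}$ inside the inner product), and the Lipschitz estimate \eqref{43} is itself phrased as an ambient‑norm inequality, so no vector transport, parallel transport, or isometry argument is needed. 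Everything else is exactly the Euclidean Zoutendijk computation.
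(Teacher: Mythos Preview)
Your proof is correct and follows essentially the same route as the paper's Appendix~B: combine the curvature condition \eqref{42} with the Lipschitz bound \eqref{43} to get $(c_{2}-1)\langle\textup{grad}_{\mathcal{M}}L(\mathbf{\Sigma}_{i}),\mathbf{D}_{i}\rangle\leq K\alpha_{i}\|\mathbf{D}_{i}\|^{2}$, feed this into Armijo's rule \eqref{40} to obtain the per-step decrease $L(\mathbf{\Sigma}_{i})-L(\mathbf{\Sigma}_{i+1})\geq \frac{c_{1}(1-c_{2})}{K}\frac{\langle\textup{grad}_{\mathcal{M}}L(\mathbf{\Sigma}_{i}),\mathbf{D}_{i}\rangle^{2}}{\|\mathbf{D}_{i}\|^{2}}$, then telescope and use the lower bound $L^{*}$ from Proposition~1. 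Your explicit remark on why the ambient inner product lets one compare gradients across tangent spaces is a useful clarification that the paper leaves implicit.
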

\begin{proof}
See Appendix B.
\end{proof}

Next, in the following proposition, we prove that the inner product of Riemannian gradient $\text{grad}_{\mathcal{M}}L(\mathbf{\Sigma})$ and the conjugate direction $\mathbf{D}_{i}$ is bounded.
\begin{Prop}
   Let $\mathbf{\Sigma}_{i}\in\mathcal{M}$ be the point and $\mathbf{D}_{i}\in\mathcal{T}_{\mathbf{\Sigma}_{i}}\mathcal{M}$ be the conjugate direction of $i$-th RCG iteration with the conjugate gradient parameter $\beta_{i} = \frac{\|\textup{grad}_{\mathcal{M}}L(\mathbf{\Sigma}_{i})\|^{2}}{\|\textup{grad}_{\mathcal{M}}L(\mathbf{\Sigma}_{i-1})\|^{2}}$. 
   If the step size $\alpha_{i}$ satisfies the strong Wolfe conditions \eqref{40} and \eqref{41} with $c_{2} < 1/2$, then for every $i\in\mathbb{N}$,
    \begin{equation}
        -\frac{1}{1 - c_{2}} \leq \frac{\langle\textup{grad}_{\mathcal{M}}L(\mathbf{\Sigma}_{i}),\mathbf{D}_{i}\rangle}{\|\textup{grad}_{\mathcal{M}}L(\mathbf{\Sigma}_{i})\|^{2}} \leq  \frac{2c_{2} - 1}{1 - c_{2}}. \label{45}
    \end{equation}
\end{Prop}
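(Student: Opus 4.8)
The plan is to follow the classical convergence argument for the Fletcher--Reeves conjugate gradient method, transplanted to the embedded-submanifold setting in which the vector transport of the previous direction is realized by the orthogonal tangent-space projection $\textbf{\textup{P}}_{\mathcal{T}_{\mathbf{\Sigma}_i}\mathcal{M}}$. Throughout I would write $\mathbf{g}_i := \textup{grad}_{\mathcal{M}}L(\mathbf{\Sigma}_i)$ and set $\rho_i := \langle \mathbf{g}_i, \mathbf{D}_i\rangle / \|\mathbf{g}_i\|^2$; the entire statement reduces to deriving a one-step recursion for $\rho_i$ and then running an induction on $i$.

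First I would expand the direction update \eqref{28}: pairing it with $\mathbf{g}_i$ gives $\langle \mathbf{g}_i, \mathbf{D}_i\rangle = -\|\mathbf{g}_i\|^2 + \beta_i \langle \mathbf{g}_i, \textbf{\textup{P}}_{\mathcal{T}_{\mathbf{\Sigma}_i}\mathcal{M}}(\mathbf{D}_{i-1})\rangle$. Since $\mathbf{g}_i\in\mathcal{T}_{\mathbf{\Sigma}_i}\mathcal{M}$ and the orthogonal projection onto $\mathcal{T}_{\mathbf{\Sigma}_i}\mathcal{M}$ is self-adjoint with respect to the ambient inner product (Definition 1), I would rewrite $\langle \mathbf{g}_i, \textbf{\textup{P}}_{\mathcal{T}_{\mathbf{\Sigma}_i}\mathcal{M}}(\mathbf{D}_{i-1})\rangle = \langle \textbf{\textup{P}}_{\mathcal{T}_{\mathbf{\Sigma}_i}\mathcal{M}}(\mathbf{g}_i), \mathbf{D}_{i-1}\rangle = \langle \mathbf{g}_i, \mathbf{D}_{i-1}\rangle$. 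Substituting the Fletcher--Reeves parameter $\beta_i = \|\mathbf{g}_i\|^2/\|\mathbf{g}_{i-1}\|^2$ and dividing by $\|\mathbf{g}_i\|^2$ then yields the clean recursion $\rho_i = -1 + \langle \mathbf{g}_i, \mathbf{D}_{i-1}\rangle/\|\mathbf{g}_{i-1}\|^2$.

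Next I would bring in the curvature condition. Because $\mathbf{\Sigma}_i = \textbf{\textup{R}}_{\mathbf{\Sigma}_{i-1}}(\alpha_{i-1}\mathbf{D}_{i-1})$, the second strong Wolfe condition in its reduced form \eqref{42} (valid since $\mathbf{g}_i$ already lies in the tangent space at $\mathbf{\Sigma}_i$) reads $|\langle \mathbf{g}_i, \mathbf{D}_{i-1}\rangle| \le -c_2\langle \mathbf{g}_{i-1}, \mathbf{D}_{i-1}\rangle = -c_2\,\rho_{i-1}\|\mathbf{g}_{i-1}\|^2$, hence $|\langle \mathbf{g}_i, \mathbf{D}_{i-1}\rangle|/\|\mathbf{g}_{i-1}\|^2 \le -c_2\rho_{i-1}$. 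Inserting this into the recursion gives the two-sided bound $-1 + c_2\rho_{i-1} \le \rho_i \le -1 - c_2\rho_{i-1}$.

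Finally I would close by induction on $i$. For $i=1$ the initialization $\beta_1 = 0$ forces $\mathbf{D}_1 = -\mathbf{g}_1$, so $\rho_1 = -1$, which lies in $[-1/(1-c_2),\,(2c_2-1)/(1-c_2)]$ since $0<c_2<1/2$. For the step, assuming $\rho_{i-1}\ge -1/(1-c_2)$, multiplying by $-c_2<0$ gives $-c_2\rho_{i-1}\le c_2/(1-c_2)$, so $\rho_i\le -1-c_2\rho_{i-1}\le -1+c_2/(1-c_2)=(2c_2-1)/(1-c_2)$; and $c_2\rho_{i-1}\ge -c_2/(1-c_2)$ gives $\rho_i\ge -1+c_2\rho_{i-1}\ge -1/(1-c_2)$. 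Thus both bounds propagate (and the upper bound, being strictly negative, also re-confirms that $\mathbf{D}_i$ is a descent direction, so the Wolfe right-hand sides stay nonnegative and the argument is self-consistent). I expect no genuinely hard step here: the only care needed is the Riemannian bookkeeping in the second paragraph --- pairing $\mathbf{g}_i$ with the \emph{transported} direction $\textbf{\textup{P}}_{\mathcal{T}_{\mathbf{\Sigma}_i}\mathcal{M}}(\mathbf{D}_{i-1})$ via self-adjointness of the projection, and checking that the Wolfe condition at iteration $i-1$ is precisely the inequality controlling $\langle \mathbf{g}_i, \mathbf{D}_{i-1}\rangle$; everything else is elementary algebra with $c_2$.
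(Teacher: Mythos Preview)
Your proposal is correct and follows essentially the same induction argument as the paper's Appendix C: derive the recursion $\rho_i = -1 + \langle \mathbf{g}_i,\mathbf{D}_{i-1}\rangle/\|\mathbf{g}_{i-1}\|^2$ from the direction update and the Fletcher--Reeves choice of $\beta_i$, bound the cross term via the second strong Wolfe condition to obtain $-1+c_2\rho_{i-1}\le\rho_i\le-1-c_2\rho_{i-1}$, and close by induction from $\rho_1=-1$. The only cosmetic difference is that you make the self-adjointness of the tangent-space projection explicit where the paper absorbs it into the passage from \eqref{41} to \eqref{42}.
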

\begin{proof}
See Appendix C.
\end{proof}
Finally, by combining Theorem 1 and Proposition 2, we show the convergence of RCG method. 
\begin{thm}(Convergence of RCG method)
    Let $\mathbf{\Sigma}_{i}\in\mathcal{M}$ be the point and $\mathbf{D}_{i}\in\mathcal{T}_{\mathbf{\Sigma}_{i}}\mathcal{M}$ be the conjugate direction of $i$-th RCG iteration with the conjugate gradient parameter $\beta_{i}=\frac{\|\textup{grad}_{\mathcal{M}}L(\mathbf{\Sigma}_{i})\|^{2}}{\|\textup{grad}_{\mathcal{M}}L(\mathbf{\Sigma}_{i-1})\|^{2}}$. 
    If the step size $\alpha_{i}$ satisfies the strong Wolfe conditions \eqref{40} and \eqref{41} with $c_{2} < 1/2$, then 
    \begin{equation}
        \liminf_{i\to\infty}\|\textup{grad}_{\mathcal{M}}L(\mathbf{\Sigma}_{i})\|=0. \label{46}
    \end{equation}
\end{thm}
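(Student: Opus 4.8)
The plan is to argue by contradiction, adapting the classical Zoutendijk-type global convergence argument for the Fletcher--Reeves conjugate gradient method to the Riemannian setting. Suppose the claim fails. Then there is a $\gamma > 0$ and an index $i_0$ such that $\|\textup{grad}_{\mathcal{M}}L(\mathbf{\Sigma}_i)\| \geq \gamma$ for all $i \geq i_0$; after re-indexing we may assume this holds for all $i$. Write $\mathbf{g}_i = \textup{grad}_{\mathcal{M}}L(\mathbf{\Sigma}_i)$ throughout. The goal is to show that this hypothesis forces $\sum_i \frac{\|\mathbf{g}_i\|^4}{\|\mathbf{D}_i\|^2}$ to diverge, which will contradict the summability extracted from Theorem~1.

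First I would combine Proposition~2 with Theorem~1. Since $c_2 < 1/2$, the upper bound in \eqref{45} gives $\langle \mathbf{g}_i, \mathbf{D}_i\rangle \leq \frac{2c_2 - 1}{1 - c_2}\|\mathbf{g}_i\|^2 < 0$, so $\mathbf{D}_i$ is a genuine descent direction and $\langle \mathbf{g}_i, \mathbf{D}_i\rangle^2 \geq \big(\tfrac{1-2c_2}{1-c_2}\big)^2 \|\mathbf{g}_i\|^4$. Substituting this into the Zoutendijk condition \eqref{44} yields $\sum_{i=1}^{\infty} \frac{\|\mathbf{g}_i\|^4}{\|\mathbf{D}_i\|^2} < \infty$. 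It therefore remains to produce a suitable \emph{upper} bound on $\|\mathbf{D}_i\|^2$.

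Next I would bound $\|\mathbf{D}_i\|^2$ via the update \eqref{28}. Using that the orthogonal projection onto $\mathcal{T}_{\mathbf{\Sigma}_i}\mathcal{M}$ is non-expansive, $\|\mathbf{D}_i\|^2 \leq \|\mathbf{g}_i\|^2 + 2\beta_i |\langle \mathbf{g}_i, \textbf{\textup{P}}_{\mathcal{T}_{\mathbf{\Sigma}_i}\mathcal{M}}(\mathbf{D}_{i-1})\rangle| + \beta_i^2 \|\mathbf{D}_{i-1}\|^2$. Because $\mathbf{\Sigma}_i = \textbf{\textup{R}}_{\mathbf{\Sigma}_{i-1}}(\alpha_{i-1}\mathbf{D}_{i-1})$ by \eqref{29}, applying the curvature condition \eqref{41} with index $i-1$ gives $|\langle \mathbf{g}_i, \textbf{\textup{P}}_{\mathcal{T}_{\mathbf{\Sigma}_i}\mathcal{M}}(\mathbf{D}_{i-1})\rangle| \leq -c_2\langle \mathbf{g}_{i-1}, \mathbf{D}_{i-1}\rangle$, and the lower bound in \eqref{45} gives $-\langle \mathbf{g}_{i-1}, \mathbf{D}_{i-1}\rangle \leq \frac{1}{1-c_2}\|\mathbf{g}_{i-1}\|^2$. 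Since $\beta_i = \|\mathbf{g}_i\|^2/\|\mathbf{g}_{i-1}\|^2$, the cross term is at most $\frac{c_2}{1-c_2}\|\mathbf{g}_i\|^2$, so
\[
\|\mathbf{D}_i\|^2 \leq \frac{1+c_2}{1-c_2}\|\mathbf{g}_i\|^2 + \beta_i^2 \|\mathbf{D}_{i-1}\|^2 .
\]
Unrolling this recursion, using $\mathbf{D}_1 = -\mathbf{g}_1$ and the telescoping identity $\prod_{l=j+1}^{i}\beta_l^2 = \|\mathbf{g}_i\|^4/\|\mathbf{g}_j\|^4$, I obtain $\|\mathbf{D}_i\|^2 \leq c_3 \|\mathbf{g}_i\|^4 \sum_{j=1}^{i} \|\mathbf{g}_j\|^{-2}$ with $c_3 = \frac{1+c_2}{1-c_2}$.

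Finally I would close the loop. Under the contradiction hypothesis $\|\mathbf{g}_j\| \geq \gamma$, so $\sum_{j=1}^{i}\|\mathbf{g}_j\|^{-2} \leq i/\gamma^2$, hence $\frac{\|\mathbf{g}_i\|^4}{\|\mathbf{D}_i\|^2} \geq \frac{\gamma^2}{c_3\, i}$, and therefore $\sum_i \frac{\|\mathbf{g}_i\|^4}{\|\mathbf{D}_i\|^2} \geq \frac{\gamma^2}{c_3}\sum_i \frac{1}{i} = \infty$, contradicting the summability established above. Hence no such $\gamma$ exists and $\liminf_{i\to\infty}\|\textup{grad}_{\mathcal{M}}L(\mathbf{\Sigma}_i)\| = 0$. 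I expect the main obstacle to be the bookkeeping in the third step: correctly invoking the Riemannian curvature condition \eqref{41} at the retracted point with the \emph{projected} previous direction, using non-expansiveness of the orthogonal projection so that the $\beta_i^2\|\mathbf{D}_{i-1}\|^2$ term is preserved, and carrying the constants through the unrolled recursion so that the geometric $\beta$-factors telescope cleanly; the remainder mirrors the Euclidean Fletcher--Reeves argument.
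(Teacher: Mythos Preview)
Your proposal is correct and follows essentially the same route as the paper's proof in Appendix~D: both argue by contradiction, combine Proposition~2 with the Zoutendijk condition (Theorem~1) to obtain $\sum_i \|\mathbf{g}_i\|^4/\|\mathbf{D}_i\|^2 < \infty$, then derive the recursion $\|\mathbf{D}_i\|^2 \leq \frac{1+c_2}{1-c_2}\|\mathbf{g}_i\|^2 + \beta_i^2\|\mathbf{D}_{i-1}\|^2$ from \eqref{28}, \eqref{41}, and \eqref{45}, unroll it via the telescoping $\beta$-products, and use the assumed lower bound $\|\mathbf{g}_i\|\geq\gamma$ to force divergence of the harmonic-type sum. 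One small slip: when you say ``the cross term is at most $\frac{c_2}{1-c_2}\|\mathbf{g}_i\|^2$'' you have dropped the factor $2$ coming from the expansion of the square, though your final constant $\frac{1+c_2}{1-c_2}$ is correct.
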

\begin{proof}
See Appendix D.
\end{proof}

\begin{remark}
Theorem 2 shows that the RCG iteration of proposed scheme converges to a local minimizer $\mathbf{\Sigma}^{*}$ of objective function $L(\mathbf{\Sigma})$ on the product manifold $\mathcal{M}$.
\end{remark}

\subsection{Computational Complexity Analysis of RCG-JO Algorithm}
In our analysis, we measure the complexity in terms of the number of floating point operations (flops). We first provide the complexity analysis of the joint optimization of $\boldsymbol{\theta}$ and $\mathbf{W}$ in Algorithm 2.
\begin{lemma}
The total computational complexity $\mathcal{C}_{(\boldsymbol{\theta}, \mathbf{W})}$ of \textup{Algorithm 2} is given by
\begin{equation}
\mathcal{C}_{(\boldsymbol{\theta}, \mathbf{W})} = \mathcal{O}(K^{2}N^{2}M + K^{2}N^{3} + K^{2}M^{2}). \label{54}
\end{equation}
\end{lemma}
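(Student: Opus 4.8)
The plan is to walk through Algorithm~2 instruction by instruction, tally the floating-point operations of one RCG iteration, extract the dominant monomials, and then multiply by the iteration budget $T$; since $T$ is treated as a constant and every pass of the while-loop executes the same operations, this multiplication does not change the order. The heaviest primitive is the Euclidean gradient $\nabla_{\mathbf{\Sigma}}L(\mathbf{\Sigma}_{i}) = \textup{blkdiag}(\nabla_{\boldsymbol{\theta}}L(\boldsymbol{\theta}), \nabla_{\mathbf{W}}L(\mathbf{W}))$ feeding the Riemannian gradient of Lemma~5, so I would bound it first. From \eqref{32} and \eqref{33}, assembling $\nabla_{\boldsymbol{\theta}}L(\boldsymbol{\theta})$ requires, for every pair $(j,k)\in\mathcal{K}^{2}$, products such as $\mathbf{H}_{j}^{\textup{H}}\mathbf{G}^{\textup{H}}$ ($N\times N$ times $N\times M$) and $\mathbf{G}\mathbf{H}_{j}\boldsymbol{\theta}$ (forming $\mathbf{G}\mathbf{H}_{j}$ costs $\mathcal{O}(N^{2}M)$), together with the rank-one factor $\mathbf{w}_{k}\mathbf{w}_{k}^{\textup{H}}$; treating $\mathbf{H}_{j}=\textup{diag}(\mathbf{u}_{j})$ as a dense $N\times N$ matrix, the heaviest of these is $\mathcal{O}(N^{3}+N^{2}M)$, and summing over the $K^{2}$ pairs yields $\mathcal{O}(K^{2}N^{3}+K^{2}N^{2}M)$. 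From \eqref{34}, \eqref{35}, and the companion identity $\partial A_{j,k}/\partial\mathbf{w}_{k}=(\mathbf{d}_{j}+\mathbf{G}\mathbf{H}_{j}\boldsymbol{\theta})(\mathbf{d}_{j}+\mathbf{G}\mathbf{H}_{j}\boldsymbol{\theta})^{\textup{H}}\mathbf{w}_{k}$, assembling $\nabla_{\mathbf{W}}L(\mathbf{W})$ requires the $M\times M$ outer product $(\mathbf{d}_{j}+\mathbf{G}\mathbf{H}_{j}\boldsymbol{\theta})(\mathbf{d}_{j}+\mathbf{G}\mathbf{H}_{j}\boldsymbol{\theta})^{\textup{H}}$ for each $(j,k)$, i.e.\ $\mathcal{O}(M^{2})$ per pair and $\mathcal{O}(K^{2}M^{2})$ overall; hence one Euclidean-gradient evaluation costs $\mathcal{O}(K^{2}N^{2}M+K^{2}N^{3}+K^{2}M^{2})$, and evaluating the objective $L(\mathbf{\Sigma})$ in \eqref{19} through the quantities $A_{j,k}$ is no more expensive.

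Next I would check that every remaining operation in one RCG iteration is of the same or lower order, so that it cannot enlarge the bound. By Lemma~3 the projection onto $\mathcal{T}_{\mathbf{\Sigma}}\mathcal{M}$ costs $\mathcal{O}(N)$ on the circle block and $\mathcal{O}(MK)$ on the oblique block (only the diagonal of $\mathbf{W}^{\textup{H}}\mathbf{U}$ is needed, followed by a column scaling of $\mathbf{W}$); the conjugate-gradient parameter $\beta_{i}$ from the gradient norms, the conjugate-direction update \eqref{28}, the retraction of Lemma~4, the point update \eqref{29}, and the stopping test $\|\mathbf{\Sigma}_{i+1}-\mathbf{\Sigma}_{i}\|^{2}\leq\epsilon$ are all $\mathcal{O}(N+MK)$. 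The only comparatively heavy sub-step is the line search: each trial step size requires one retraction plus a re-evaluation of $L$ and of $\textup{grad}_{\mathcal{M}}L$ in order to test \eqref{40} and \eqref{42}, i.e.\ $\mathcal{O}(K^{2}N^{2}M+K^{2}N^{3}+K^{2}M^{2})$ per trial, while the number of trials until the strong Wolfe conditions are met is bounded by a constant independent of $N,M,K$ (the standard bracketing/zoom procedure, whose termination is guaranteed by the boundedness below and Lipschitz continuity of the Riemannian gradient in Proposition~1). Hence one full RCG iteration costs $\mathcal{O}(K^{2}N^{2}M+K^{2}N^{3}+K^{2}M^{2})$, and multiplying by the constant $T$ gives \eqref{54}.

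The step I expect to be the main obstacle is the bookkeeping of this gradient/objective evaluation: the three monomials $K^{2}N^{2}M$, $K^{2}N^{3}$, and $K^{2}M^{2}$ emerge only under a consistent choice of multiplication order in \eqref{32}--\eqref{35} (and in the companion $\partial A_{j,k}/\partial\mathbf{w}_{k}$ identity), under keeping $\mathbf{H}_{j}$ a dense matrix, and under forming the $M\times M$ outer product explicitly; I must verify that no intermediate quantity --- e.g.\ $\mathbf{G}^{\textup{H}}\mathbf{G}$, $\mathbf{H}_{j}^{\textup{H}}\mathbf{G}^{\textup{H}}\mathbf{G}\mathbf{H}_{j}$, or the tangent-space projection for $\mathbf{W}$ --- exceeds these three terms, and that the $K^{2}$ factor arising from ranging over all interference pairs $(j,k)$ is genuinely needed rather than an over-count. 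A secondary point requiring a clean statement is that the line-search iteration count is $\mathcal{O}(1)$; this follows at once from Proposition~1 and the standard analysis of Wolfe line searches and contributes nothing to the order.
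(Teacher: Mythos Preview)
Your proposal is correct and follows essentially the same route as the paper's Appendix~E: walk through Algorithm~2 line by line, identify the Euclidean/Riemannian gradient evaluation and the line search as the dominant $\mathcal{O}(K^{2}N^{2}M+K^{2}N^{3}+K^{2}M^{2})$ steps, and verify that the projection, conjugate-direction update, retraction, and point update are of strictly lower order. Your treatment is in fact slightly more careful than the paper's (you justify the $\mathcal{O}(1)$ line-search trial count via Proposition~1, which the paper leaves implicit), but the structure and the resulting bound are the same.
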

\begin{proof}
See Appendix E.
\end{proof}

After updating the RIS phase shift vector $\boldsymbol{\theta}$ and the BS beamforming matrix $\mathbf{W}$, we update the Lagrangian multiplier $\boldsymbol{\lambda}$ using the subgradient method.
The complexity of updating $\boldsymbol{\lambda}$ is $\mathcal{C}_{\boldsymbol{\lambda}} = \mathcal{O}(K^{2}N^{2}M)$.

Once $\boldsymbol{\theta}$, $\mathbf{W}$, and $\boldsymbol{\lambda}$ are updated, the optimization of the uplink power vector $\mathbf{p}$ is achieved by solving an LP problem in \eqref{39}.
Note that the numbers of flops to compute $A_{j,k}(\boldsymbol{\theta},\mathbf{W})$ and to solve the LP problem are $N^{2}M$ and $K^{3}$, respectively.
Thus, the overall complexity to optimize $\mathbf{p}$ is $\mathcal{C}_{\mathbf{p}} = \mathcal{O}(K^{2}N^{2}M + K^{3})$.

In conclusion, the computational complexity $\mathcal{C}_{\text{RCG-JO}}$ of the proposed RCG-JO scheme is 
\begin{align}
\mathcal{C}_{\text{RCG-JO}} 
&= \mathcal{C}_{(\boldsymbol{\theta}, \mathbf{W})} + \mathcal{C}_{\boldsymbol{\lambda}} + \mathcal{C}_{\mathbf{p}} \label{55}\\
&= \mathcal{O}(K^{2}N^{2}M + K^{3} + K^{2}N^{3} +  K^{2}M^{2}). \label{56}
\end{align}

\begin{table}[t!]
\begin{center}
\caption{Comparison of computational complexity ($K = 1$, $M = 4$)}
\centering
 \begin{tabular}{|c|c|c|c|c|c|} 
 \hline
 \multirow{2}{*}{ } & 
 \multirow{2}{*}{\shortstack[l]{Number of floating point\\  operations (flops) per iteration}} &
 \multicolumn{4}{c|}{Complexity for various numbers of reflecting elements $N$} \\
 \cline{3-6} 
 &  & $N = 4$ & $N = 16$ & $N = 32$ & $N = 64$ \\
 \hline
 \textbf{RCG-JO} & $\mathcal{O}(K^{2}N^{2}M + K^{3} + K^{2}N^{3} +  K^{2}M^{2})$ & $1.45 \times 10^{2}$ & $5.14 \times 10^{3}$ & $3.69 \times 10^{4}$ & $2.79 \times 10^{5}$ \\
 \hline
 \textbf{SDR-based scheme} & $\mathcal{O}(K^{2}N^{2}M + N^{6} + K^{6} M^{6})$ & $8.26 \times 10^{3}$ & $1.68 \times 10^{7}$ & $1.07 \times 10^{9}$ & $6.87 \times 10^{10}$ \\
 \hline
\end{tabular}
\end{center}
\end{table}

For comparison, we also discuss the complexity of the conventional SDR-based scheme, which consists of three major steps: 1) optimization of $\boldsymbol{\theta}$ using SDR for fixed $\mathbf{w}$ and $\mathbf{p}$, 2) optimization of $\mathbf{w}$ using SDR for fixed $\boldsymbol{\theta}$ and $\mathbf{p}$, and 3) optimization of $\mathbf{p}$ by solving an LP problem with the obtained $\boldsymbol{\theta}$ and $\mathbf{W}$.
It has been shown that the worst-case complexity of the SDR method for optimizing $\boldsymbol{\theta} \in \mathbb{C}^{N \times 1}$ is $\mathcal{O}(N^{6})$ \cite{WuIntelligent2019}.
Similarly, the worst-case complexity of optimizing $\mathbf{W} \in \mathbb{C}^{M \times K}$ using SDR is $\mathcal{O}(K^{6} M^{6})$.
In addition, the optimization of $\mathbf{p}$ is achieved by solving the LP problem, so that the complexity is $\mathcal{O}(K^{2}N^{2}M + K^{3})$.
In summary, the overall complexity $\mathcal{C}_{\text{SDR}}$ of the SDR-based scheme is 
\begin{equation}
\mathcal{C}_{\text{SDR}} = \mathcal{O}(K^{2}N^{2}M + N^{6} + K^{6} M^{6}). \label{57}
\end{equation}

In Table I, we compare the computational complexities of the proposed RCG-JO and the conventional SDR-based schemes.
When the number of reflecting elements $N$ is $4$, the complexity of RCG-JO is significantly lower than that of the SDR-based scheme (around $98\%$).
This is because the essential operations of RCG-JO are just matrix additions and multiplications whereas the SDR-based scheme needs a convex optimization technique to solve the SDP problem.

\begin{figure}[!t]
\centering
\includegraphics[width=0.35\columnwidth]{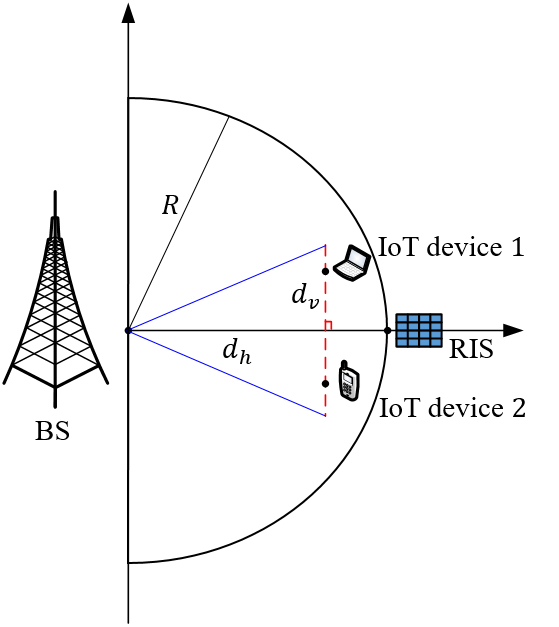}
\caption{Locations of BS, RIS, and IoT devices for $K = 2$.} \label{fig4}
\end{figure}

\section{Simulation Results}\label{VI}
\subsection{Simulation Setup}
In this section, we present the numerical results to evaluate the performance of proposed RCG-JO algorithm.
Our simulation setup is based on the uplink IoT network where $K = 2$ single-antenna devices transmit signals to the BS equipped with $M = 4$ receiving antennas (see Fig. 4).
This uplink transmission is assisted by the RIS equipped with $N = 64$ reflecting elements, which is randomly located at the circumference of the circle centered at the BS with the radius $R = 65\,$m.
Also, $K$ devices are uniformly distributed on a line that is perpendicular to the line connecting the BS and the RIS.
The maximal vertical distance between the IoT devices and the line is $d_{v} = 3\,$m.
We set the horizontal distance between the BS and IoT devices to $d_{h} = 57\,$m.
Throughout the simulations, we set the rate requirement $R_{k}^{\textup{min}}$ and noise power $\sigma_{k}^{2}$ to $0.3\,$bps/Hz and $-60\,$dBm, respectively. 
We use the Rician fading channel models for $\mathbf{d}_{k}$, $\mathbf{u}_{k}$ and $\mathbf{G}$ with the Rician factors $\kappa_{\mathbf{d}} = 0$, $\kappa_{k} = 10$, and $\kappa_{\mathbf{G}} = \infty$, respectively.
The path loss is $\beta = C_{0}(d/D_{0})^{-\alpha}$, where $d$ is the distance, $\alpha$ is the path loss exponent, and $C_{0} = -30\,\text{dB}$ is the path loss at the reference distance $D_{0} = 1\,\text{m}$ \cite{WuIntelligent2019}.
For the channels $\mathbf{d}_{k}$, $\mathbf{u}_{k}$ and $\mathbf{G}$, we set $\alpha$ to be $3.8$, $2.8$, and $2$, respectively. 
In each point of the plots, the simulation results are averaged over $1,000$ independent channel realizations.
The simulation parameters are summarized in Table II.

For comparison, we test the following benchmark techniques: 
1) SDR-based scheme where $\boldsymbol{\theta}$ and $\mathbf{W}$ are optimized alternately using SDR\footnote{Note that in the SDR-based scheme, after solving the relaxed SDP problem, the Gaussian randomization technique is employed to find out the feasible rank-one solution.} \cite{WuIntelligent2019},
2) deep reinforcement learning (DRL)-based scheme where $\boldsymbol{\theta}$ and $\mathbf{W}$ are jointly optimized by leveraging the DRL technique \cite{huang2020reconfigurable},
3) random phase shifts where $\boldsymbol{\theta}$ is randomly generated and the maximum ratio transmission (MRT) is used for $\mathbf{W}$, and
4) conventional system without RIS where $\mathbf{W}$ is generated randomly.

\begin{table}[t!]
\begin{center}
\caption{System parameters}
\begin{tabular}{|c|c|c|c|}
\hline
\textbf{Parameters} & \textbf{Values} & \textbf{Parameters} & \textbf{Values} \\ \hline
Number of devices ($K$) & $2$ & BS-devices vertical distance ($d_{v}$) & $3\,$m\\ \hline
Number of BS antennas ($M$) & $4$ & Carrier frequency ($f$) & $2.5\,$GHz  \\ \hline
Number of RIS reflecting elements ($N$) & $64$ & Noise power ($\sigma_{k}^{2}$) & $-60\,$dBm\\ \hline
BS-RIS distance ($R$) & $65\,$m & Rate requirement of devices  ($R_{k}^{\textup{min}}$) & $0.3\,$bps/Hz\\ \hline
BS-devices horizontal distance ($d_{h}$) & $57\,$m & Maximum transmission power ($p_{k}^{\text{max}}$) & $1\,$W\\ \hline
\end{tabular}
\end{center}
\end{table}

\subsection{Simulation Results}
In Fig. 5, we plot the uplink transmit power as a function of the number of RIS reflecting elements $N$.
From the simulation results, we observe that RCG-JO outperforms the conventional schemes using SDR and random phase shifts.
For example, when the number of RIS reflecting elements is $N = 64$, the proposed scheme achieves $44\%$ and $74\%$ reduction in power over the SDR-based scheme and the conventional scheme using random phase shifts, respectively.
Also, we see that the performance gap between RCG-JO and the SDR-based scheme increases gradually with $N$.
Furthermore, we see that the power saving gain of the conventional scheme using random phase shifts does not change with $N$.
This is because without the optimization of RIS phase shifts and BS beamforming vectors, the RIS reflected signal power is comparable to the signal power transmitted from the direct link, so that the gain obtained from the joint active and passive beamforming is marginal.
Also, we compare the uplink transmit power of RCG-JO with the DRL-based scheme.
We observe that RCG-JO achieves $57\%$ of power reduction over the DRL-based scheme.
Note, for the DRL-based approach, it is not easy to find out the optimal decision (i.e., RIS phase shifts and BS beamforming vectors) minimizing the uplink transmit power and at the same time satisfying the rate requirements of the IoT devices.
This is because the goal of DRL is to learn the decision policy maximizing the cumulative reward so that the minimization of uplink transmit power and the rate requirements might not be satisfied simultaneously.

\begin{figure}[!t]
\begin{minipage}{8cm}
\centering
\includegraphics[width=8.5cm]{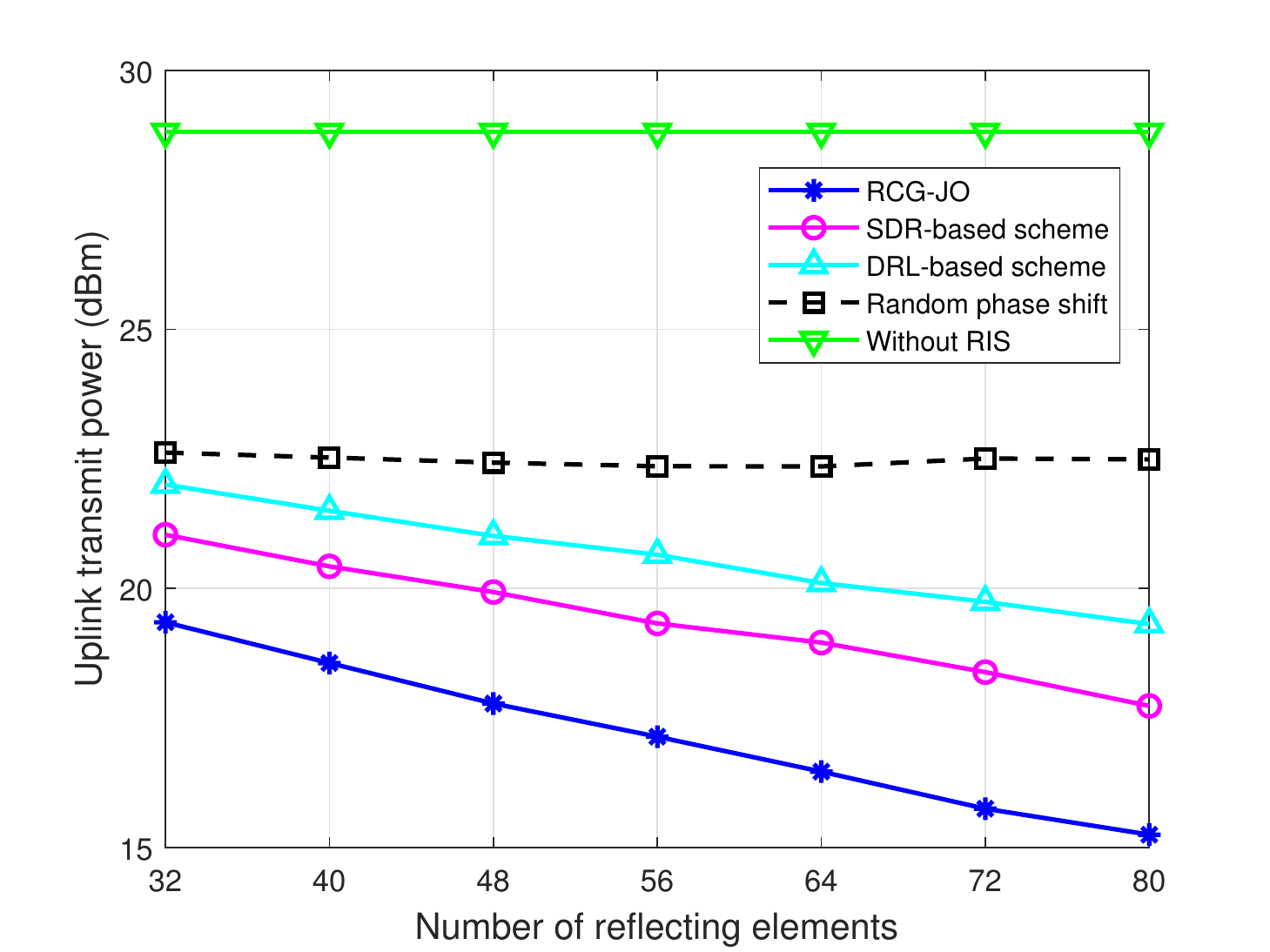}
\caption{Uplink transmit power vs. number of reflecting elements $N$.}
\end{minipage}
\begin{minipage}{8cm}
\centering
\includegraphics[width=8.5cm]{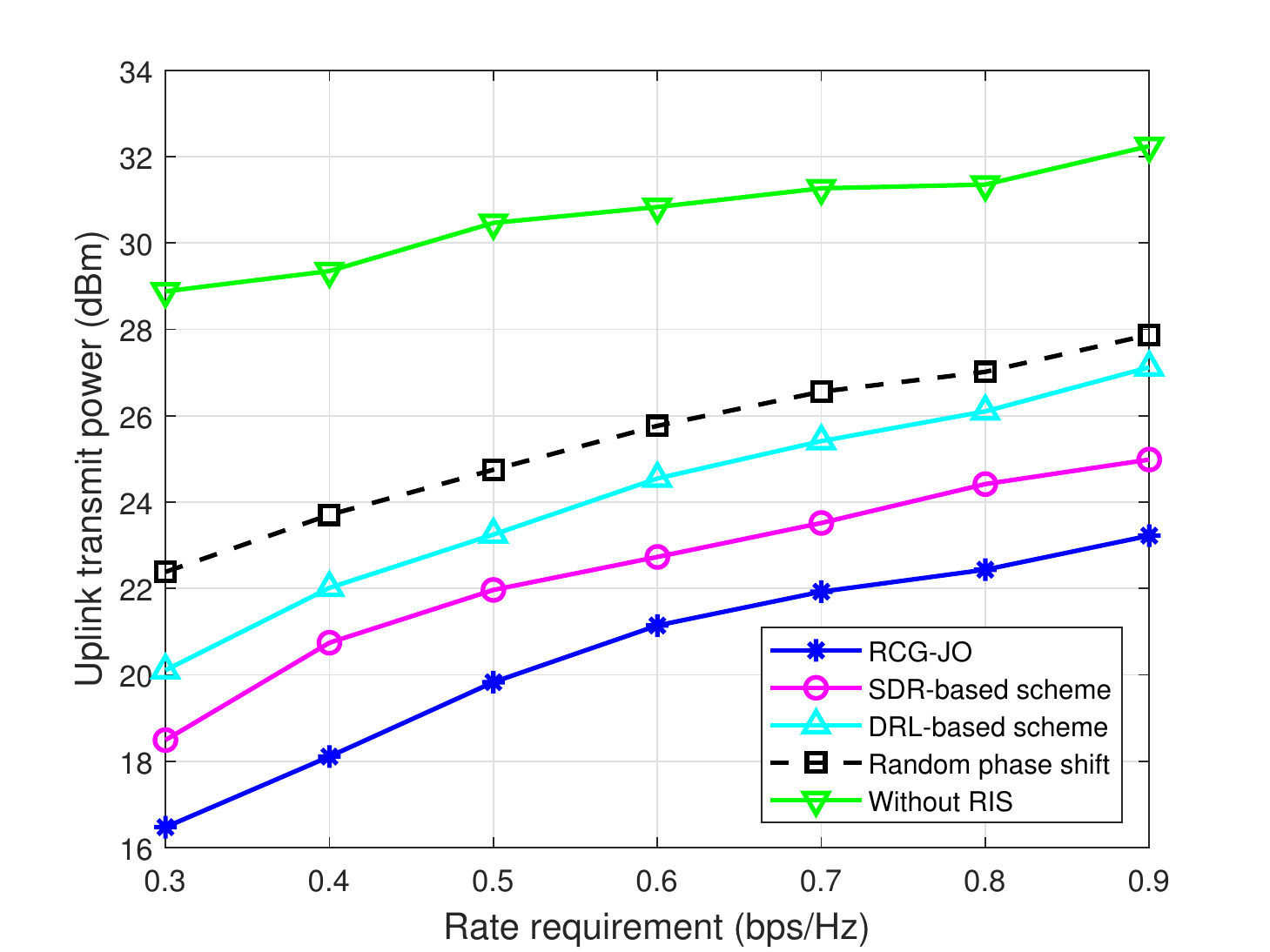}
\caption{Uplink transmit power vs. rate requirement of devices $R_{k}^{\textup{min}}$.}
\end{minipage}
\end{figure}

We next evaluate the uplink transmit power of the proposed RCG-JO algorithm and benchmark schemes as a function of the rate requirement of IoT devices $R_{k}^{\textup{min}}$.
As shown in Fig. 6, we observe that the uplink transmit power increases when the rate requirement of IoT devices becomes more strict.
Also, when compared to the case without RIS, the rate requirement of IoT devices can be satisfied with lower uplink transmit power in the proposed scheme.
For instance, when the rate requirement of devices is $R_{k}^{\textup{min}} = 0.4\,$bps/Hz, RCG-JO achieves around $92\%$ of uplink power reduction over the conventional scheme without RIS.
This is because through the joint active and passive beamforming at the BS and RIS, we can improve the signal power and reduce the interference so that the rate requirements of IoT devices can be satisfied with lower transmit power.
Furthermore, we see that RCG-JO significantly reduces the uplink transmit power over the benchmark schemes using SDR, DRL, and random phase shift.
When $R_{k}^{\textup{min}} = 0.8\,$bps/Hz, for example, RCG-JO saves almost $65\%$ of uplink transmit power over the conventional scheme using random phase shift.
Even when compared to the SDR-based and DRL-based schemes, the uplink power saving gains of RCG-JO are more than $37\%$ and $57\%$, respectively.

\begin{figure}[!t]
\begin{minipage}{8cm}
\centering
\includegraphics[width=8.5cm]{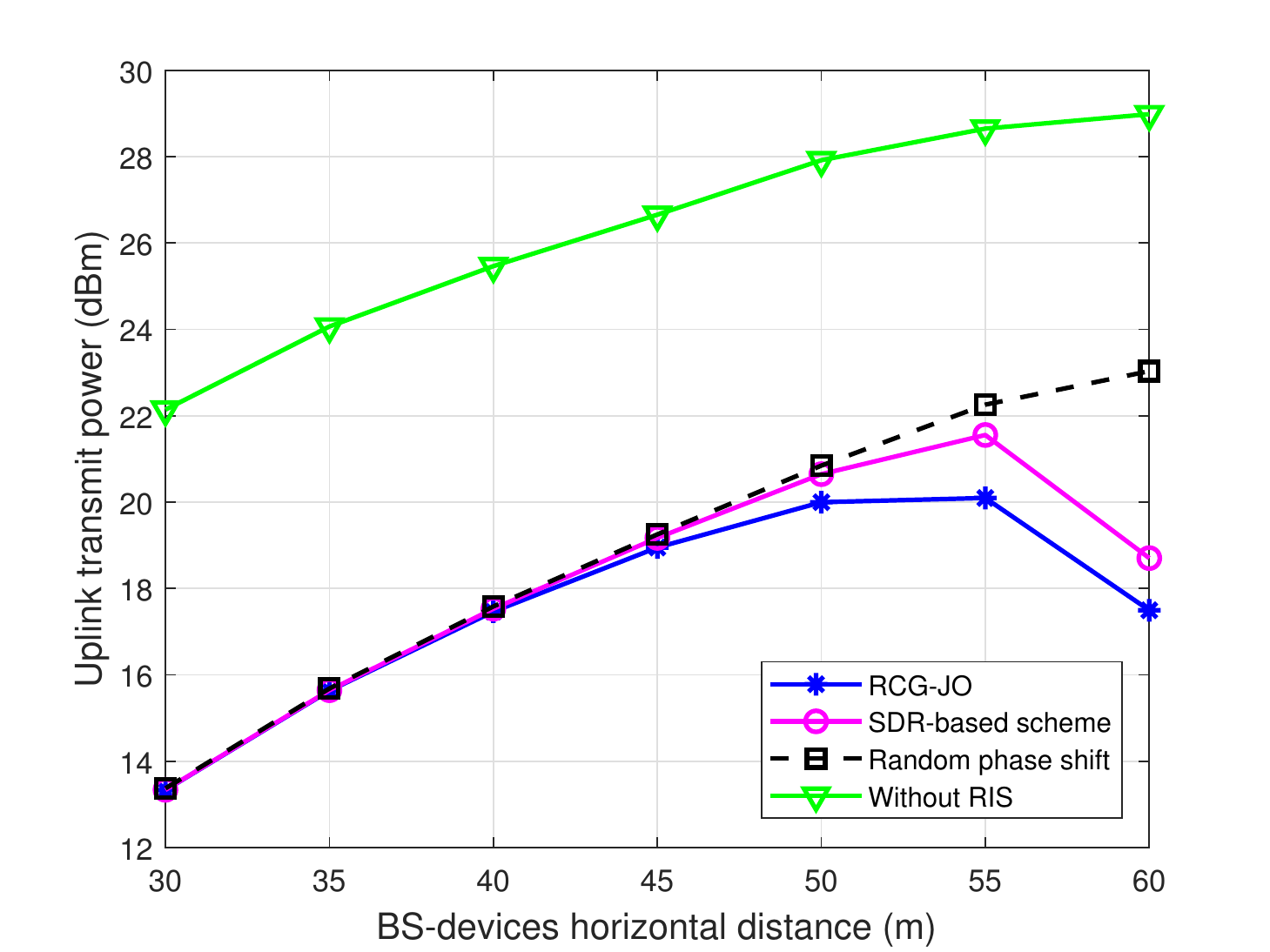}
\caption{Uplink transmit power vs. BS-devices horizontal distance $d_{h}$.}
\end{minipage}
\begin{minipage}{8cm}
\centering
\includegraphics[width=8.5cm]{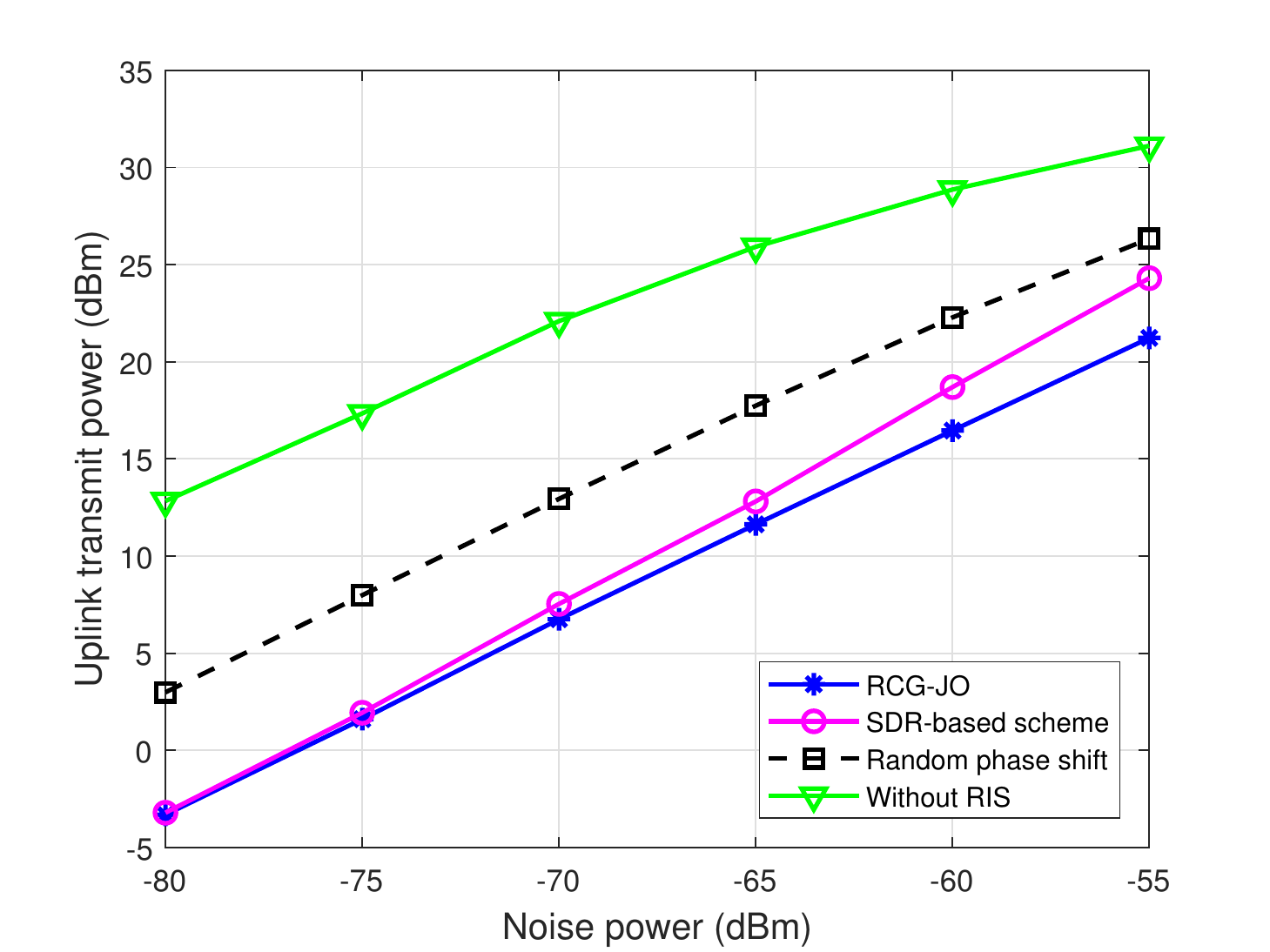}
\caption{Uplink transmit power vs. noise power $\sigma_{k}^{2}$.} 
\end{minipage}
\end{figure}

In Fig. 7, we plot the uplink transmit power as a function of the horizontal distance between the BS and IoT devices $d_{h}$.
We observe that RCG-JO outperforms the benchmark schemes as $d_{h}$ increases.
For example, when the horizontal distance between the BS and devices is $d_{h} = 55\,$m, RCG-JO saves $29$\% and $40$\% of the uplink transmit power over the SDR-based scheme and the conventional scheme using random phase shift, respectively.
In the system where the RIS is not employed, we see that the uplink transmit power is considerable due to the signal attenuation, in particular when the devices are located far away from the BS.
Whereas, in the RIS-aided IoT networks, we see that the uplink transmit power increases initially and then decreases as the horizontal distance increases.
Basically, when the IoT device is far from the BS, large uplink transmit power is needed to satisfy the rate requirement of a device.
In this case, the IoT device is getting close to the RIS so that the RIS-aided channel gain increases gradually.
As a result, the rate requirement can be satisfied with relatively lower uplink transmit power.

\begin{figure}[!t]
\begin{minipage}{8cm}
\centering
\includegraphics[width=8.5cm]{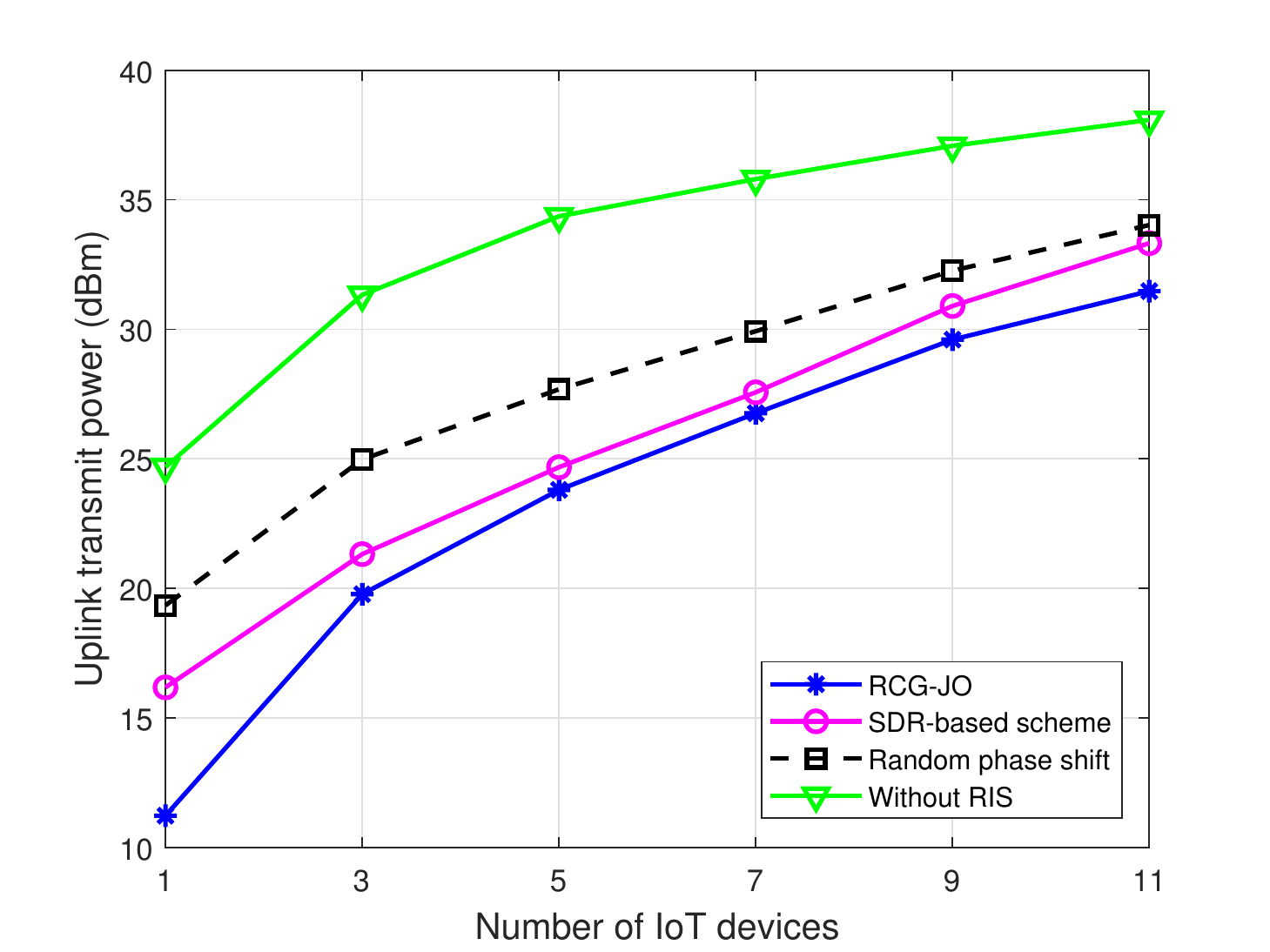}
\caption{Uplink transmit power vs. number of IoT devices $K$.} 
\end{minipage}
\begin{minipage}{8cm}
\centering
\includegraphics[width=8.5cm]{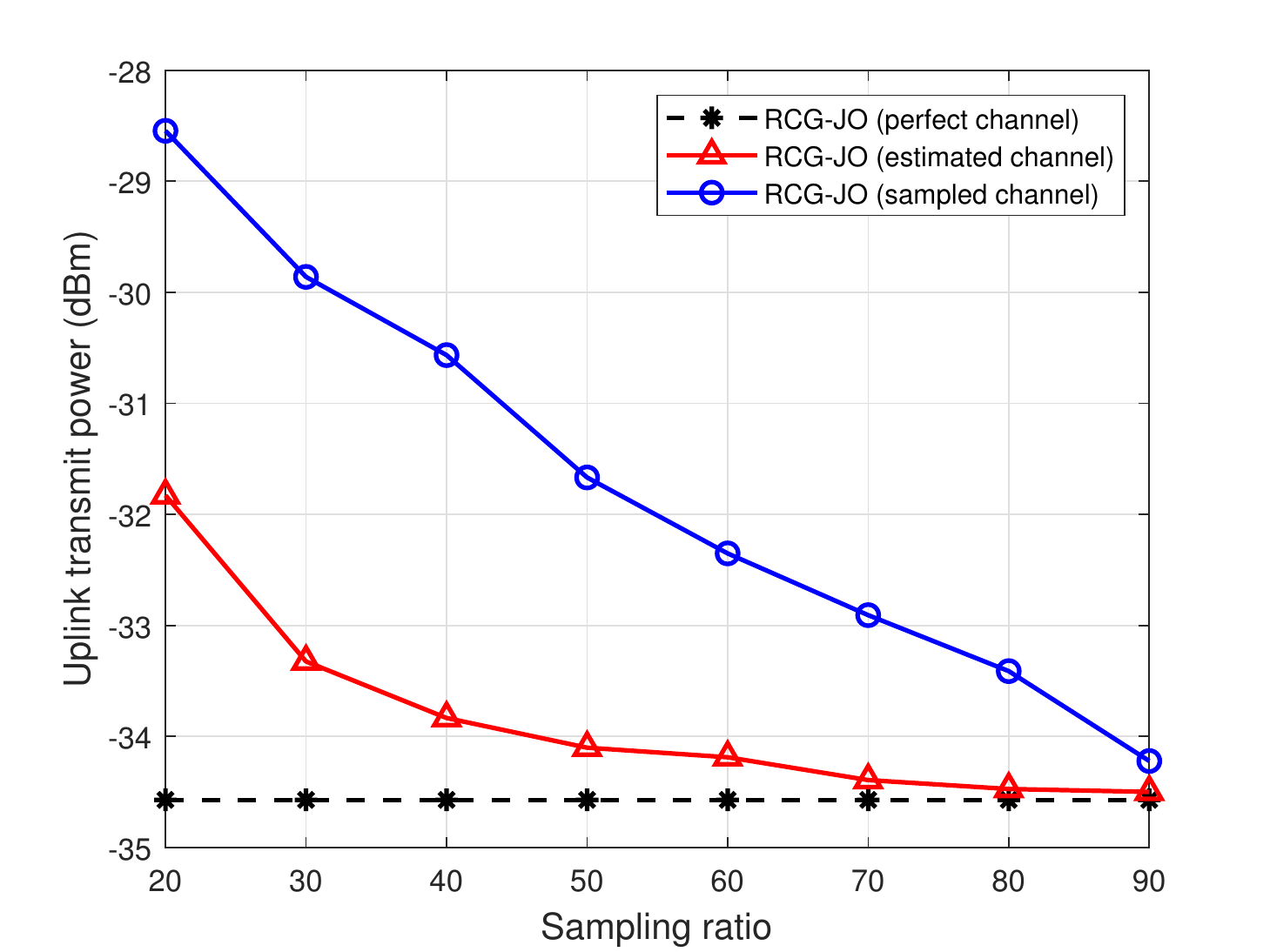}
\caption{Uplink transmit power vs. percentage of active reflecting elements.} 
\end{minipage}
\end{figure}

In Fig. 8, we investigate the uplink transmit power of the proposed scheme and benchmark schemes as a function of the noise power $\sigma_{k}^{2}$.
From the simulation results, we observe that the uplink transmit power increases with the noise power $\sigma_{k}^{2}$ and RCG-JO outperforms the benchmark schemes.
For example, when the noise power $\sigma_{k}^{2} = -55\,$dBm, the proposed scheme achieves $51\%$ and $70\%$ reduction over the conventional schemes using SDR and the random phase shift, respectively.
We also see that by using the RIS, the uplink transmit power of RCG-JO is significantly reduced (more than $80\%$) over the conventional scheme without RIS.

In Fig. 9, we evaluate the uplink transmit power of RCG-JO and benchmark schemes as the number of IoT devices $K$ increases. 
We observe that RCG-JO outperforms the benchmark schemes in all tested scenarios.
For example, when $K = 11$, RCG-JO saves $35\%$ and $78\%$ of the uplink transmit power over the SDR-based scheme and the conventional scheme without RIS, respectively. 
Also, we see that the power saving gain of RCG-JO over the SDR-based scheme increases with $K$. 
For instance, when we change $K$ from $5$ to $11$, the power reduction gain of RCG-JO over the SDR-based scheme increases from $18\%$ to $35\%$.
While RCG-JO solves the unconstrained optimization problem on the product manifold, the SDR-based scheme needs to find out the feasible rank-one solution after solving the SDP problem so that the performance degradation is inevitable.
This, together with the result of computational complexity analysis in Section IV.B, demonstrates the effectiveness of RCG-JO.

To evaluate the effectiveness of proposed channel estimation technique, we investigate the uplink transmit power of RCG-JO using the perfect channel information, the estimated channel information, and the sampled channel information when $N = 100$.
In Fig. 10, we observe that when the percentage of active reflecting elements is larger than $20\%$, RCG-JO using the estimated channel information performs close to RCG-JO using the genie channel information. 
This is because the RIS-aided channel matrix can be readily modeled as a low-rank matrix and thus the LRMC algorithm can effectively reconstruct the full channel matrix (see Section II.B).
In Fig. 11, we investigate the cumulative distributions of the number of iterations required for the convergence of outer iteration (Algorithm 1) and inner iteration (Algorithm 2).
We test the number of iterations required for the convergence when using $10,000$ independent channel realizations.
In all tested cases, we observe that Algorithm 1 converges within $15$ iterations and Algorithm 2 converges within $30$ iterations.

\begin{figure}[!t]
\centering
\includegraphics[width=8.5cm]{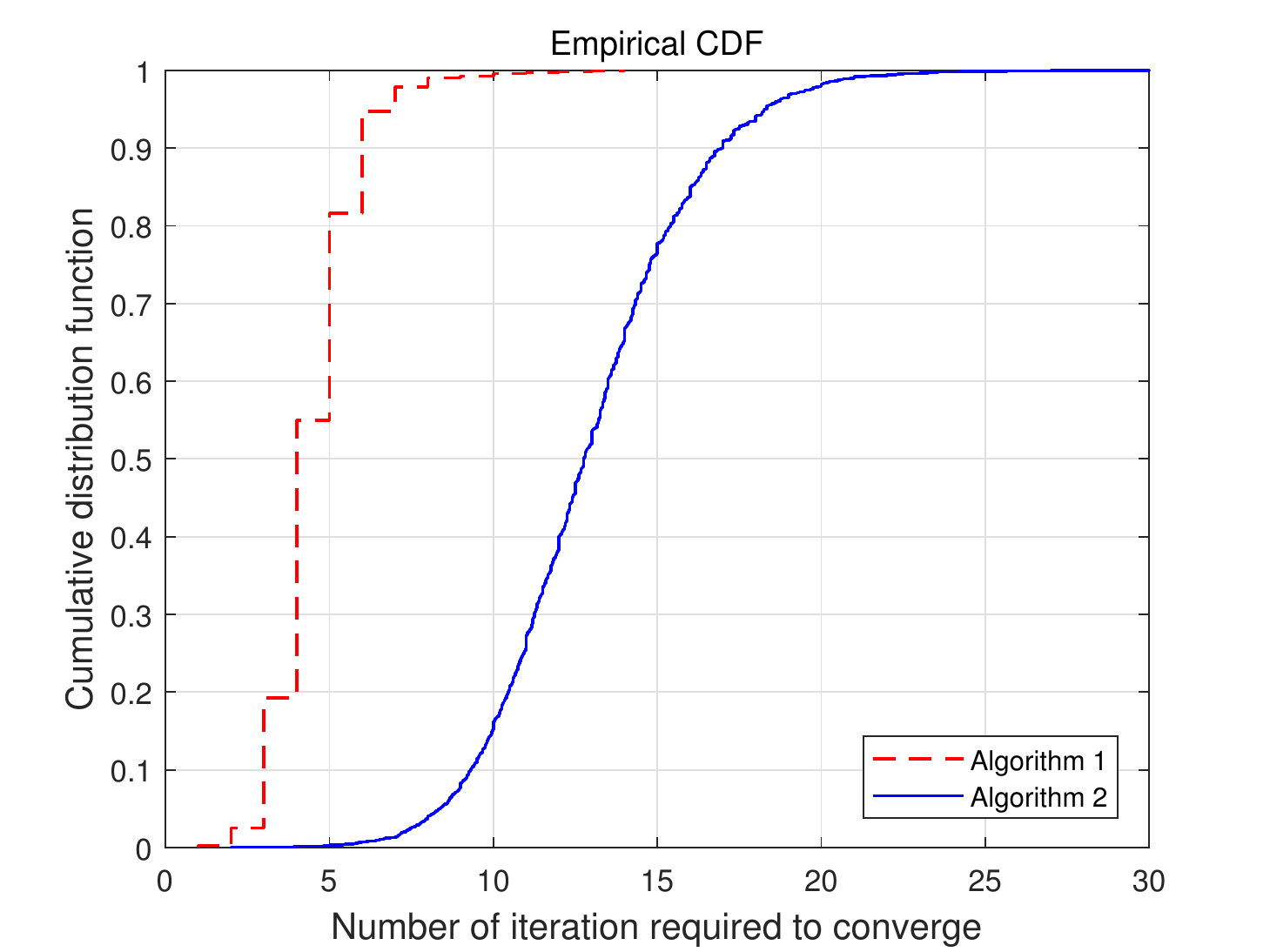}
\caption{Cumulative distribution of the number of iterations required to converge. \\} 
\end{figure}

\section{Conclusion}
In this paper, we proposed a RIS phase shift and BS beamforming optimization technique to minimize the uplink transmit power of a RIS-aided IoT network.
Key idea of the proposed RCG-JO algorithm is to jointly optimize the RIS phase shifts and BS beamforming vectors using the Riemannian conjugate gradient method.
By exploiting the product Riemannian manifold structure of the sets of unit-modulus RIS phase shift and unit-norm BS beamforming vector, we converted the uplink power minimization problem to an unconstrained problem on the Riemannian manifold.
Then, we employed the Riemannian conjugate gradient method to find out the optimal RIS phase shifts and the BS beamforming vectors simultaneously.
We demonstrated from the performance analysis and numerical evaluations that the proposed RCG-JO algorithm is effective in saving the uplink transmit power of RIS-aided IoT networks.
In our work, we assumed the ideal phase shift model where the reflection amplitude and phase shift are independent, but an extension to the realistic scenarios with the phase-dependent reflection amplitude would also be an interesting research direction worth pursuing.

\section*{Appendix A\\Proof of Proposition 1}
One can easily see that due to the unit-modulus constraint of $\boldsymbol{\theta}$ \eqref{12c} and the unit-norm constraint of $\mathbf{W}$ \eqref{12d}, the objective function $L(\mathbf{\Sigma})$ is bounded.

Recall that the Riemannian gradient can be decomposed as $\text{grad}_{\mathcal{M}}L(\textbf{R}_{\boldsymbol{\Sigma}}(\bar{\mathbf{D}}) = \text{grad}_{\mathcal{M}_{\boldsymbol{\theta}}}L( \textbf{R}_{\boldsymbol{\theta}}(\mathbf{d})) \oplus \text{grad}_{\mathcal{M}_{\mathbf{W}}}L(\textbf{R}_{\mathbf{W}}(\mathbf{D}))$ where $\bar{\mathbf{D}} = \mathbf{d} \oplus \mathbf{D}$. Thus, we firstly find out the constants $K_{\boldsymbol{\theta}}$ and $K_{\mathbf{W}}$ satisfying the Lipschitz conditions for $\text{grad}_{\mathcal{M}_{\boldsymbol{\theta}}}L( \textbf{R}_{\boldsymbol{\theta}}(\mathbf{d}))$ and $\text{grad}_{\mathcal{M}_{\mathbf{W}}}L(\textbf{R}_{\mathbf{W}}(\mathbf{D}))$, respectively.
We then obtain the constant $K = \max(K_{\boldsymbol{\theta}},K_{\mathbf{W}})$ satisfying the Lipschitz condition for $\text{grad}_{\mathcal{M}}L(\textbf{R}_{\boldsymbol{\Sigma}}(\bar{\mathbf{D}}))$. 

When $\mathbf{W}$ is given, $L(\boldsymbol{\theta})$ is a quadratic function with respect to $\boldsymbol{\theta}$:
\begin{align}
    L(\boldsymbol{\theta}) 
    &= \sum_{k=1}^{K} \lambda_{k} \Big( - \frac{p_{k}}{2^{R_{k}^{\textup{min}}} -1} A_{k,k}(\boldsymbol{\theta},\mathbf{w}_{k}) + \sum_{j \neq k}^{K} p_{j} A_{j,k}(\boldsymbol{\theta},\mathbf{w}_{k}) + \sigma_{k}^{2} \Big) \label{58}\\
    &= \boldsymbol{\theta}^{\text{H}}\mathbf{B}\boldsymbol{\theta} + \mathbf{b}^{\text{H}}\boldsymbol{\theta} +\boldsymbol{\theta}^{\text{H}}\mathbf{b} + b, \label{59}
\end{align}
where $\mathbf{B} = \sum_{k=1}^{K}\lambda_{k}\big(-\frac{p_{k}}{2^{R_{k}^{\text{min}}}-1}\mathbf{H}_{k}^{\text{H}}\mathbf{G}^{\text{H}}\mathbf{w}_{k}\mathbf{w}_{k}^{\text{H}}\mathbf{G}\mathbf{H}_{k} + \sum_{j\neq k}^{K}p_{j}\mathbf{H}_{j}^{\text{H}}\mathbf{G}^{\text{H}}\mathbf{w}_{k}\mathbf{w}_{k}^{\text{H}}\mathbf{G}\mathbf{H}_{j}\big)$ and $\mathbf{b} = \sum_{k=1}^{K}\lambda_{k}\big(-\frac{p_{k}}{2^{R_{k}^{\text{min}}}-1}\mathbf{H}_{k}^{\text{H}}\mathbf{G}^{\text{H}}\mathbf{w}_{k}\mathbf{w}_{k}^{\text{H}}\mathbf{d}_{k} + \sum_{j\neq k}^{K}p_{j}\mathbf{H}_{j}^{\text{H}}\mathbf{G}^{\text{H}}\mathbf{w}_{k}\mathbf{w}_{k}^{\text{H}}\mathbf{d}_{j}\big)$. Then $\text{grad}_{\mathcal{M}_{\boldsymbol{\theta}}}L(\textbf{R}_{\boldsymbol{\theta}}(\mathbf{d}))$ is expressed as
\begin{align}
    \text{grad}_{\mathcal{M}_{\boldsymbol{\theta}}}L(\textbf{R}_{\boldsymbol{\theta}}(\mathbf{d})) 
    &= \textbf{P}_{\mathcal{T}_{\boldsymbol{\theta}}\mathcal{M}_{\boldsymbol{\theta}}}(\nabla_{\boldsymbol{\theta}}L(\textbf{R}_{\boldsymbol{\theta}}(\mathbf{d}))) \label{60}\\
    &= \textbf{P}_{\mathcal{T}_{\boldsymbol{\theta}}\mathcal{M}_{\boldsymbol{\theta}}}(\mathbf{B}\textbf{R}_{\boldsymbol{\theta}}(\mathbf{d}) + \mathbf{b}) \label{61}\\
    &= \mathbf{B}\textbf{R}_{\boldsymbol{\theta}}(\mathbf{d})+\mathbf{b} - \Re\{\boldsymbol{\theta}\odot(\mathbf{B}\textbf{R}_{\boldsymbol{\theta}}(\mathbf{d})+\mathbf{b})\}\odot\boldsymbol{\theta}. \label{62}
\end{align}
Using the triangle inequality, we have
\begin{align}
    \|\textup{grad}_{\mathcal{M}_{\boldsymbol{\theta}}}L(\textbf{R}_{\boldsymbol{\theta}}(\mathbf{d})) - \textup{grad}_{\mathcal{M}_{\boldsymbol{\theta}}}L(\textbf{R}_{\boldsymbol{\theta}}(\mathbf{0}))\|
    \leq& \|\mathbf{B}(\textbf{R}_{\boldsymbol{\theta}}(\mathbf{d})-\textbf{R}_{\boldsymbol{\theta}}(\mathbf{0}))\|\nonumber\\
    &+ \|\Re\{\boldsymbol{\theta}\odot(\mathbf{B}(\textbf{R}_{\boldsymbol{\theta}}(\mathbf{d})-\textbf{R}_{\boldsymbol{\theta}}(\mathbf{0})))\}\odot\boldsymbol{\theta}\| \label{63}\\
    \leq& 2\|\mathbf{B}(\textbf{R}_{\boldsymbol{\theta}}(\mathbf{d})-\textbf{R}_{\boldsymbol{\theta}}(\mathbf{0}))\| \label{64}\\
    \leq& 2\|\mathbf{B}\| \|\textbf{R}_{\boldsymbol{\theta}}(\mathbf{d})-\textbf{R}_{\boldsymbol{\theta}}(\mathbf{0})\|. \label{65}
\end{align}
Since $\mathbf{B}$ is a quadratic function of $\mathbf{W}$ and the elements of $\mathbf{W}$ are bounded by the unit-norm constraints, $\|\mathbf{B}\|$ is bounded on $\mathbf{W}\in\mathcal{M}_{\mathbf{W}}$.

Now, what we need to show is that the retraction operator $\textbf{R}_{\boldsymbol{\theta}}$ is Lipschitz continuous. 
To do so, we prove that each element of $\textbf{R}_{\boldsymbol{\theta}}$ is Lipschitz continuous. Let $\theta_{n}$ and $d_{n}$ be the $n$-th elements of $\boldsymbol{\theta}$ and $\mathbf{d}$, respectively. 
Since $|\theta_{n}| = 1$, we have
\begin{equation}
    |R_{\theta_{n}}(d_{n})-R_{\theta_{n}}(0)| = \bigg|\frac{\theta_{n} + d_{n}}{|\theta_{n} + d_{n}|} - \theta_{n}\bigg| =  |\theta_{n}|\bigg|\frac{1 + d_{n}\theta_{n}^{-1}}{|\theta_{n}||1 + d_{n}\theta_{n}^{-1}|} - 1\bigg| = \bigg|\frac{1 + d_{n}\theta_{n}^{-1}}{|1 + d_{n}\theta_{n}^{-1}|} - 1\bigg|. \label{66}
\end{equation}
Without the loss of generality, we assume that $\theta_{n}=1$. If $|d_{n}|\geq \frac{1}{2}$, then we have 
\begin{equation}
    \bigg|\frac{1 + d_{n}}{|1 + d_{n}|} - 1\bigg| \leq \bigg|\frac{1 + d_{n}}{|1 + d_{n}|}\bigg| + 1 = 2 \leq 4|d_{n}|. \label{67}
\end{equation}
Also, if $|d_{n}|<\frac{1}{2}$, then we have
\begin{equation}
    \bigg|\frac{1 + d_{n}}{|1 + d_{n}|} - 1\bigg|^{2} 
    = \frac{|1 + d_{n}| - \Re\{1 + d_{n}\}}{|1 + d_{n}|}
    = \frac{2\Im^{2}\{1 + d_{n}\}}{|1 + d_{n}|(|1 + d_{n}| + \Re\{1 + d_{n}\})}
    \leq \frac{2|d_{n}|^{2}}{|1 + d_{n}|^{2}}
    \leq 8|d_{n}|^{2}. \label{68}
\end{equation}
Combining \eqref{65}, \eqref{67}, and \eqref{68}, we see that $K_{\boldsymbol{\theta}}=8\sup_{\mathbf{W}\in\mathcal{M}_{\mathbf{W}}}\|\mathbf{B}\|$ satisfies the Lipschitz condition for $\text{grad}_{\mathcal{M}_{\boldsymbol{\theta}}}L(\textbf{R}_{\boldsymbol{\theta}}(\mathbf{d}))$. 
Similarly, we can obtain $K_{\mathbf{W}}$ using the fact that the column vectors of $\mathbf{W}$ have unit-norm. 
Finally, we obtain the constant $K = \max(K_{\boldsymbol{\theta}},K_{\mathbf{W}})$ satisfying the Lipschitz condition for $\text{grad}_{\mathcal{M}}L(\textbf{R}_{\boldsymbol{\Sigma}}(\bar{\mathbf{D}}))$. 

\section*{Appendix B\\Proof of Theorem 1}
Using the second Wolfe condition \eqref{42} and the Lipschitz continuity \eqref{43}, we have
\begin{align}
    (c_{2} - 1)\langle \text{grad}_{\mathcal{M}}L(\mathbf{\Sigma}_{i}),\mathbf{D}_{i}\rangle 
    &\leq \langle \text{grad}_{\mathcal{M}}L(\textbf{R}_{\mathbf{\Sigma}_{i}}(\alpha_{i}\mathbf{D}_{i})) - \text{grad}_{\mathcal{M}}L(\mathbf{\Sigma}_{i}),\mathbf{D}_{i}\rangle \label{69}\\
    &= \langle \text{grad}_{\mathcal{M}}L(\textbf{R}_{\mathbf{\Sigma}_{i}}(\alpha_{i}\mathbf{D}_{i})) - \text{grad}_{\mathcal{M}}L(\textbf{R}_{\mathbf{\Sigma}_{i}}(\mathbf{0})),\mathbf{D}_{i}\rangle \label{70}\\
    &\leq \|\text{grad}_{\mathcal{M}}L(\textbf{R}_{\mathbf{\Sigma}_{i}}(\alpha_{i}\mathbf{D}_{i})) - \text{grad}_{\mathcal{M}}L(\textbf{R}_{\mathbf{\Sigma}_{i}}(\mathbf{0}))\|\|\mathbf{D}_{i}\| \label{71}\\
    &\leq \alpha_{i}K\|\mathbf{D}_{i}\|^{2}. \label{72}
\end{align}
Then from the first Wolfe condition \eqref{40} and \eqref{72}, we have
\begin{align}
    L(\mathbf{\Sigma}_{i+1}) = L(\textbf{R}_{\mathbf{\Sigma}_{i}}(\alpha_{i}\mathbf{D}_{i})) 
    &\leq L(\mathbf{\Sigma}_{i}) + c_{1}\alpha_{i}\langle\text{grad}_{\mathcal{M}}L(\mathbf{\Sigma}_{i})),\mathbf{D}_{i}\rangle \label{73}\\
    &\leq L(\mathbf{\Sigma}_{i}) - c_{1}\frac{1 - c_{2}}{K}\frac{\langle \text{grad}_{\mathcal{M}}L(\mathbf{\Sigma}_{i}),\mathbf{D}_{i}\rangle^{2}}{\|\mathbf{D}_{i}\|^{2}}. \label{74}
\end{align}
Finally, by combining \eqref{74} for $i=1,\cdots, I$, we have
\begin{equation}
    \sum_{i = 1}^{I}\frac{\langle \text{grad}_{\mathcal{M}}L(\mathbf{\Sigma}_{i}),\mathbf{D}_{i}\rangle^{2}}{\|\mathbf{D}_{i}\|^{2}}
    \leq \frac{K(L(\mathbf{\Sigma}_{1}) - L(\mathbf{\Sigma}_{I + 1}))}{c_{1}(1 - c_{2})}
    \stackrel{(a)}{\leq}\frac{K(L(\mathbf{\Sigma}_{1}) - L^{*})}{c_{1}(1 - c_{2})}, \label{75}
\end{equation}
where (a) is from the fact that $L(\mathbf{\Sigma})$ is bounded below by $L^{*}$. 
By taking the limit $I\to\infty$ of \eqref{75}, we obtain the desired result \eqref{44}. 

\section*{Appendix C\\Proof of Proposition 2}
We use the mathematical induction to prove Proposition 2. 
When $i=1$,  $\mathbf{D}_{1} = -\text{grad}_{\mathcal{M}}L(\mathbf{\Sigma}_{1})$ so that we can easily see that \eqref{45} holds. 
Now suppose that \eqref{45} holds for $i\geq 1$. 
By using the update equation of $\mathbf{D}_{i}$ in \eqref{28} and the Fletcher-Reeves conjugate gradient parameter $\beta_{i} = \frac{\|\text{grad}_{\mathcal{M}}L(\mathbf{\Sigma}_{i})\|^{2}}{\|\text{grad}_{\mathcal{M}}L(\mathbf{\Sigma}_{i + 1})\|^{2}}$, we have
\begin{align}
    \frac{\langle\text{grad}_{\mathcal{M}}L(\mathbf{\Sigma}_{i + 1}),\mathbf{D}_{i + 1}\rangle}{\|\text{grad}_{\mathcal{M}}L(\mathbf{\Sigma}_{i + 1})\|^{2}} \label{76}
    &= \frac{\langle\text{grad}_{\mathcal{M}}L(\mathbf{\Sigma}_{i + 1}), -\textup{grad}_{\mathcal{M}}L(\mathbf{\Sigma}_{i + 1}) + \beta_{i + 1} \textbf{\textup{P}}_{\mathcal{T}_{\mathbf{\Sigma}_{i + 1}}\mathcal{M}} (\mathbf{D}_{i})\rangle}{\|\text{grad}_{\mathcal{M}}L(\mathbf{\Sigma}_{i + 1})\|^{2}}\\
    &= -1 + \beta_{i + 1}\frac{\langle\text{grad}_{\mathcal{M}}L(\mathbf{\Sigma}_{i+1}),\textbf{\textup{P}}_{\mathcal{T}_{\mathbf{\Sigma}_{i+1}}\mathcal{M}} (\mathbf{D}_{i})\rangle}{\|\text{grad}_{\mathcal{M}}L(\mathbf{\Sigma}_{i+1})\|^{2}} \label{77}\\
    &= -1 + \frac{\langle\text{grad}_{\mathcal{M}}L(\textbf{R}_{\mathbf{\Sigma}_{i}}(\alpha_{i}\mathbf{D}_{i})),\mathbf{D}_{i}\rangle}{\|\text{grad}_{\mathcal{M}}L(\mathbf{\Sigma}_{i})\|^{2}}. \label{78}
\end{align}
Then, using the second Wolfe condition \eqref{42} and \eqref{78}, we obtain
\begin{equation}
    -1 + c_{2}\frac{\langle\text{grad}_{\mathcal{M}}L(\mathbf{\Sigma}_{i}),\mathbf{D}_{i}\rangle}{\|\text{grad}_{\mathcal{M}}L(\mathbf{\Sigma}_{i})\|^{2}}
    \leq \frac{\langle\text{grad}_{\mathcal{M}}L(\mathbf{\Sigma}_{i+1}),\mathbf{D}_{i+1}\rangle}{\|\text{grad}_{\mathcal{M}}L(\mathbf{\Sigma}_{i+1})\|^{2}}
    \leq -1 - c_{2}\frac{\langle\text{grad}_{\mathcal{M}}L(\mathbf{\Sigma}_{i}),\mathbf{D}_{i}\rangle}{\|\text{grad}_{\mathcal{M}}L(\mathbf{\Sigma}_{i})\|^{2}}. \label{79}
\end{equation}
By employing the induction hypothesis \eqref{45} for $i$, we see that \eqref{45} holds for $i+1$, which establishes the proposition 2.

\section*{Appendix D\\Proof of Theorem 2}
Using the update equation of $\mathbf{\Sigma}_{i}$ in \eqref{29}, \eqref{45}, and the second Wolfe condition \eqref{41}, we obtain
\begin{align}
    |\langle \text{grad}_{\mathcal{M}}L(\mathbf{\Sigma}_{i}), \textbf{P}_{\mathcal{T}_{\mathbf{\Sigma}_{i}}\mathcal{M}}(\mathbf{D}_{i - 1})\rangle|
    &=|\langle \text{grad}_{\mathcal{M}}L(\textbf{R}_{\mathbf{\Sigma}_{i}}(\alpha_{i - 1}\mathbf{D}_{i - 1})), \textbf{P}_{\mathcal{T}_{\textbf{R}_{\mathbf{\Sigma}_{i}}(\alpha_{i - 1}\mathbf{D}_{i - 1})}\mathcal{M}}(\mathbf{D}_{i - 1})\rangle| \label{80}\\
    &\leq -c_{2}\langle \text{grad}_{\mathcal{M}}L(\mathbf{\Sigma}_{i - 1}),\mathbf{D}_{i - 1}\rangle\label{81}\\
    &\leq \frac{c_{2}}{1 - c_{2}}\|\text{grad}_{\mathcal{M}}L(\mathbf{\Sigma}_{i - 1})\|^{2}. \label{82}
\end{align}
Then from the update equation of conjugate direction $\mathbf{D}_{i}$ in \eqref{28}, we have
\begin{align}
    \|\mathbf{D}_{i}\|^{2} 
    &= \|-\text{grad}_{\mathcal{M}}L(\mathbf{\Sigma}_{i}) + \beta_{i}\textbf{P}_{\mathcal{T}_{\mathbf{\Sigma}_{i}}\mathcal{M}}(\mathbf{D}_{i - 1})\|^{2} \label{83}\\
    &= \|\text{grad}_{\mathcal{M}}L(\mathbf{\Sigma}_{i})\|^{2} - 2\beta_{i}\langle\text{grad}_{\mathcal{M}}L(\mathbf{\Sigma}_{i}),\textbf{P}_{\mathcal{T}_{\mathbf{\Sigma}_{i}}\mathcal{M}}(\mathbf{D}_{i - 1})\rangle + \beta_{i}^{2}\|\textbf{P}_{\mathcal{T}_{\mathbf{\Sigma}_{i}}\mathcal{M}}(\mathbf{D}_{i - 1})\|^{2} \label{84}\\
    &\leq \|\text{grad}_{\mathcal{M}}L(\mathbf{\Sigma}_{i})\|^{2} + \frac{2c_{2}}{1 - c_{2}}\beta_{i}\|\text{grad}_{\mathcal{M}}L(\mathbf{\Sigma}_{i - 1})\|^{2} + \beta_{i}^{2}\|\mathbf{D}_{i - 1}\|^{2} \label{85}\\
    &= \|\text{grad}_{\mathcal{M}}L(\mathbf{\Sigma}_{i})\|^{2} + \frac{2c_{2}}{1 - c_{2}}\|\text{grad}_{\mathcal{M}}L(\mathbf{\Sigma}_{i})\|^{2} + \frac{\|\text{grad}_{\mathcal{M}}L(\mathbf{\Sigma}_{i})\|^{4}}{\|\text{grad}_{\mathcal{M}}L(\mathbf{\Sigma}_{i - 1})\|^{4}}\|\mathbf{D}_{i - 1}\|^{2} \label{86}\\
    &= \frac{1 + c_{2}}{1 - c_{2}}\|\text{grad}_{\mathcal{M}}L(\mathbf{\Sigma}_{i})\|^{2} + \frac{\|\text{grad}_{\mathcal{M}}L(\mathbf{\Sigma}_{i})\|^{4}}{\|\text{grad}_{\mathcal{M}}L(\mathbf{\Sigma}_{i - 1})\|^{4}}\|\mathbf{D}_{i - 1}\|^{2}. \label{87}
\end{align}
By sequentially applying \eqref{87} until $i = 1$, we obtain
\begin{equation}
    \|\mathbf{D}_{i}\|^{2} \leq \frac{1 + c_{2}}{1 - c_{2}}\|\text{grad}_{\mathcal{M}}L(\mathbf{\Sigma}_{i})\|^{4}\sum_{j=1}^{i}\frac{1}{\|\text{grad}_{\mathcal{M}}L(\mathbf{\Sigma}_{j})\|^{2}}. \label{88}
\end{equation}
If we assume $\liminf_{i\to\infty}\|\text{grad}_{\mathcal{M}}L(\mathbf{\Sigma}_{i})\|\neq 0$,  meaning that $\epsilon (> 0)$ such that $\|\text{grad}_{\mathcal{M}}L(\mathbf{\Sigma}_{i})\|\geq \epsilon$ for all $i\in\mathcal{N}$ exists, then we have
\begin{equation}
    \|\mathbf{D}_{i}\|^{2}\leq \frac{1 + c_{2}}{1 - c_{2}}\|\text{grad}_{\mathcal{M}}L(\mathbf{\Sigma}_{i})\|^{4}\frac{i}{\epsilon^{2}}. \label{89}
\end{equation} 
This implies that 
\begin{equation}
    \sum_{i=1}^{\infty}\frac{\|\text{grad}_{\mathcal{M}}L(\mathbf{\Sigma}_{i})\|^{4}}{\|\mathbf{D}_{i}\|^{2}}\geq \frac{1-c_{2}}{1+c_{2}}\epsilon^{2}\sum_{i=1}^{\infty}\frac{1}{i} = \infty. \label{90}
\end{equation}
However, Theorem 1 and Proposition 2 imply that $\sum_{i=1}^{\infty}\frac{\|\text{grad}_{\mathcal{M}}L(\mathbf{\Sigma}_{i})\|^{4}}{\|\mathbf{D}_{i}\|^{2}} < \infty$, which is a contradiction to \eqref{90}. 
Therefore, we otain the desired result $\liminf_{i\to\infty}\|\text{grad}_{\mathcal{M}}L(\mathbf{\Sigma}_{i})\| = 0$.

\section*{Appendix E\\Proof of Lemma 6}
In the first step of Algorithm 2, we compute $\textup{grad}_{\mathcal{M}}L(\mathbf{\Sigma})$, which is given by
\begin{equation}
\textup{grad}_{\mathcal{M}}L(\mathbf{\Sigma}_{i}) = ( \nabla_{\boldsymbol{\theta}}L(\boldsymbol{\theta}_{i}) - \Re \{\boldsymbol{\theta}_{i}^{*} \odot \nabla_{\boldsymbol{\theta}}L(\boldsymbol{\theta}_{i})\} \odot \boldsymbol{\theta}_{i} ) \oplus ( \nabla_{\mathbf{W}}L(\mathbf{W}_{i}) - \mathbf{W}_{i} \, \textup{ddiag}(\Re \{\mathbf{W}_{i}^{\text{H}} \nabla_{\mathbf{W}}L(\mathbf{W}_{i})\}) ). \label{91}
\end{equation}
Note that the numbers of flops required for computing the Euclidean gradients  $\nabla_{\boldsymbol{\theta}}L(\boldsymbol{\theta}_{i})$ and $\nabla_{\mathbf{W}}L(\mathbf{W}_{i})$ are $K^{2}N^{2}M + K^{2}N^{3}$ and $K^{2}N^{2}M + K^{2}M^{2}$, respectively.
Thus, the complexity $\mathcal{C}_{1}$ of computing $\textup{grad}_{\mathcal{M}}L(\mathbf{\Sigma}_{i})$ is 
\begin{equation}
\mathcal{C}_{1} = \mathcal{O}(K^{2}N^{2}M + K^{2}N^{3} + K^{2}M^{2}). \label{92}
\end{equation}
Using the Riemannian gradient $\textup{grad}_{\mathcal{M}}L(\mathbf{\Sigma}_{i})$, we then compute the RCG coefficient $\beta_{i} = \frac{\|\textup{grad}_{\mathcal{M}}L(\mathbf{\Sigma}_{i})\|^{2}}{\|\textup{grad}_{\mathcal{M}}L(\mathbf{\Sigma}_{i-1})\|^{2}}$.
The complexity $\mathcal{C}_{2}$ of computing $\beta_{i}$ is 
\begin{equation}
\mathcal{C}_{2} = \mathcal{O}(K^{2}N + K^{2}M). \label{93}
\end{equation}
Then the complexity $\mathcal{C}_{3}$ for updating the Riemannian conjugate direction $\mathbf{D}_{i}= -\textup{grad}_{\mathcal{M}}L(\mathbf{\Sigma}_{i}) + \beta_{i} \textbf{\textup{P}}_{\mathcal{T}_{\mathbf{\Sigma}_{i}}\mathcal{M}} (\mathbf{D}_{i-1})$ is 
\begin{equation}
\mathcal{C}_{3} = \mathcal{O}(K^{2}M + KN). \label{95}
\end{equation}

Next, we find out the step size $\alpha_{i}$ via the line search which consists of the computation of the following elements: 1) $\textbf{\textup{R}}_{\mathbf{\Sigma}_{i}}(\alpha_{i} \mathbf{D}_{i})$, 2) $\textup{grad}_{\mathcal{M}}L(\mathbf{\Sigma}_{i})$, 3) $L(\textbf{\textup{R}}_{\mathbf{\Sigma}_{i}}(\alpha_{i} \mathbf{D}_{i}))$, and 4) $L(\mathbf{\Sigma}_{i})$.
To compute the retraction $\textbf{\textup{R}}_{\mathbf{\Sigma}_{i}}(\alpha_{i} \mathbf{D}_{i})$, the required number of flops is $K^{2}M + N$.
To compute $\textup{grad}_{\mathcal{M}}L(\mathbf{\Sigma}_{i})$, the required number of flops is $K^{2}N^{2}M + K^{2}N^{3} +  K^{2}M^{2}$.
Also, the numbers of flops required for computing $L(\textbf{\textup{R}}_{\mathbf{\Sigma}_{i}}(\alpha_{i} \mathbf{D}_{i}))$ and $L(\mathbf{\Sigma}_{i})$ are both $K^{2}N^{2}M$.
Thus, the complexity $\mathcal{C}_{4}$ of the line search for updating $\alpha_{i}$ is 
\begin{equation}
\mathcal{C}_{4} = \mathcal{O}(K^{2}N^{2}M + K^{2}N^{3} +  K^{2}M^{2}). \label{97}
\end{equation}
Finally, the complexity $\mathcal{C}_{5}$ of updating $\mathbf{\Sigma}_{i}$ is 
\begin{equation}
\mathcal{C}_{5} = \mathcal{O}(K^{2}M + N). \label{98}
\end{equation}

In summary, the complexity $\mathcal{C}_{(\boldsymbol{\theta}, \mathbf{W})}$ of Algorithm 2 is 
\begin{equation}
\mathcal{C}_{(\boldsymbol{\theta}, \mathbf{W})} = \mathcal{C}_{1} +  \mathcal{C}_{2} +  \mathcal{C}_{3} +  \mathcal{C}_{4} +  \mathcal{C}_{5} = \mathcal{O}(K^{2}N^{2}M + K^{2}N^{3} +  K^{2}M^{2}). \label{99}
\end{equation}

\ifCLASSOPTIONcaptionsoff
  \newpage
\fi

\bibliographystyle{IEEEtran}
\bibliography{reference}

\end{document}